\title{Geometry of the set of synchronous quantum correlations}
\author{Travis B. Russell}
\affil{Army Cyber Institute, \\ United States Military Academy, \\ West Point, NY}
\affil{\textit{travis.russell@westpoint.edu}}
\date{}
\newtheorem{theorem}{Theorem}[section]
\newtheorem{lemma}[theorem]{Lemma}
\newtheorem{proposition}[theorem]{Proposition}
\newtheorem{corollary}[theorem]{Corollary}
\newtheorem{definition}[theorem]{Definition}
\newtheorem{remark}[theorem]{Remark}
\newtheorem{question}[theorem]{Question}
\newcommand*\tr{\text{tr}}
\newcommand*\Tr{\text{Tr}}
\newcommand*\co{\text{co}}
\newcommand*\Ran{\text{Ran}}
\begin{document}

\maketitle

\begin{abstract}
We provide a complete geometric description of the set of synchronous quantum correlations for the three-experiment two-outcome scenario. We show that these correlations form a closed set. Moreover, every correlation in this set can be realized using projection valued measures on a Hilbert space of dimension no more than 16.
\end{abstract}

\section{Introduction}

One of the fundamental challenges of quantum mechanics is that a quantum state cannot be directly observed. To obtain information about an unknown quantum state, we can perform measurements and record the results. The outcome of any such measurement is statistically determined by the quantum state. Thus by performing many measurements one can begin to understand some aspects of the behavior of the quantum state by examining the resulting probability distribution. When the state is entangled and the measurements are performed on separate subsystems, we obtain a joint probability distribution known in the literature as a quantum correlation.

It is a well-known and fundamental result that the set of quantum correlations which can be achieved with an entangled state is strictly larger than the set of quantum correlations which can be achieved by a separable state \cite{bell_epr} (separable states are elementary tensors of the form $\phi \otimes \psi$ in a tensor product of Hilbert spaces $H_a \otimes H_b$, while entangled states are states that cannot be expressed in this form). This observation has led to many interesting developments in quantum information theory, some of which have potentially intriguing applications in the fields of quantum communication and quantum cryptography (for example, see Bennett-Brassard \cite{bb84}).

While the distinction between correlations generated by separable states and those generated by entangled states is well-established, a complete understanding of the latter set is still lacking, even in the three-experiment two-outcome setting. Much of the research regarding the geometry of these correlations focuses on the winning probabilities of certain non-local games, particularly the $I_{3322}$ game (see \cite{VidickI3322}, \cite{Collins_2004}, \cite{Froissart1981}, and \cite{osti_21448443}). It is not known if the maximum winning probability for this game can be achieved over the set of three-experiment two-outcome quantum correlations. If it could be shown that the maximal value cannot be achieved, then it would follow that the quantum correlation sets are not closed in this setting. Dykema-Paulsen-Prakash \cite{DeltaGame} have shown that a synchronous version of the $I_{3322}$ game does achieve its maximal winning probability over the synchronous part of the three-experiment two-outcome quantum correlations, raising the possibility that this set could be topologically closed.

In this paper we aim to make a small contribution towards these problems by providing an explicit geometric description of the set of synchronous quantum correlations in the case of three experiments with two outcomes each. We determine that this set is topologically closed - a conclusion which is perhaps surprising in light of several recent proofs of the non-closure of the quantum correlation sets in general (see Slofstra \cite{Slofstra1} and Dykema-Paulsen-Prakash \cite{DPP1}). Moreover, we demonstrate that every quantum correlation in this setting can be achieved with projections on a Hilbert space of dimension no more than 16. All results are obtained using only tools from linear algebra and Euclidean geometry, though we appeal to some well-known results about quantum correlation sets along the way. Our approach is largely inspired by the geometric approach used in Dykema-Paulsen-Prakash \cite{DPP1}.

Another motivation for explicitly computing quantum correlation sets comes from operator algebras. The combined results of Junge et. al. \cite{JMPPSW2011}, Fritz \cite{FritzKirchberg}, and Ozawa \cite{MR3067294} showed that Connes' embedding conjecture, a long-standing problem in operator algebras, is equivalent to the conjecture that the closure of the set of quantum correlations is equal to the set of so-called quantum commuting correlations for all possible numbers of experiments and outcomes. It was recently shown that the same is true if one considers only synchronous correlation sets \cite{MR3432742}, \cite{MR3776034}. It follows that one could, in principle, settle Connes' conjecture by providing a complete description of both the quantum correlation sets and the quantum commuting correlation sets for all possible numbers of experiments and outcomes. If Connes' conjecture is false, then one need only compute some quantum correlation set and demonstrate a quantum commuting correlation which does not lie in the closure of this set. While we draw no conclusion about Connes' conjecture in this paper, the computation of the synchronous quantum correlations for the three-experiment two-outcome scenario provides new data that could be examined to find or rule out counterexamples to Connes' conjecture in this setting.

We should mention a few papers from the literature related to the question of computing the geometry of the quantum correlation sets. The original proof that the set of quantum correlations is not closed is due to Slofstra \cite{Slofstra1}. Another proof \cite{DPP1} shows that the set of synchronous quantum correlations is not closed when the number of experiments exceeds five and the number of outcomes is at least two. In addition, the authors provide an explicit description of a continuous region in Euclidean space where the quantum correlations constitute a countable dense subset (see Remark 4.3 of Dykema-Paulsen-Prakash \cite{DPP1}). Much of the intuition behind our approach is inspired by techniques in their paper. We should also mention Goh et. al. \cite{PhysRevA.97.022104} which provides a fairly detailed description of the quantum correlations in the two-experiment two-outcome case. Finally, a preprint of Thinh, Varvitsiotis and Cai  \cite{ThinhStructure} provides an explicit description of a related set, the quantum correlators, for a family of experiment-outcome scenarios. We note that the quantum correlators, as defined by these authors, differs from the correlation sets we are concerned with, as explained in section II of their paper \cite{ThinhStructure}.

Our paper is organized as follows. In Section 2, we summarize relevant concepts and results from the literature on quantum correlation sets. We also define the basic tools we will be using and apply them to the two-experiment two-outcome scenario as an example. In Section 3, we derive a description of the synchronous quantum correlation sets for three experiments and two outcomes. That section is divided into subsections focusing on different types of quantum correlations, each subsection building on the results of the previous subsection. Finally, in Section 4, we provide a few concluding remarks concerning Connes' embedding conjecture and non-local games.

We conclude this introduction with a summary of the mathematical notation used. We let $\mathbb{C}^d$, $\mathbb{R}^d$, and $\mathbb{M}_d$ denote the $d$-dimensional complex Hilbert space, the $d$-dimensional real Hilbert space, and the set of all $d \times d$ complex matrices, respectively. Throughout, we will identify operators on the Hilbert space $\mathbb{C}^d$ with $d \times d$ matrices in the obvious way, working over the canonical basis of $\mathbb{C}^d$ unless another basis is specified. Given matrices $A$ and $B$, we let $A \oplus B$ denote the direct sum, i.e. \[ A \oplus B := \begin{bmatrix} A & 0 \\ 0 & B \end{bmatrix}, \] and we let $A \otimes B$ denote the Kronecker product of $A$ and $B$. We also let $\vec{0}_d$ denote the vector of zeros in $\mathbb{C}^d$ or $\mathbb{R}^d$, and we let $0_d$ denote the $d \times d$ zero matrix and $I_d$ denote the $d \times d$ identity matrix. Given a $n \times m$ matrix $A$, we let $A^\dagger$ denote the $m \times n$ conjugate transpose of $A$. A square matrix $P$ is called a projection if $P=P^\dagger$ and $P^2 = P$. By a projection valued measure, we mean a set $\{P_i\}_{i=1}^m$ of projections with the property that $\sum_{i=1}^m P_i = I_d$. We use $\Tr(\cdot)$ and $\tr_d(\cdot)$ for the ordinary matrix trace and the normalized matrix trace (i.e., $\tr_d(\cdot) = \tfrac{1}{d}\Tr(\cdot)$), respectively. For sets $S,T \subseteq \mathbb{R}^d$, we let $\co\{S,T\}$ denote the (not necessarily closed) convex hull of $S$ and $T$ in $\mathbb{R}^d$. Finally, given an integer $i$, we let $\delta_{k,i}$ denote the Kronecker delta function (i.e. $\delta_{k,i}=1$ if $i = k$, and $\delta_{k,i}=0$ otherwise).

\section{Preliminaries}
Suppose two parties, Alice and Bob, are performing probabilistic experiments. For our purposes, we will assume each of Alice and Bob can perform one of $n$ experiments and that each experiment has $m$ possible outcomes. We will let the quantity $p(i,j|x,y)$ represent the probability that Alice obtains outcome $i$ and Bob obtains outcome $j$ given that Alice performed experiment $x$ and Bob performed experiment $y$. We call the tensor $(p(i,j|x,y))_{i,j,x,y}$ a \textbf{correlation} if it satisfies \[ \sum_{i,j} p(i,j|x,y) = 1\] for every choice of $x$ and $y$. Let us further assume that Alice and Bob are spatially separated and unable to pass signals to each other. This is modeled mathematically by adding the restriction that the marginal densities \[ p_a(i|x) := \sum_j p(i,j|x,y), \quad p_b(j|y) := \sum_i p(i,j|x,y) \] are well defined - that is, the matrix $p_a$ is independent of the choice of $y$ and $p_b$ is independent of the choice of $x$. Such a correlation is called \textbf{non-signaling} and the set of all non-signaling correlations is denoted by $C_{ns}(n,m)$.

We may further restrict Alice and Bob's capabilities by assuming that their correlations arise from a combination of deterministic strategy and shared randomness. Specifically, let $\{\lambda(1), \dots, \lambda(k)\}$ be a discrete probability distribution, and assume that for each $x \leq n$ and $t \leq k$, Alice possesses a deterministic distribution $p_a(i|x,t)$ (i.e. $p_a(i|x,t) \in \{0,1\}$), and similarly Bob possesses deterministic distributions $p_b(j|y,t)$. Then the formula \[ p(i,j|x,y) := \sum_t \lambda(t) p_a(i|x,t) p_b(j|y,t) \] defines a non-signaling correlation. We call any correlation of this form a \textbf{local} correlation, and we denote the set of all local correlations by $C_{loc}(n,m)$.

Our primary interest is in correlations which lie between the local and non-signaling correlations, namely the \textbf{quantum} correlations. Assume that Alice has access to a finite-dimensional Hilbert space $H_a$ and Bob has access to a finite-dimensional Hilbert space $H_b$. Let $\phi \in H_a \otimes H_b$ be a unit vector. Let us further assume that Alice and Bob share the possibly entangled state $\phi$, and are able to perform measurements on their respective Hilbert spaces. Specifically, for each $x$ we assume Alice possesses a projection valued measure $\{E_{x,i}\}_{i=1}^m$ and likewise Bob possesses projection valued measures $\{F_{y,j}\}_{j=1}^m$ for each $y \leq n$. Then the correlation defined by \[ p(i,j|x,y) = \langle \phi, E_{x,i} \otimes F_{y,j} \phi \rangle \] is a non-signaling correlation. Any correlation defined in this way is called a quantum correlation, and we let $C_q(n,m)$ denote the set of all quantum correlations.

In general, the correlation sets described above are convex and satisfy \[ C_{loc}(n,m) \subseteq C_q(n,m) \subseteq C_{ns}(n,m) \subseteq \mathbb{R}^{n^2 m^2}. \] All inclusions in the above sequence are known to be strict. It is of historical importance that $C_q(n,m) \neq C_{loc}(n,m)$ in general. In fact, $C_q(2,2) \neq C_{loc}(2,2)$ as a consequence of the CHSH inequality \cite{CHSH}. The local correlations describe the behavior of particles in a universe governed by the theory of local hidden variables espoused by Einstein, Podolski and Rosen \cite{EPR}, whereas the set of quantum correlations describe the behavior of particles in a universe governed by Von Neumann's formalism of quantum mechanics. John Bell first showed that these sets are distinct \cite{bell_epr}, and the experimental verification of this fact has been hailed as evidence that particles obey the laws of quantum mechanics \cite{BellTheoremExperiment}.


In this paper, we are primarily interested in the set of \textbf{synchronous} correlations. A correlation is synchronous if, for all $x$, $p(i,j|x,x) = 0$ whenever $i \neq j$. For each $r \in \{loc, q, ns\}$ we let $C_r^s(n,m)$ denote the set of synchronous correlations. It is clear that $C_r^s(n,m) \subseteq C_r(n,m)$, since $C_r^s(n,m)$ is obtained by intersecting $C_r(n,m)$ with the hyperplane in $\mathbb{R}^{n^2m^2}$ defined by the synchronous relations $p(i,j|x,x)=0$ for $i \neq j$.

The following result will be employed freely throughout. Recall that a \textbf{$C^*$-algebra} is a closed self-adjoint algebra of bounded operators on a Hilbert space. A \textbf{tracial state} $\tau$ on a $C^*$-algebra $\mathfrak{A}$ is a linear functional $\tau: \mathfrak{A} \rightarrow \mathbb{C}$ mapping positive operators to positive real numbers and satisfying $\tau(1)=1$ and $\tau(xy)=\tau(yx)$ for all $x,y \in \mathfrak{A}$.

\begin{theorem}[Theorem 5.5 / Corollary 5.6, Paulsen e.t al. \cite{MR3460238}] \label{PaulsenWinter} A correlation $p \in C_q(n,m)$ is synchronous if and only if there exists a finite-dimensional $C^*$-algebra $\frak{A}$ and projection valued measures $\{E_{x,i}\}_{i=1}^m \subset \frak{A}$ and a tracial state $\tau$ on $\frak{A}$ such that \[ p(i,j|x,y) = \tau(E_{x,i} E_{y,j}). \] \end{theorem}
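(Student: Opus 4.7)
The plan is to prove the biconditional by handling the two directions separately: $(\Leftarrow)$ is an algebraic construction, while $(\Rightarrow)$ requires extracting a tracial structure from the synchrony condition on $p$.

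For the $(\Leftarrow)$ direction, suppose $\tau$ is a tracial state on a finite-dimensional $C^*$-algebra $\mathfrak{A}$ with PVMs $\{E_{x,i}\}$ satisfying $p(i,j|x,y)=\tau(E_{x,i}E_{y,j})$. Synchrony is immediate: the PVM relation $\sum_j E_{x,j}=I$ together with $E_{x,i}=E_{x,i}\sum_j E_{x,j}$ forces the projections in each measurement to be pairwise orthogonal, so $\tau(E_{x,i}E_{x,j})=0$ for $i\neq j$. To realize $p$ as a quantum correlation, I would invoke the structure theorem $\mathfrak{A}\cong\bigoplus_k M_{n_k}(\mathbb{C})$, which forces $\tau=\sum_k\lambda_k \tr_{n_k}$ because every tracial state on a full matrix algebra is its normalized trace. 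Take $H_A=H_B=\bigoplus_k\mathbb{C}^{n_k}$ and the vector $\phi=\bigoplus_k\sqrt{\lambda_k}\phi_k$, where $\phi_k$ is the canonical maximally entangled vector in $\mathbb{C}^{n_k}\otimes\mathbb{C}^{n_k}$, and set $F_{y,j}:=E_{y,j}^{T}$ (which remains a PVM because each $E_{y,j}$ is Hermitian). The standard block-wise identity $\tr_{n_k}(ab)=\langle\phi_k,(a\otimes b^{T})\phi_k\rangle$ then gives the desired correlation formula.

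For the $(\Rightarrow)$ direction, suppose $p(i,j|x,y)=\langle\phi,(E_{x,i}\otimes F_{y,j})\phi\rangle$ is a synchronous quantum correlation. The cornerstone is the identity
\[ (E_{x,i}\otimes I)\phi = (I\otimes F_{x,i})\phi \qquad \text{for all } x,i. \]
I would establish this by expanding the squared norm of the difference. Synchrony forces $p_a(i|x)=p_b(i|x)=p(i,i|x,x)$, since summing $p(i,j|x,x)=0$ ($i\neq j$) over the non-diagonal indices collapses both marginals to the diagonal term, and the expansion then collapses to zero. Iterating the identity on words in the generators produces a $*$-anti-homomorphism $\alpha$ from $\mathfrak{A}:=C^{*}(\{E_{x,i}\})\subseteq B(H_A)$ into $B(H_B)$ satisfying $(a\otimes I)\phi=(I\otimes\alpha(a))\phi$ for every $a\in\mathfrak{A}$.

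Now define $\tau(a):=\langle\phi,(a\otimes I)\phi\rangle$, a state on the finite-dimensional algebra $\mathfrak{A}$. The trace property follows from
\[ \tau(ab)=\langle(a^{*}\otimes I)\phi,(b\otimes I)\phi\rangle=\langle(I\otimes\alpha(a)^{*})\phi,(I\otimes\alpha(b))\phi\rangle=\langle\phi,(I\otimes\alpha(ba))\phi\rangle=\tau(ba), \]
where the last equality reapplies the correspondence $(a\otimes I)\phi=(I\otimes\alpha(a))\phi$. For the correlation formula, rewrite $(E_{x,i}E_{y,j}\otimes I)\phi=(E_{x,i}\otimes I)(I\otimes F_{y,j})\phi=(E_{x,i}\otimes F_{y,j})\phi$ to conclude $\tau(E_{x,i}E_{y,j})=p(i,j|x,y)$. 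The main obstacle in the whole argument is establishing the cornerstone identity using only synchrony; once it is available the anti-homomorphism $\alpha$ and the trace property drop out essentially for free, and finite-dimensionality of $\mathfrak{A}$ is automatic from its embedding in $B(H_A)$.
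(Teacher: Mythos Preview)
The paper does not prove this statement: Theorem \ref{PaulsenWinter} is quoted from Paulsen et al.\ \cite{MR3460238} and used as a black box throughout, so there is no ``paper's own proof'' to compare against. Your argument is essentially the standard proof from that reference. The $(\Leftarrow)$ direction is fine, and in the $(\Rightarrow)$ direction the key step --- deriving $(E_{x,i}\otimes I)\phi=(I\otimes F_{x,i})\phi$ by expanding the squared norm and using that synchrony collapses both marginals to the diagonal --- is correct.

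One point of looseness: you assert that iterating the cornerstone identity ``produces a $*$-anti-homomorphism $\alpha$ from $\mathfrak{A}$ into $B(H_B)$.'' What iteration actually gives you is that for every \emph{word} $w=E_{1}\cdots E_{k}$ in the generators one has $(w\otimes I)\phi=(I\otimes F_{k}\cdots F_{1})\phi$; promoting this to a well-defined map on the quotient algebra $\mathfrak{A}$ (rather than on the free $*$-algebra) requires an extra argument about supports of $\phi$. Fortunately you do not need $\alpha$ to be well-defined on $\mathfrak{A}$: the trace computation can be carried out purely at the level of words and extended by linearity. Concretely, for words $a,b$ one has $\tau(ab)=\langle\phi,(I\otimes\widetilde{ab})\phi\rangle=\langle\phi,(I\otimes\tilde{b}\tilde{a})\phi\rangle$, and the same manipulation with $a^{*}$ (noting $(\widetilde{a^{*}})^{*}=\tilde{a}$) yields $\tau(ab)=\langle\phi,(I\otimes\tilde{a}\tilde{b})\phi\rangle=\tau(ba)$. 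So the conclusion stands, but the invocation of $\alpha$ as a genuine algebra map should either be justified or bypassed.
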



It is a consequence of the Artin-Wedderburn theorem that every finite-dimensional $C^*$-algebra is isomorphic to a direct sum of matrix algebras (for example, see Theorem III.1.1 of Davidson's textbook\cite{MR1402012}). Moreover, each matrix algebra possesses a unique tracial state $\tr_d: \mathbb{M}_d \rightarrow \mathbb{C}$ defined by $\tr_d(x) = \frac{1}{d} \Tr(x)$, where $\Tr(\cdot)$ is the usual matrix trace. Consequently, whenever $\tau$ is a trace on $\frak{A} \cong \mathbb{M}_{d_1} \oplus \dots \oplus \mathbb{M}_{d_k}$, we may assume that $\tau = \sum_{i=1}^k \lambda_i \tr_{d_i}$ where $\sum \lambda_i = 1$ - i.e. $\tau$ is a convex combination of normalized matrix traces. Furthermore, whenever $\{E_{x,i}\}_{i=1}^m \subset \mathbb{M}_d$ are projection valued measures, we have \[ \tr_d (E_{x,i} E_{y,j}) =  \langle \phi_d, E_{x,i} \otimes E_{y,j}^T \phi_d \rangle \] where $\phi_d = \sum_{k=1}^d \tfrac{1}{\sqrt{d}} e_k \otimes e_k$ is a maximally entangled state in $\mathbb{C}^d \otimes \mathbb{C}^d$ (where $\{e_k\}$ is the canonical basis for $\mathbb{C}^d$). Let $C_{max}^s(n,m)$ denote the set of quantum correlations defined by $p(i,j|x,y) = \tr_d(E_{x,i} E_{y,j})$ for $\{E_{x,i}\} \subset \mathbb{M}_d$ and for some $d$ (or equivalently $p(i,j|x,y) = \langle \phi_d, E_{x,i} \otimes E^T_{y,j} \phi_d \rangle$ for the maximally entangled state $\phi_d$). Then we have the following (see Theorem 9 of Lackey-Rodriguez\cite{Lackey}, Corollary 5.5 of Lupini et. al.\cite{PerfectStrategies}, and Theorem 3.7 of Alhajjar-Russell\cite{MaxEntangle}).

\begin{theorem} \label{densityThm} Let $p \in C_q^s(n,m)$. Then there exist $t_1,\dots,t_k$ with each $t_i \geq 0$ and $\sum t_i = 1$ and  correlations $p_1,\dots,p_k \in C_{max}^s(n,m)$ such that $p = \sum_{i=1}^k t_i p_i$. Hence, \[ C_q^s(n,m) = \co\{C_{max}^s(n,m)\}. \] Moreover, we have \[ \overline{C_q^s(n,m)} = \overline{C_{max}^s(n,m)}. \] \end{theorem}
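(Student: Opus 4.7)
The plan is to combine Theorem \ref{PaulsenWinter} with the Artin--Wedderburn decomposition of finite-dimensional $C^*$-algebras to exhibit each synchronous $p$ as an explicit convex combination, and then to use a multiplicity trick to approximate convex combinations by single normalized traces, yielding the closure statement.

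\medskip
\noindent\textbf{Step 1: Decompose the representation.} Given $p \in C_q^s(n,m)$, apply Theorem \ref{PaulsenWinter} to obtain a finite-dimensional $C^*$-algebra $\mathfrak{A}$, a tracial state $\tau$ on $\mathfrak{A}$, and PVMs $\{E_{x,i}\}_{i=1}^m \subset \mathfrak{A}$ with $p(i,j|x,y) = \tau(E_{x,i}E_{y,j})$. By Artin--Wedderburn, $\mathfrak{A} \cong \mathbb{M}_{d_1} \oplus \cdots \oplus \mathbb{M}_{d_k}$, so each $E_{x,i}$ decomposes as $E_{x,i} = \bigoplus_{\ell=1}^k E_{x,i}^{(\ell)}$ with $\{E_{x,i}^{(\ell)}\}_i$ a PVM in $\mathbb{M}_{d_\ell}$ for each $\ell$. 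As noted in the excerpt, the tracial state splits as $\tau = \sum_\ell t_\ell \, \tr_{d_\ell}$ for some $t_\ell \geq 0$ with $\sum_\ell t_\ell = 1$.

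\medskip
\noindent\textbf{Step 2: Read off the convex combination.} Because the summands of the direct sum are mutually orthogonal, we have $E_{x,i} E_{y,j} = \bigoplus_\ell E_{x,i}^{(\ell)} E_{y,j}^{(\ell)}$, so
\[
p(i,j|x,y) = \tau(E_{x,i}E_{y,j}) = \sum_{\ell=1}^k t_\ell \, \tr_{d_\ell}\!\bigl(E_{x,i}^{(\ell)} E_{y,j}^{(\ell)}\bigr).
\]
Setting $p_\ell(i,j|x,y) := \tr_{d_\ell}(E_{x,i}^{(\ell)} E_{y,j}^{(\ell)})$, each $p_\ell$ lies in $C_{max}^s(n,m)$ by definition, and $p = \sum_\ell t_\ell p_\ell$. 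This proves $C_q^s(n,m) \subseteq \co\{C_{max}^s(n,m)\}$. The reverse inclusion is immediate since $C_{max}^s(n,m) \subseteq C_q^s(n,m)$ and $C_q^s(n,m)$ is convex.

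\medskip
\noindent\textbf{Step 3: Closure via multiplicity trick.} The nontrivial direction is $\overline{C_q^s(n,m)} \subseteq \overline{C_{max}^s(n,m)}$. Given $p = \sum_\ell t_\ell p_\ell$ as above with $p_\ell$ realized on $\mathbb{M}_{d_\ell}$, I approximate the weights $t_\ell$ by rationals of the form $n_\ell d_\ell / d$, where $n_\ell \in \mathbb{Z}_{\geq 0}$ and $d = \sum_\ell n_\ell d_\ell$. On $\mathbb{M}_d$, form the PVM $\tilde{E}_{x,i} := \bigoplus_\ell \bigl(E_{x,i}^{(\ell)}\bigr)^{\oplus n_\ell}$. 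A direct computation gives
\[
\tr_d\!\bigl(\tilde{E}_{x,i}\tilde{E}_{y,j}\bigr) = \frac{1}{d}\sum_\ell n_\ell \,\Tr\!\bigl(E_{x,i}^{(\ell)} E_{y,j}^{(\ell)}\bigr) = \sum_\ell \frac{n_\ell d_\ell}{d}\, p_\ell(i,j|x,y),
\]
which is a single element of $C_{max}^s(n,m)$. Choosing $n_\ell$ so that $n_\ell d_\ell / d \to t_\ell$ (possible because rationals of this shape are dense in the probability simplex), the corresponding elements of $C_{max}^s(n,m)$ converge to $p$. Hence $C_q^s(n,m) \subseteq \overline{C_{max}^s(n,m)}$, and taking closures completes the proof.

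\medskip
\noindent The main obstacle is only a bookkeeping one in Step 3: verifying that rationals of the constrained form $n_\ell d_\ell / d$ remain dense in the simplex so that the approximation is valid, which is straightforward once one allows the multiplicities $n_\ell$ to grow independently.
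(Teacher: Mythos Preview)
Your argument is correct. Note, however, that the paper does not supply its own proof of this statement: it is stated with references to Lackey--Rodriguez, Lupini et al., and Alhajjar--Russell, and no proof follows. That said, the paragraph immediately preceding the theorem in the paper lays out exactly the ingredients you use in Steps~1 and~2 (Artin--Wedderburn for finite-dimensional $C^*$-algebras, the decomposition $\tau = \sum_\ell \lambda_\ell \tr_{d_\ell}$, and the identification $\tr_d(E_{x,i}E_{y,j}) = \langle \phi_d, E_{x,i}\otimes E_{y,j}^T \phi_d\rangle$), so your reconstruction of the convex-hull equality is precisely the intended one. Your Step~3 multiplicity argument is the standard way to obtain the closure statement and is fine; the density claim for weights of the form $n_\ell d_\ell/d$ follows by taking $n_\ell$ close to $N t_\ell/d_\ell$ for large $N$, as you indicate.
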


We will further restrict our attention to subsets of the quantum correlations with fixed marginal density matrices $p_a(i|x), p_b(j|y)$. To specify such a subset, we need only specify the values of $p_a(i|x)$ or $p_b(j|y)$. Indeed, whenever $p \in C_q^s(n,m)$, we have \[ p_a(i|x) = \sum_j p(i,j|x,x) = \sum_i p(i,j|x,x) = p_b(i|x). \] Hence, we may dispense with the subscripts and consider only the marginal matrix $p(i|x) \in \mathbb{R}^{nm}$.

Let $\vec{r} \in \mathbb{R}^{nm}$ with entries indexed as $r_{i,x}$, $i \leq m$, $x \leq n$. Then we define the $\vec{r}$-slice of $C_r^s(n,m)$ ($r \in \{loc, q, max, ns\}$) by \[ S_{\vec{r}}[ C_r^s(n,m) ] := \{ p \in C_r^s(n,m) : p(i|x) = r_{i,x} \}. \] Clearly $S_{\vec{r}}[ C_r^s(n,m) ]$ is non-empty if and only if $\sum_{i=1}^m r_{i,x} = 1$ for each $x$, and $r_{i,x} \geq 0$ for each $i$ and $x$. Furthermore, observe that $S_{\vec{r}}[ C_{max}^s(n,m) ]$ is non-empty if and only if $\vec{r} \in \mathbb{Q}^{nm}$ in addition to $\sum_{i=1}^m r_{i,x} = 1$ for each $x$ (where $\mathbb{Q}$ denotes the rational numbers). This is because for $p \in C_{max}^s(n,m)$, \[ p(i|x) = \tr_d (E_{x,i}) \in \mathbb{Q} \] since the trace of a projection is its rank. It is evident from these definitions that for each $t \in \{ loc, q, ns \}$ the $\vec{r}$-slices of $C_t^s(n,m)$ are all convex. It is not clear whether or not the rational $\vec{r}$-slices of $C_{max}^s(n,m)$ are convex, though they are closed under rational convex combinations.

We will be especially interested in determining the structure of the set $S_{\vec{r}}[ C_{max}^s(n,m)]$. Our interest is due to the following observations. By Theorem \ref{densityThm}, $C_q^s(n,m) = \co\{C_{max}^s(n,m)\}$. Furthermore, \[ C_{max}^s(n,m) = \bigcup_{\vec{r}} S_{\vec{r}}[ C_{max}^s(n,m) ]. \] In other words, $C_{max}^s(n,m)$ is a countable disjoint union of its slices. It follows that $C_q^s(n,m) = \co\{ \cup_{\vec{r}} S_{\vec{r}}[ C_{max}^s(n,m) ]\}$. So by describing the geometry of each slice $S_{\vec{r}}[C_{max}^s(n,m)]$ we can determine the geometry of $C_q^s(n,m)$.

Henceforth we will focus on the case when $m = 2$, where the possible outcomes are $\{0,1\}$. In this case there are several simplifying assumptions that can be made. First, the marginal density matrix $p(i|x)$ can be reduced to the vector $\vec{r}=(p(0|1),p(0|2),\dots,p(0|n))$. This is because $p(1|x) = 1 - p(0|x)$, so we only need to know the value of $p(0|x)$ for each $x$ to determine the marginal density matrix. Furthermore, for each fixed $x,y \leq n$ with $x < y$, the matrix $p(i,j|x,y)$ has the form \[ \begin{pmatrix} w_{x,y} & r_x - w_{x,y} \\ r_y - w_{x,y} & w_{x,y} + (1 - r_x - r_y) \end{pmatrix} \] where $w_{x,y} = p(0,0|x,y)$, $r_x = p(0|x)$ and $r_y = p(0|y)$ (this is a consequence of the non-signaling conditions). Hence, the entire matrix is determined by the values $r_x, r_y$ and $w_{x,y}$. For $y < x$, we have $p(i,j|x,y) = p(j,i|y,x)$, using Theorem \ref{PaulsenWinter} and the observation that \begin{eqnarray} p(i,j|a,b) & = & \tau(E_{a,i}E_{b,j}) \nonumber \\ & = & \tau(E_{b,j}E_{a,i}) \nonumber \\ & = & p(j,i|b,a) \nonumber \end{eqnarray} for any tracial state $\tau$. It follows that $C_q^s(n,2)$ is entirely determined by the values $r_x = p(0,x)$ and $w_{x,y} = p(0,0|x,y)$ for $x < y$. Thus the dimension of $C_q^s(n,2)$ is at most $\tfrac{n(n+1)}{2}$, and the dimension of each slice $S_{\vec{r}}[C_q^s(n,2)]$ is at most $\tfrac{n(n-1)}{2}$. Consequently, to understand the geometry of $S_{\vec{r}}[C_q^s(n,2)]$ it suffices to consider the projection $\hat{S}_{\vec{r}}[C_q^s(n,2)]$ whose components are given by the upper triangular entries of the matrix $(p(0,0|x,y))_{x,y \leq n}$, a subset of $\mathbb{R}^{\tfrac{n(n-1)}{2}}$.

To determine the geometry of the slice $\hat{S}_{\vec{r}}[C_q^s(n,2)]$, we will first consider the geometry of the slice $\hat{S}_{\vec{r}}[C_{max}^s(n,2)]$. To compute this, we will consider the subset of $\hat{S}_{\vec{r}}[C_q^s(n,2)]$ generated by projections on Hilbert spaces of fixed dimension $d$ which we denote by $S_d(N_1, N_2, \dots, N_n)$, where $N_i$ is the rank of the $i$-th projection. We summarize these definitions in the following.

\begin{definition} Let $n \geq 2$. Then for each $\vec{r} \in [0,1]^n$ we define \[ \hat{S}_{\vec{r}}[C_q^s(n,2)] = \{ (p(0,0|x,y))_{x < y \leq n} : \quad p(0|i)=r_i, \quad p \in C_q^s(n,2) \} \] where $r_i$ are the entries of $\vec{r}$. Moreover, for integers $N_1, N_2, \dots, N_n \leq d$, we define \[ S_d(N_1, N_2, \dots, N_n) = \{ p \in \hat{S}_{\vec{r}}[C_q^s(n,2)] : \quad p(x,y) = \tr_d(P_x P_y), \quad \tr_d(P_i) = \tfrac{N_i}{d}, x < y \leq n \} \] where $P_1, P_2, \dots, P_n$ are $d \times d$ projections. \end{definition}

As a short illustration, we use the above ideas to quickly compute the set $C_q^s(2,2)$. Indeed, the geometry of $C_q(2,2)$ is well understood\cite{PhysRevA.97.022104}, although we are not aware of an explicit formulation in the synchronous case. To perform the computation, we need one lemma which we will use later in the paper as well. The result is probably well-known, but we provide a proof for completeness.

\begin{lemma} \label{traceProjections} Let $n_1$, $n_2$ and $d$ be integers with $n_1,n_2 \leq d$. Then \[ S_d(n_1,n_2) = [ \max(0, \tfrac{n_1+n_2 - d}{d}), \min(\tfrac{n_1}{d},\tfrac{n_2}{d})]. \] \end{lemma}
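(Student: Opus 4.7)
The statement has two halves: the containment $S_d(n_1,n_2) \subseteq [\max(0,\tfrac{n_1+n_2-d}{d}),\min(\tfrac{n_1}{d},\tfrac{n_2}{d})]$, which is a bound on the trace of a product of two projections of given rank, and the reverse containment, which requires constructing projection pairs realizing every value in the interval.

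For the bounds, let $P_1,P_2$ be $d\times d$ projections of rank $n_1,n_2$. The upper bound follows from the identity $\Tr(P_1P_2)=\Tr(P_2P_1P_2)$ together with the operator inequality $P_2P_1P_2 \leq P_2$ (since $P_1 \leq I$ and conjugation by $P_2$ is positive), which gives $\Tr(P_1P_2)\leq \Tr(P_2) = n_2$; the symmetric argument gives $\Tr(P_1P_2)\leq n_1$. For the lower bound, $\Tr(P_1P_2)=\Tr(P_2P_1P_2)\geq 0$ because $P_2P_1P_2$ is positive. The remaining bound comes from expanding $P_1P_2 = (I-Q_1)(I-Q_2)$ with $Q_i = I-P_i$, which yields
\[
\Tr(P_1P_2) = d - (d-n_1) - (d-n_2) + \Tr(Q_1Q_2) = n_1+n_2-d+\Tr(Q_1Q_2),
\]
and $\Tr(Q_1Q_2)\geq 0$ by the same positivity argument applied to the complementary projections. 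Dividing by $d$ gives the claimed interval.

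For realizability, I would build $P_1,P_2$ as block-diagonal matrices with five types of blocks: $k$ scalar blocks where both projections equal $1$, $a=n_1-k-1$ scalar blocks where only $P_1$ equals $1$, $b=n_2-k-1$ scalar blocks where only $P_2$ equals $1$, one $2\times 2$ block in which $P_1=\begin{pmatrix}1&0\\0&0\end{pmatrix}$ and $P_2$ is the rank-one projection onto the span of $(\cos\theta,\sin\theta)$, and $c=d-n_1-n_2+k$ scalar zero blocks. The integer $k$ is chosen so that $a,b,c\geq 0$, i.e.\ $\max(0,n_1+n_2-d)\leq k \leq \min(n_1,n_2)-1$, and a direct computation gives $\Tr(P_1P_2)=k+\cos^2\theta$. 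As $\theta$ ranges over $[0,\pi/2]$ this sweeps out the full interval $[k/d,(k+1)/d]$, and taking the union over admissible $k$ covers $[\max(0,\tfrac{n_1+n_2-d}{d}),\min(\tfrac{n_1}{d},\tfrac{n_2}{d})]$.

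The only subtlety, and the main place one must be careful, is the degenerate cases when $n_1 \in \{0,d\}$ or $n_2 \in \{0,d\}$. There the $2\times 2$ block cannot be accommodated (the admissible range for $k$ collapses), but in each such case the upper and lower bounds of the interval coincide at the single forced value $\min(n_1,n_2)/d$, which is realized by the unique choice $P_i=I_d$ or $P_i=0_d$. Once those cases are dispatched separately, the block construction handles everything else, completing both directions.
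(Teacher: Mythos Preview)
Your proof is correct and reaches the same conclusion as the paper, but the two arguments differ in both halves in ways worth noting. For the lower bound $\Tr(P_1P_2)\geq n_1+n_2-d$, the paper argues geometrically: it observes that the ranges of $P_1$ and $P_2$ must intersect in a subspace of dimension at least $n_1+n_2-d$, takes the projection $R$ onto that intersection, and bounds $\Tr(P_1P_2)\geq\Tr(R)$. Your route via the complementary projections $Q_i=I-P_i$ and the identity $\Tr(P_1P_2)=n_1+n_2-d+\Tr(Q_1Q_2)$ is a purely algebraic alternative that avoids the dimension count and reuses the positivity argument you already proved; it is slightly cleaner. For realizability, the paper fixes $P_1$ and two extreme choices of $Q$, then appeals to connectedness of the unitary group to fill in the interval by conjugating $Q$ --- a short continuity argument. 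Your block-diagonal construction with a single $2\times 2$ rotation block is fully explicit and gives a concrete parametrization $\tr_d(P_1P_2)=(k+\cos^2\theta)/d$, at the cost of having to chase the degenerate endpoint cases $n_i\in\{0,d\}$ separately. Both approaches are equally valid; yours is more constructive, the paper's is terser.
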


\begin{proof} It suffices to consider the case when $n_1 \leq n_2$. Let $P$ and $Q$ be projections on $\mathbb{C}^d$ of rank $n_1$ and $n_2$, respectively. Since $Tr(\cdot)$ is a positive linear map, we see that the functional $\phi_P(x) = Tr(Px)$ is also a positive linear map since $x \geq 0$ implies that $PxP \geq 0$ and $Tr(Px) = Tr(PxP)$. It follows that $\Tr(PQ) \geq 0$. Since $Q \leq I_d$, we also have $\Tr(PQ) \leq \Tr(P I_d) = n_1$. In the case when $n_1 + n_2 > d$, observe that if the range of $P$ is $V$ and the range of $Q$ is $W$, then \begin{eqnarray} \dim(V \cap W) & = & \dim(V) + \dim(W) - \dim(V+W) \nonumber \\ & \geq & n_1 + n_2 - d. \nonumber \end{eqnarray} If $R$ is the projection onto $V \cap W$, then $R \leq P$ and $R \leq Q$, implying that $Tr(PQ) \geq Tr(RQ) \geq Tr(R^2) = Tr(R) \geq n_1 + n_2 - d$.

Now let $P_1 = I_{n_1} \oplus 0_{d-n_1}$ and $Q_1 = I_{n_2} \oplus 0_{d-n_2}$. Then $\Tr(P_1Q_1) = n_1$. If $n_1 + n_2 \leq d$, then taking $Q_2 = 0_{d-n_2} \oplus I_{n_2}$ we get $\Tr(P_1 Q_2) = 0$. By holding $P_1$ fixed and replacing $Q_2$ with $UQ_2U^\dagger$, where $U$ varies over the group of unitary operators on $\mathbb{C}^d$, we obtain all other values in the interval $[0, n_1]$. Finally, suppose $n_1 + n_2 > d$. Then $\Tr(P_1Q_2) = n_1 + n_2 - d$, and by again replacing $Q_2$ with $UQ_2U^\dagger$ we get all values in $[ n_1 + n_2 - d, n_1]$. \end{proof}

To describe $C_q^s(2,2)$, it suffices to describe each slice $\hat{S}_{\vec{r}}[C_{max}^s(2,2)]$ and then compute the convex hull. By Lemma \ref{traceProjections}, we see that \[ \hat{S}_{\vec{r}}[C_{max}^s(2,2)]=[ \max(0, r_1 + r_2 - 1), \min(r_1,r_2)] \] whenever $\vec{r}=(r_1,r_2)$ with $r_1,r_2 \in \mathbb{Q}$. By Theorem \ref{densityThm}, this implies that \[ \hat{S}_{\vec{r}}[\overline{C_q^s(2,2)}] = [\max(0,r_1+r_2-1),\min(r_1,r_2)] \] for every $\vec{r}=(r_1,r_2) \in [0,1]^2$. However, we can show that $$[\max(0,r_1+r_2-1),\min(r_1,r_2)] \subset \hat{S}_{\vec{r}}[C_q^s(2,2)].$$ For example, if $r_1 \leq r_2 \leq 1$ and $r_1+r_2 - 1 \geq 0$, we can generate \[ [\max(0,r_1+r_2-1),\min(r_1,r_2)] \] as follows. Set $P_1 = 1 \oplus 0 \oplus 0, Q_1 = 1 \oplus 1 \oplus 0 \in \mathbb{C} \oplus \mathbb{C} \oplus \mathbb{C}$ and define a trace $\tau_1$ on $\mathbb{C} \oplus \mathbb{C} \oplus \mathbb{C}$ via $\tau(a \oplus b \oplus c) = r_1 a + (r_2 - r_1)b + (1-r_2)c$. Then $\tau_1(P_1 Q_1) = r_1$. Next set $P_2 = 1 \oplus 1 \oplus 0, Q_2 = 1 \oplus 0 \oplus 1 \in \mathbb{C} \oplus \mathbb{C} \oplus \mathbb{C}$, and define a trace $\tau_2$ on $\mathbb{C} \oplus \mathbb{C} \oplus \mathbb{C}$ via $\tau_2(a \oplus b \oplus c) = (r_1 + r_2 - 1)a + (1 - r_2)b + (1-r_1)c$. Then $\tau_2(P_2 Q_2) = r_1 + r_2 - 1$. Similar arguments for the various types of $\vec{r}$-slices show that every slice is in $C_q^s(2,2)$. Thus we conclude that $C_q^s(2,2)$ is closed and is an affine image of the three dimensional body \[ \{ (r_1, r_2, [\max(0,r_1+r_2-1),\min(r_1,r_2)]) \}_{r_1,r_2 \in [0,1]}. \]

\section{The three-experiment two-outcome case}
In this section we aim to provide a complete description of the set $C_q^s(3,2)$. Our strategy will be to mimic the argument used to describe $C_q^s(2,2)$ in the previous section. We first make a few preliminary observations. Assume that $p \in C_q^s(3,2)$. Then the entries of $p$ are completely determined by the six values $r_1=p(0|1), r_2=p(0|2), r_3=p(0|3), w_{1,2}=p(0,0|1,2), w_{1,3}=p(0,0|1,3)$ and $w_{2,3}=p(0,0|2,3)$, as explained in the previous section. Throughout this section, we will order vectors in $\hat{S}_{\vec{r}}[C_q^s(3,2)]$ as $(w_{1,2}, w_{1,3}, w_{2,3})$, and similarly for vectors in $S_d(n_1,n_2,n_3)$. To simplify notation, we will also write $S_d(n) = S_d(n,n,n)$.

We call a vector $\vec{r}=(r_1,r_2,r_3)$ \textbf{standard} if $0 \leq r_1 \leq r_2 \leq r_3 \leq 1/2$. When $\vec{r}$ is standard, we call the corresponding slice $S_{\vec{r}}[C_q^s(3,2)]$ a \textbf{standard slice}. It is evident that every $\vec{r}$-slice can be obtained from a standard slice by some combination of ``reversing outcomes" and ``swapping experiments". More specifically, if for $x,y \in \{1,2,3\}$, $x \neq y$, we define $\epsilon_{x,y}: C_q^s(3,2) \rightarrow C_q^s(3,2)$ to be the map that interchanges experiments $x$ and $y$ (for example, $\epsilon_{x,y}(p)(i,j|x,y) = p(i,j|y,x)$), and if for each $x \in \{1,2,3\}$ we set $\pi_x$ to be the map that reverses the outcomes of experiment $x$ (so that, for example, $\pi_x(p)(1,0|y,x) = (1,1|y,x)$), then an arbitrary slice of $C_q^s(3,2)$ is easily seen to be the image of a standard slice under some composition of $\epsilon_{x,y}$'s and $\pi_x$'s. It is also evident that the $\epsilon_{x,y}$ and $\pi_x$ maps are invertible affine maps - hence, every slice of $C_q^s(3,2)$ is an affine image of a standard slice of $C_q^s(3,2)$.

We will further subdivide the standard slices into three types. We call a standard vector $\vec{r}=(r_1,r_2,r_3)$ \textbf{type I} if $r_1=r_2=r_3$, \textbf{type II} if $r_1 \leq r_2=r_3$ and \textbf{type III} if $r_1 \leq r_2 \leq r_3$. Likewise, we call the corresponding $\vec{r}$-slices type I, II, or III if $\vec{r}$ is type I, II, or III, respectively. Our analysis of the slices of $C_q^s(3,2)$ will proceed as follows. We will first determine the structure of the type I slices of $C_{max}^s(3,2)$. We will then determine the structure of the type II slices of $C_{max}^s(3,2)$ by describing their structure in terms of the type I slices. Finally we will determine the structure of the type III slices of $C_{max}^s(3,2)$ by describing their structure in terms of the type II slices. Having determined the structure of all the slices of $C_{max}^s(3,2)$, we will use the fact that $\overline{C_q^s(3,2)} = \overline{\co \{ C_{max}^s(3,2) \}}$ to provide a complete description of $C_q^s(3,2)$.


The next lemma will be crucial to our results. In short, it will allow us to reduce the dimension of the Hilbert space needed to implement a given correlation.


\begin{lemma} \label{mainLemma} Let $n_1,n_2,n_3,d$ be positive integers with each $n_i \leq d$. Then if $n_1 + n_2 < d$ then \[ S_d(n_1,n_2,n_3) \subseteq \tfrac{d-1}{d} \co(S_{d-1}(n_1,n_2,n_3), S_{d-1}(n_1, n_2, n_3 - 1)). \] Likewise, if $n_1 + n_3 < d$, then \[ S_d(n_1,n_2,n_3) \subseteq \tfrac{d-1}{d} \co(S_{d-1}(n_1,n_2,n_3), S_{d-1}(n_1, n_2-1, n_3)) \] and if $n_2 + n_3 < d$, \[ S_d(n_1,n_2,n_3) \subseteq \tfrac{d-1}{d} \co(S_{d-1}(n_1,n_2,n_3), S_{d-1}(n_1 - 1, n_2, n_3)). \] \end{lemma}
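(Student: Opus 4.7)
My plan is to prove the first inclusion (the $n_1+n_2<d$ case) in detail; the other two follow by permuting the roles of $P_1,P_2,P_3$. Given projections $P_1,P_2,P_3$ on $\mathbb{C}^d$ of ranks $n_1,n_2,n_3$, the hypothesis forces $\dim(\Ran(P_1)+\Ran(P_2))\leq n_1+n_2<d$, so I can choose a unit vector $v$ orthogonal to both ranges. Decomposing $\mathbb{C}^d=\mathbb{C}v\oplus V$ with $V:=v^\perp\cong\mathbb{C}^{d-1}$, the first two projections take the form $P_i=0\oplus P_i'$ with $P_i'$ a rank-$n_i$ projection on $V$, while
\[ P_3 = \begin{pmatrix} \alpha & w^* \\ w & B \end{pmatrix} \]
for $\alpha=\langle v,P_3 v\rangle\in[0,1]$, $w\in V$, and $B$ a positive operator on $V$.

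The heart of the argument is to analyze $B$. Expanding $P_3^2=P_3$ block by block yields $\|w\|^2=\alpha(1-\alpha)$, $Bw=(1-\alpha)w$, $\Tr(B)=n_3-\alpha$, and the operator identity $B(I_{d-1}-B)=ww^*$. Since $ww^*$ has rank at most one and commutes with $B$, every eigenvalue of $B$ lies in $\{0,1\}$ except possibly one eigenvalue equal to $1-\alpha$, with eigenvector proportional to $w$. Combined with the trace constraint, this forces, when $\alpha\in(0,1)$, a decomposition $B=Q+(1-\alpha)R$ where $Q$ is a projection on $V$ of rank $n_3-1$, $R$ is the rank-one projection onto $\mathbb{C}w$, and $QR=0$. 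The degenerate cases $\alpha\in\{0,1\}$ force $w=0$, in which case $B$ itself is a projection of rank $n_3$ or $n_3-1$ respectively, and the decomposition persists trivially.

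Setting $\widetilde{P}_3:=Q+R$ (a rank-$n_3$ projection on $V$), $\widetilde{P}_3':=Q$ (a rank-$(n_3-1)$ projection on $V$), and $t:=1-\alpha\in[0,1]$, I obtain the operator equality $B=t\widetilde{P}_3+(1-t)\widetilde{P}_3'$. A direct block calculation gives $\Tr(P_1P_2)=\Tr(P_1'P_2')$ and $\Tr(P_iP_3)=\Tr(P_i'B)$ for $i=1,2$, so converting to normalized traces yields
\[ (\tr_d(P_1P_2),\tr_d(P_1P_3),\tr_d(P_2P_3)) = \tfrac{d-1}{d}\bigl[tq+(1-t)q'\bigr] \]
where $q=(\tr_{d-1}(P_1'P_2'),\tr_{d-1}(P_1'\widetilde{P}_3),\tr_{d-1}(P_2'\widetilde{P}_3))\in S_{d-1}(n_1,n_2,n_3)$ and $q'$ is the analogous triple with $\widetilde{P}_3'$ in place of $\widetilde{P}_3$, lying in $S_{d-1}(n_1,n_2,n_3-1)$.

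The principal obstacle I anticipate is establishing the rigid spectral structure of $B$: the identity $B(I_{d-1}-B)=ww^*$ is what forces the compression of $P_3$ to $V$ to be essentially a projection, with at most one anomalous eigenvalue, and it is exactly this rigidity that permits a two-term convex decomposition rather than something more complicated. The remaining inclusions in the lemma follow immediately by applying this argument to the appropriate permutation of $(P_1,P_2,P_3)$, using in each case the rank hypothesis to produce a vector $v$ in the orthogonal complement of two of the three ranges.
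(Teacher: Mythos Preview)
Your proof is correct and follows essentially the same approach as the paper's: both find a unit vector in $\ker P_1\cap\ker P_2$, block-decompose relative to it, and exploit $P_3^2=P_3$ to write the compression of $P_3$ as a rank-$(n_3-1)$ projection plus a scalar multiple of a rank-one projection, yielding the two-term convex combination. The only cosmetic difference is that the paper solves the block equations algebraically to isolate the sub-projection, whereas you phrase the same step spectrally via $B(I_{d-1}-B)=ww^*$; the content is identical.
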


\begin{proof} We need only prove the first statement, since the others clearly follow by symmetry. So assume $n_1 + n_2 < d$. Recall that for any pair of subspaces $W, V \subseteq \mathbb{C}^d$, we have $\dim(V \cap W) = \dim(V) + \dim(W) - \dim(V + W)$. Consequently, \begin{eqnarray} \dim(\ker(P_1) \cap \ker(P_2)) & \geq & (d - n_1) + (d - n_2) - d \nonumber \\
& = & d - (n_1 + n_2) > 0. \nonumber \end{eqnarray} So there exists some unit vector $u \in \ker(P_1) \cap \ker(P_2)$. By expanding $\{u\}$ to an orthonormal basis for $\mathbb{C}^d$, we may assume that the projections $P_i$ have the following form as matrices: \[
P_{i} = \begin{bmatrix}
\tilde{P}_{i} & \vec{0}_{d-1} \\
\vec{0}_{d-1}^\dagger & 0
\end{bmatrix}
\text{ for } i \in \{1,2\}, \quad
P_3 = \begin{bmatrix}
B & \vec{v} \\
\vec{v}^\dagger & t
\end{bmatrix} \] where $\tilde{P}_i,B \in \mathbb{M}_{d-1}$, $\vec{v} \in \mathbb{C}^{d-1}$, and $t \in [0,1]$. Since $P_1$ and $P_2$ are projections, $\tilde{P}_{1}$ and $\tilde{P}_2$ are $(d-1) \times (d-1)$ projections of rank $n_1$ and $n_2$, respectively.

Since $P_3$ is a projection, we may use the relation $P_3^2 = P_3$ to further decompose $P_3$ as \begin{equation} \label{lemmaDecomp}
P_3 =  \begin{bmatrix}
\tilde{P}_3 & \vec{0} \\
\vec{0}^\dagger & 0
\end{bmatrix} +
\begin{bmatrix}
(1-t)\vec{w}\vec{w}^\dagger & (t-t^2)^{1/2} \vec{w} \\
(t-t^2)^{1/2} \vec{w}^\dagger & t
\end{bmatrix} \end{equation} for some unit vector $\vec{w} \in \mathbb{C}^{d-1}$ and some $d \times d$ rank $n_3 - 1$ projection $\tilde{P}_3$ orthogonal to $\vec{w} \vec{w}^\dagger$. Indeed, the equation $P_3^2 = P_3$ implies that \begin{equation} \label{lemmaMatrix} \begin{bmatrix} B & \vec{v} \\ \vec{v}^\dagger & t \end{bmatrix} = \begin{bmatrix} B^2 + \vec{v} \vec{v}^\dagger & B \vec{v} + t \vec{v} \\ \vec{v}^\dagger B + t \vec{v}^\dagger & \|\vec{v}\|^2 + t^2 \end{bmatrix}. \end{equation} From the upper right corner of (\ref{lemmaMatrix}), we get $B \vec{v} = (1-t) \vec{v}$. From the lower right of (\ref{lemmaMatrix}), we see that $\| \vec{v} \|^2 = t - t^2$. Set $\vec{w} = (t-t^2)^{-1/2} \vec{v}$, a unit vector. Since $B$ is positive semidefinite, $B = \tilde{P}_3 + (1-t) \vec{w} \vec{w}^\dagger$, where $\tilde{P}_3$ is positive semidefinite with $\tilde{P}_3 \vec{w} = \vec{0}$. Finally, the upper left of (\ref{lemmaMatrix}) and $B = \tilde{P}_3 + (1-t) \vec{w} \vec{w}^\dagger$ yields \begin{eqnarray} \tilde{P}_3 + (1-t) \vec{w} \vec{w}^\dagger & = & (\tilde{P}_3 + (1-t) \vec{w} \vec{w}^\dagger)^2 + (t-t^2)\vec{w} \vec{w}^\dagger \nonumber \\ & = & \tilde{P}_3^2 + (1-t)^2 \vec{w}\vec{w}^\dagger + (t-t^2) \vec{w}\vec{w}^\dagger \nonumber \\ & = & \tilde{P}_3^2 + (1-t) \vec{w} \vec{w}^\dagger. \nonumber \end{eqnarray} We conclude that $\tilde{P}_3 = \tilde{P}_3^2$, so $\tilde{P}_3$ is a projection orthogonal to $\vec{w}\vec{w}^\dagger$.

Using the decomposition (\ref{lemmaDecomp}), we have for $k = 1,2$ \begin{eqnarray} \Tr(P_k P_3) & = & \Tr(\tilde{P}_k \tilde{P}_3) + (1-t) \Tr(\tilde{P}_k \vec{w}\vec{w}^\dagger) \nonumber \\ & = & t \Tr(\tilde{P}_k \tilde{P}_3) + (1-t) \Tr(\tilde{P}_k (\tilde{P}_3 + \vec{w}\vec{w}^\dagger)). \nonumber \end{eqnarray} It follows that \begin{eqnarray} (\tr_d(P_1 P_2), \tr_d(P_1 P_3), \tr_d(P_2 P_3)) & = & t \tfrac{d-1}{d} (\tr_{d-1}(\tilde{P}_1 \tilde{P}_2), \tr_{d-1}(\tilde{P}_1 \tilde{P}_3), \tr_{d-1}(\tilde{P}_2 \tilde{P}_3)) \nonumber \\ & + & (1-t) \tfrac{d-1}{d} (\tr_{d-1}(\tilde{P}_1 \tilde{P}_2), \tr_{d-1}(\tilde{P}_1 \hat{P}_3), \tr_{d-1}(\tilde{P}_2 \hat{P}_3)) \nonumber \end{eqnarray} where $\hat{P}_3 := \tilde{P}_3 + \vec{w} \vec{w}^\dagger$. Since $\tilde{P}_3$ is orthogonal to the rank one projection $\vec{w}\vec{w}^\dagger$, $\hat{P}_3$ is a rank $n_3$ projection, so the statement follows. \end{proof}

We will sometimes need to increase the dimension of the Hilbert space used to implement a correlation. Though the following is easy to prove, we record it here since we will use this fact frequently.

\begin{lemma} \label{LemmaBlowUp} For all positive integers $n_1,n_2,n_3,d$ with each $n_i \leq d$ we have \begin{eqnarray} S_d(n_1,n_2,n_3) & \subseteq & \tfrac{d+1}{d} S_{d+1}(n_1+1, n_2, n_3), \nonumber \\ S_d(n_1,n_2,n_3) & \subseteq & \tfrac{d+1}{d} S_{d+1}(n_1, n_2+1, n_3), \nonumber \end{eqnarray} and \begin{eqnarray} S_d(n_1,n_2,n_3) & \subseteq & \tfrac{d+1}{d} S_{d+1}(n_1, n_2, n_3+1). \nonumber \end{eqnarray} \end{lemma}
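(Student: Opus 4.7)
The plan is to prove each of the three inclusions by a direct block-padding construction. Given projections $P_1, P_2, P_3 \in \mathbb{M}_d$ of ranks $n_1, n_2, n_3$ whose pairwise normalized traces give a point of $S_d(n_1,n_2,n_3)$, I will build projections $\tilde{P}_1, \tilde{P}_2, \tilde{P}_3 \in \mathbb{M}_{d+1}$ of the desired ranks whose pairwise normalized traces yield the same triple scaled by $d/(d+1)$, which is exactly what the inclusion asks.

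For the first inclusion, I define
\[
\tilde{P}_1 = P_1 \oplus [1], \qquad \tilde{P}_2 = P_2 \oplus [0], \qquad \tilde{P}_3 = P_3 \oplus [0],
\]
each regarded as an element of $\mathbb{M}_{d+1}$. Each $\tilde{P}_i$ is visibly a projection, with ranks $n_1+1, n_2, n_3$ respectively, so the marginal data matches $S_{d+1}(n_1+1, n_2, n_3)$. For any $i \neq j$ at least one of the appended $1\times 1$ diagonal entries is zero, so $\tilde{P}_i \tilde{P}_j = (P_i P_j) \oplus [0]$, which yields $\Tr(\tilde{P}_i \tilde{P}_j) = \Tr(P_i P_j)$. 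Dividing by $d+1$ rather than $d$ gives
\[
\tr_{d+1}(\tilde{P}_i \tilde{P}_j) = \tfrac{d}{d+1}\, \tr_d(P_i P_j),
\]
so the original triple equals $\tfrac{d+1}{d}$ times a triple realized in $S_{d+1}(n_1+1, n_2, n_3)$, which is the first claimed containment.

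The remaining two inclusions follow by the same argument with the single $[1]$-block placed instead on $\tilde{P}_2$ (respectively $\tilde{P}_3$) and $[0]$-blocks appended to the other two projections; the off-diagonal product argument is unchanged. I do not expect any serious obstacle here: the only points to verify are that the block-diagonal matrices are projections of the stated ranks and that the cross-product contribution from the appended coordinate vanishes, both of which are immediate from the block structure. The only hypothesis needed is $n_i \leq d$, so that the new rank $n_i + 1$ does not exceed $d+1$, which is exactly what the lemma assumes.
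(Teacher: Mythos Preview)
Your proof is correct and is essentially identical to the paper's own argument: the paper also defines $\tilde{P}_1 = P_1 \oplus 1$, $\tilde{P}_2 = P_2 \oplus 0$, $\tilde{P}_3 = P_3 \oplus 0$, observes that $\tr_d(P_i P_j) = \tfrac{d+1}{d}\tr_{d+1}(\tilde{P}_i \tilde{P}_j)$, and appeals to symmetry for the other two inclusions.
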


\begin{proof} By symmetry, it suffices to prove \[ S_d(n_1,n_2,n_3) \subseteq \tfrac{d+1}{d} S_{d+1}(n_1+1, n_2, n_3). \] Assume that $P_i$ are rank $n_i$ projections on $\mathbb{C}^d$. Then we can define new projections $\tilde{P}_i$ on $\mathbb{C}^{d+1}$ by setting $\tilde{P}_1 = P_1 \oplus 1$, $\tilde{P}_2 = P_2 \oplus 0$ and $\tilde{P}_3 = P_3 \oplus 0$. Then clearly \[ \tr_d(P_i P_j) = \tfrac{d+1}{d} \tr_{d+1}(\tilde{P}_i \tilde{P}_j) \] for each $i$ and $j$. \end{proof}

\subsection{Type I slices}

We begin by studying the structure of the type I slices. We start with the case of the $\vec{r}$-slice for $\vec{r}=(1/2,1/2,1/2)$. The first proof below is based on the proof of Theorem 1 in Tsirelson's paper\cite{Tsirel'son1987}. We provide the proof for completeness, and since Tsirelson's result is phrased in a different context. We only consider the three-experiment case here, although the idea generalizes to any number of experiments. Before stating the proposition, we recall that the $n \times n$ elliptope is defined to be the set of $n \times n$ positive semidefinite matrices over $\mathbb{R}$ with diagonal entries equal to 1.

\begin{proposition} \label{PropHalfSlice} Let $\vec{r} = (1/2,1/2,1/2)$. Then \[ \hat{S}_{\vec{r}}[C_q^s(3,2)] = S_{2n}(n) = S_2(1) \] for every $n \geq 1$. Moreover, $S_2(1)$ is an affine image of the above diagonal portion of the $3 \times 3$ elliptope. In particular, $(x,y,z) \in S_2(1)$ if and only if $(x,y,z) = \frac{1}{4}((x',y',z') + (1,1,1))$ where $x',y',z' \in \mathbb{R}$ are the above diagonal entries of a $3 \times 3$ positive semidefinite matrix with diagonal entries of 1. \end{proposition}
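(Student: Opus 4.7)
The plan is to view everything through the trace-zero self-adjoint unitaries $A_i := 2P_i - I$, exploit the fact that the trace-zero Hermitian matrices on $\mathbb{C}^2$ form a three-dimensional real space spanned by the Pauli matrices, and conclude via a Gram-matrix / elliptope argument.

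First, I would compute $S_2(1)$ explicitly. Every rank-one projection on $\mathbb{C}^2$ can be written uniquely as $P = \tfrac{1}{2}(I_2 + \vec{n}\cdot\vec{\sigma})$ for a unit vector $\vec{n} \in \mathbb{R}^3$, where $\vec{\sigma}=(\sigma_1,\sigma_2,\sigma_3)$ denotes the Pauli triple. Using $\sigma_a\sigma_b + \sigma_b\sigma_a = 2\delta_{ab}I_2$ and $\tr_2(\sigma_a)=0$, one obtains $\tr_2(P_iP_j) = \tfrac{1}{4}(1 + \vec{n}_i\cdot\vec{n}_j)$. As $\vec{n}_1,\vec{n}_2,\vec{n}_3$ vary over unit vectors in $\mathbb{R}^3$, the triple $(\vec{n}_i\cdot\vec{n}_j)_{i<j}$ traces out exactly the above-diagonal entries of the $3\times 3$ elliptope, since every $3\times 3$ positive semidefinite matrix with unit diagonal is the Gram matrix of three unit vectors in $\mathbb{R}^3$ (by spectral decomposition). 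This establishes the claimed affine description of $S_2(1)$.

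Next, I would show $S_{2n}(n) = S_2(1)$ for every $n$. The containment $S_2(1) \subseteq S_{2n}(n)$ follows from the tensor blow-up $\tilde{P}_i := P_i \otimes I_n$, which has rank $n$ on $\mathbb{C}^{2n}$ and satisfies $\tr_{2n}(\tilde{P}_i\tilde{P}_j) = \tr_2(P_i P_j)$ because traces split over tensor products. For the reverse direction, given rank-$n$ projections $P_1,P_2,P_3$ on $\mathbb{C}^{2n}$, set $A_i := 2P_i - I_{2n}$: each $A_i$ is a self-adjoint unitary with $\tr_{2n}(A_i) = 0$ and $A_i^2 = I_{2n}$. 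The $3\times 3$ matrix $M_{ij} := \tr_{2n}(A_iA_j)$ is real, positive semidefinite (it is the Gram matrix of $\{A_1,A_2,A_3\}$ with respect to the Hilbert-Schmidt inner product $(X,Y)\mapsto \tr_{2n}(X^\dagger Y)$), and has unit diagonal. By the first step, $M$ is realized by Pauli unit vectors on $\mathbb{C}^2$, yielding rank-one projections that reproduce the correlation triple $(w_{12},w_{13},w_{23}) = \tfrac{1}{4}(M_{12}+1, M_{13}+1, M_{23}+1)$. Hence $S_{2n}(n) \subseteq S_2(1)$.

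Finally, I would extend to the full slice $\hat{S}_{\vec{r}}[C_q^s(3,2)]$. By Theorem \ref{PaulsenWinter}, any $p$ in this slice arises as $p(i,j|x,y) = \tau(E_{x,i}E_{y,j})$ for a tracial state $\tau$ on a finite-dimensional $C^*$-algebra. Setting $A_x := 2E_{x,0} - 1$ produces trace-zero self-adjoint unitaries in that algebra, and $(\tau(A_iA_j))_{i,j}$ is again a $3\times 3$ real positive semidefinite matrix with unit diagonal (now with respect to the $\tau$-inner product $\langle X,Y\rangle_\tau := \tau(X^\dagger Y)$). Realizing this matrix by Pauli matrices on $\mathbb{C}^2$ produces the same correlation triple on $\mathbb{C}^2$, so $\hat{S}_{\vec{r}}[C_q^s(3,2)] \subseteq S_2(1)$; the reverse containment is immediate since $S_2(1) \subseteq C_{max}^s(3,2) \subseteq C_q^s(3,2)$. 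The only point that warrants care is verifying that the Gram-matrix recipe is genuinely algebra-agnostic, which it is, since positive semidefiniteness of $(\tau(A_iA_j))$ follows purely from the axioms of a tracial state; the entire argument then collapses to the observation that three trace-zero Hermitian operators always embed inside the three-dimensional real span of the Pauli matrices, so any Gram triple is universally attainable on $\mathbb{C}^2$.
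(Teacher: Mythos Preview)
Your proof is correct and follows essentially the same route as the paper: both arguments pass to the trace-zero self-adjoint unitaries $A_i = 2P_i - I$, recognize their $\tau$-Gram matrix as a point of the $3\times 3$ elliptope, and then realize any such Gram matrix by unit vectors in $\mathbb{R}^3$ via the Pauli parametrization. The only cosmetic difference is that you treat $S_{2n}(n)=S_2(1)$ as a separate intermediate step (using the tensor blow-up for one containment), whereas the paper absorbs that case directly into the general $C_q^s$ argument via Theorem~\ref{PaulsenWinter}.
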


\begin{proof} In the $3 \times 3$ case, a matrix \[
q = \begin{pmatrix}
1 & x' & y' \\
x' & 1 & z' \\
y' & z' & 1 \end{pmatrix} \] is positive semidefinite if and only if there exist unit vectors $\vec{a}, \vec{b}, \vec{c} \in \mathbb{R}^3$ such that $x' = \langle \vec{a}, \vec{b} \rangle, y' = \langle \vec{a}, \vec{c} \rangle$, and $z' = \langle \vec{b}, \vec{c} \rangle$. Define $U_1 = a_1 X + a_2 Y + a_3 Z$, $U_2 = b_1 X + b_2 Y + b_3 Z$, and $U_3 = c_1 X + c_2 Y + c_3 Z$, where $X,Y$, and $Z$ are the standard Pauli matrices in $\mathbb{M}_2$ and $a_i, b_i$ and $c_i$ are the entries of $\vec{a}, \vec{b},$ and $\vec{c}$ (respectively) with respect to the canonical orthonormal basis of $\mathbb{R}^3$. Then each $U_i$ is a trace zero Hermitian unitary matrix. Hence, the operators $P_i := \frac{1}{2}(U_i + I)$ are rank one projections. Setting $(x,y,z)=(\tr_2(P_1 P_2), \tr_2(P_1 P_3), \tr_2(P_2 P_3))$ defines a correlation in $ \hat{S}_{\vec{r}}[ C_{max}^s(n,m) ]$. Moreover, \begin{eqnarray} \tr_2(P_i P_j) & = & \frac{1}{4}( \tr_2(U_i U_j) + 1) \nonumber \\
& = & \frac{1}{4}(q_{i,j} + 1). \nonumber \end{eqnarray} Thus, if $x', y',$ and  $z'$ are the off-diagonal entries of a matrix $q$ in the $3 \times 3$ elliptope, then $(x,y,z) = \frac{1}{4}(x'+1,y'+1,z'+1) \in S_2(1)$.

For the other direction, suppose $(x,y,z) \in \hat{S}_{\vec{r}}[ C_q^s(3,2) ]$. Then there exists a finite-dimensional $C^*$-algebra $\frak{A}$ with a tracial state $\tau$ and projections $P_1, P_2$, and $P_3$ such that $x = \tau(P_1 P_2)$, $y = \tau(P_1 P_3)$, and $z = \tau(P_2 P_3)$, with $\tau(P_i)=1/2$ for each $i=1,2,3$. Note that $\frak{A}_h$, the real vector space of hermitian elements of $\frak{A}$, forms a real inner-product space with inner product given by $\langle x, y \rangle = \tau(xy)$. Let $e$ denote the identity element in $\frak{A}_h$. By replacing $\frak{A}_h$ with the span of $\{P_1, P_2, P_3, e\}$, we obtain a real Hilbert space $H$. After selecting a basis for $H$ we may identify each $P_i$ with a vector $p_i \in \mathbb{R}^n$ for some $n$ with $\|p_i\|^2 = \tau(p_i) = 1/2$. Notice that $\langle p_i, e \rangle = \|p_i\|^2$. Let $u_i := 2p_i - e$. Then \begin{eqnarray} \|u_i\|^2 & = & \langle (2p_i - e), (2p_i - e) \rangle \nonumber \\ & = & 4\|p_i\|^2 - 2\|p_i\|^2 - 2 \|p_i\|^2 + \|e\|^2 \nonumber \\ & = & 1. \nonumber \end{eqnarray} So each $u_i$ is a unit vector. Hence, the matrix with entries $q_{i,j} = \langle u_i, u_j \rangle$ is in the $3 \times 3$ elliptope. Setting $x' = \langle u_1, u_2 \rangle$, $y' = \langle u_1, u_3 \rangle$, $z' = \langle u_2, u_3 \rangle$ and observing that $(x,y,z) = \frac{1}{4}((x',y',z') + (1,1,1))$ completes the proof. \end{proof}


Identifying each matrix in the $3 \times 3$ elliptope with the vector $(x',y',z')$ representing its off-diagonal entries, we obtain a region of $\mathbb{R}^3$ described by the equations \[ -1 \leq x',y',z' \leq 1, \] \[ 1 + 2x'y'z' - x'^2 - y'^2 - z'^2 \geq 0 \] by Sylvester's criterion. This convex region is depicted in Figure 1 below.

\begin{figure}[h!]
\centering
\includegraphics[width=.4\textwidth]{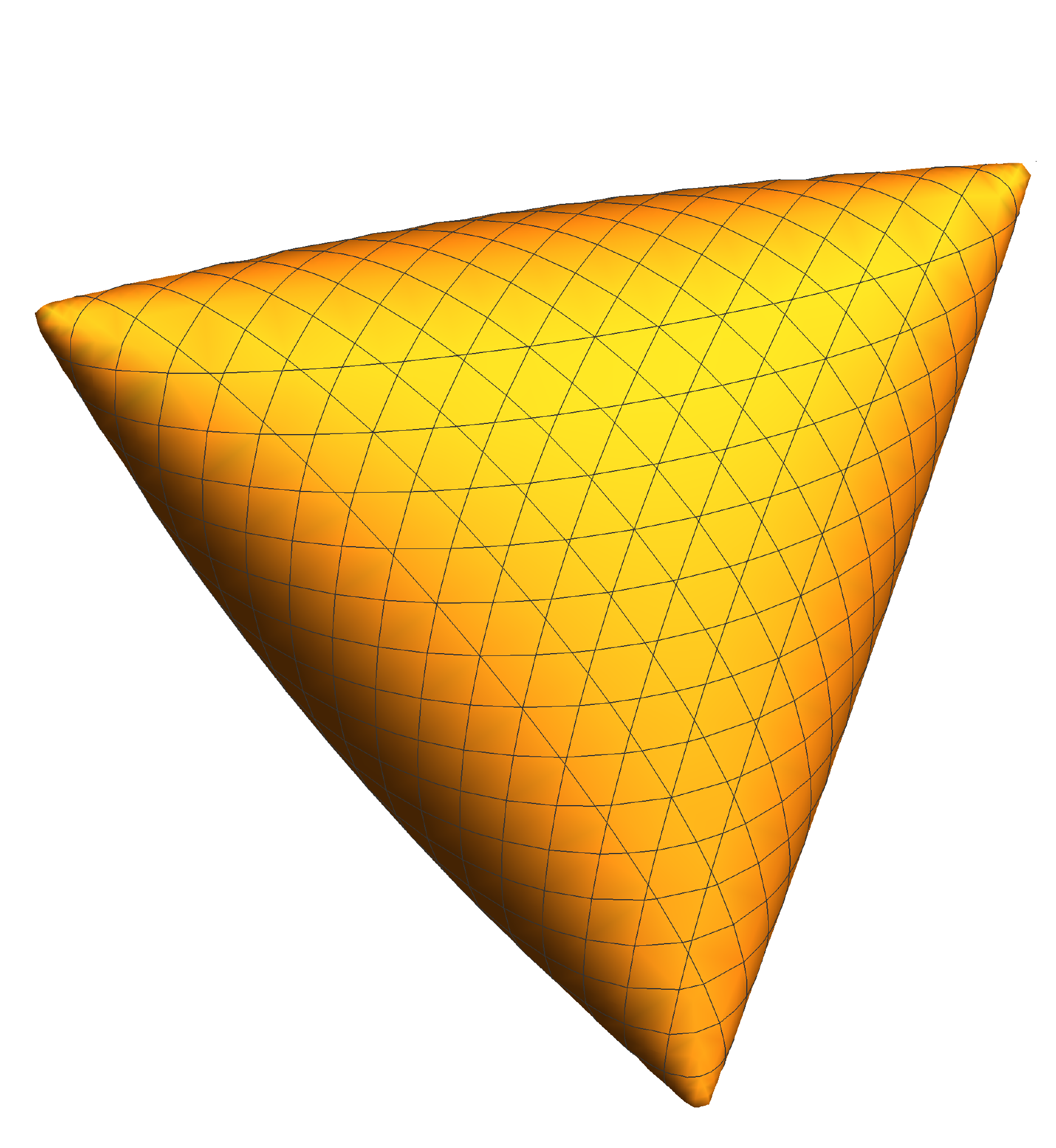}
\caption{The $(1/2,1/2,1/2)$-slice of $C_q^s(3,2)$.}
\end{figure}

We will show below in Proposition \ref{propTypeI} that type I correlations can be described in terms of the slice $S_2(1)$, which we have just shown to be an affine image of the elliptope. To this end, we must determine the structure of $S_d(n)$ for all $n \leq d/2$. We begin with an easy case.

\begin{proposition} \label{PropEasyCase} Assume that $d \geq 3$. Then \[ S_d(1) \subseteq \co\{\vec{0}, \tfrac{2}{d} S_2(1)\}. \] \end{proposition}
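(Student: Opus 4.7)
The plan is to proceed by induction on $d$. The base case $d = 2$ is immediate: $S_2(1) \subseteq \co\{\vec{0}, S_2(1)\}$ holds tautologically, since any $x \in S_2(1)$ equals $0\cdot\vec{0} + 1\cdot x$.

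For the inductive step, fix $d \geq 3$ and assume the result for $d-1$. Because $n_1 = n_2 = n_3 = 1$ satisfies $n_1 + n_2 = 2 < d$, Lemma \ref{mainLemma} yields
\[ S_d(1) \subseteq \tfrac{d-1}{d}\co\{S_{d-1}(1,1,1),\; S_{d-1}(1,1,0)\}. \]
I would pause here to note that although $n_3 - 1 = 0$ is not strictly covered by the ``positive integers'' hypothesis of Lemma \ref{mainLemma}, its proof goes through verbatim with $\tilde{P}_3 = 0$, producing a genuine decomposition in which the third projection of the second summand has rank zero. Hence the containment above is legitimate, with $S_{d-1}(1,1,0)$ interpreted as the set of triples obtained by taking $P_3 = 0_{d-1}$.

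Next, I would analyze the two pieces inside the convex hull. The first, $S_{d-1}(1,1,1) = S_{d-1}(1)$, is contained in $\co\{\vec{0}, \tfrac{2}{d-1}S_2(1)\}$ by the inductive hypothesis. For the second, since the third projection is forced to be zero, $S_{d-1}(1,1,0) = S_{d-1}(1,1) \times \{(0,0)\} = [0,\tfrac{1}{d-1}] \times \{0\} \times \{0\}$ by Lemma \ref{traceProjections}. Taking rank-one projections $Q_1 = Q_2$ and $Q_3$ orthogonal to $Q_1$ on $\mathbb{C}^2$ shows $(\tfrac{1}{d-1},0,0) \in \tfrac{2}{d-1}S_2(1)$, and hence the whole segment $S_{d-1}(1,1,0)$ sits inside $\co\{\vec{0}, \tfrac{2}{d-1}S_2(1)\}$.

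Since $\co\{\vec{0}, \tfrac{2}{d-1}S_2(1)\}$ is convex and contains both $S_{d-1}(1,1,1)$ and $S_{d-1}(1,1,0)$, it also contains their convex hull. Scaling by $\tfrac{d-1}{d}$ and using the identity $\lambda\,\co\{\vec{0}, X\} = \co\{\vec{0}, \lambda X\}$ yields $S_d(1) \subseteq \co\{\vec{0}, \tfrac{2}{d}S_2(1)\}$, completing the induction. The only delicate point is justifying the rank-zero reduction in the application of Lemma \ref{mainLemma}; once that is checked, everything else is routine bookkeeping.
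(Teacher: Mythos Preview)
Your proof is correct but takes a different route from the paper. The paper first observes that three rank-one projections on $\mathbb{C}^d$ have ranges spanning at most a $3$-dimensional subspace, so after a unitary change of basis each $P_i = \tilde{P}_i \oplus 0_{d-3}$ with $\tilde{P}_i \in \mathbb{M}_3$; this yields $S_d(1) = \tfrac{3}{d}S_3(1)$ in one stroke, reducing everything to the single case $d=3$, which is then handled by one application of Lemma~\ref{mainLemma}. Your argument instead peels off one dimension at a time by induction, invoking the lemma at every step. Both approaches rely on the same rank-zero extension of Lemma~\ref{mainLemma} (the paper also applies it with a zero entry when it writes $S_2(\hat n_i)$), so your caveat there is appropriate and not a gap. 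The paper's direct reduction is shorter and exploits the specific rank-one geometry; your inductive scheme is more in the spirit of the reduction machinery developed later in Lemmas~\ref{lemmaTypeI}--\ref{lemmaTypeIII}, where no such one-shot dimension collapse is available.
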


\begin{proof} Let $P_1,P_2,P_3$ be rank-1 projections on $\mathbb{C}^d$. Since \[ \dim( \Ran(P_1) + \Ran(P_2) + \Ran(P_3)) \leq 3 \] we may assume that each $P_i$ has a matrix of the form $P_i = \tilde{P}_i \oplus 0_{d-3}$, where $\tilde{P}_i$ is a $3 \times 3$ matrix. It follows that $S_d(1) = \frac{3}{d}S_3(1)$. Let $\hat{n_i}$ denote the row vector in $\mathbb{R}^3$ with a zero in the $n_i$ entry and ones in every other entry. Then, by Lemma \ref{mainLemma}, \[ S_3(1) \subseteq \bigcap_{i=1}^3 \frac{2}{3} \co\{S_2(1), S_2(\hat{n_i}) \} =: C. \] Now observe that \begin{eqnarray} S_2(\hat{n}_1) & = & (0,0,[0,\tfrac{1}{2}]), \nonumber \\ S_2(\hat{n}_2) & = & (0,[0,\tfrac{1}{2}],0), \nonumber \\ \text{and} \qquad S_2(\hat{n}_3) & = & ([0,\tfrac{1}{2}],0,0), \nonumber \end{eqnarray} by Lemma \ref{traceProjections}. Since $S_2(1)$ contains the correlations $(1/2,0,0), (0,1/2,0)$, and $(0,0,1/2)$, we get $C = \tfrac{2}{3} \co\{\vec{0}, S_2(1)\}.$ \end{proof}


We will now begin to analyze the structure of $S_d(n)$ for any $n < d/2$. We start by applying Lemma \ref{mainLemma} to this particular setting.

\begin{lemma} \label{lemmaTypeI} Suppose that $n,m$ and $d$ are positive integers such that $2n = d -m$ and set $l = \min(n-1,m)$. Then \[ S_d(n) \subseteq \co\{ \tfrac{d-1}{d} S_{d-1}(n), \tfrac{d-3l}{d} S_{d-3l}(n-l)\}. \] \end{lemma}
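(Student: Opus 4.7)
The plan is to prove this by induction on $l$, with each inductive step performing three successive applications of Lemma \ref{mainLemma} (one for each coordinate direction) followed by a compensating use of Lemma \ref{LemmaBlowUp} to absorb the intermediate slices produced.

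For the base case $l = 1$, I would first apply Lemma \ref{mainLemma} to $S_d(n,n,n)$, using the hypothesis $n_1 + n_2 = 2n < d$, to obtain $S_d(n) \subseteq \tfrac{d-1}{d}\co(S_{d-1}(n,n,n), S_{d-1}(n,n,n-1))$. Next I would apply Lemma \ref{mainLemma} to $S_{d-1}(n,n,n-1)$ to reduce $n_2$, and then once more to $S_{d-2}(n,n-1,n-1)$ to reduce $n_1$, checking at each stage that the strict inequality between the two unreduced ranks and the ambient dimension still holds (each reduces to $2n < d$). Chaining the three inclusions and expanding the nested convex combinations expresses every $x \in S_d(n)$ as a convex combination of four rescaled elements: the target $\tfrac{d-1}{d}S_{d-1}(n,n,n)$, two intermediate slices $\tfrac{d-2}{d}S_{d-2}(n,n,n-1)$ and $\tfrac{d-3}{d}S_{d-3}(n,n-1,n-1)$, and the fully reduced $\tfrac{d-3}{d}S_{d-3}(n-1)$, with prefactors coming from the telescoping identity $\tfrac{d-1}{d}\cdot\tfrac{d-2}{d-1}\cdot\tfrac{d-3}{d-2} = \tfrac{d-3}{d}$.

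To finish the base case, I would absorb the two intermediate slices into $\tfrac{d-1}{d}S_{d-1}(n)$ using Lemma \ref{LemmaBlowUp}. One application (adding $1$ to $n_3$) embeds $S_{d-2}(n,n,n-1) \subseteq \tfrac{d-1}{d-2} S_{d-1}(n)$, and two applications (adding $1$ to $n_2$ then $n_3$) embed $S_{d-3}(n,n-1,n-1) \subseteq \tfrac{d-1}{d-3} S_{d-1}(n)$. The prefactors collapse exactly via $\tfrac{d-2}{d}\cdot\tfrac{d-1}{d-2} = \tfrac{d-1}{d}$ and $\tfrac{d-3}{d}\cdot\tfrac{d-1}{d-3} = \tfrac{d-1}{d}$, yielding the $l=1$ inclusion $S_d(n) \subseteq \co(\tfrac{d-1}{d}S_{d-1}(n), \tfrac{d-3}{d}S_{d-3}(n-1))$.

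For the inductive step, I would apply the claim for $l-1$ to $S_{d-3}(n-1)$; the required condition $l-1 \leq \min(n-2,m-1)$ follows from $l \leq \min(n-1,m)$. Substituting the resulting $S_{d-3}(n-1) \subseteq \co(\tfrac{d-4}{d-3}S_{d-4}(n-1), \tfrac{d-3l}{d-3}S_{d-3l}(n-l))$ into the base-case bound and scaling by $\tfrac{d-3}{d}$ yields a three-term convex combination involving $\tfrac{d-1}{d}S_{d-1}(n)$, $\tfrac{d-4}{d}S_{d-4}(n-1)$, and $\tfrac{d-3l}{d}S_{d-3l}(n-l)$. Three applications of Lemma \ref{LemmaBlowUp} absorb $S_{d-4}(n-1) \subseteq \tfrac{d-1}{d-4}S_{d-1}(n)$, and once again the prefactors collapse via $\tfrac{d-4}{d}\cdot\tfrac{d-1}{d-4} = \tfrac{d-1}{d}$. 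The main obstacle I anticipate is the careful bookkeeping of convex-combination coefficients produced by the nested applications of Lemma \ref{mainLemma}, and then verifying that after the blow-up absorption the total weight on the two surviving pieces indeed lies in $[0,1]$ and sums to $1$; everything works out because the dimension-reducing factors in Lemma \ref{mainLemma} and the dimension-increasing factors in Lemma \ref{LemmaBlowUp} are designed to telescope, but this has to be written out explicitly to confirm.
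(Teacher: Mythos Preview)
Your proposal is correct and follows essentially the same approach as the paper: three applications of Lemma \ref{mainLemma} per ``round'' (cycling through the three coordinates) together with Lemma \ref{LemmaBlowUp} to absorb the intermediate slices into $\tfrac{d-1}{d}S_{d-1}(n)$. The only organizational difference is that the paper unrolls all $l$ rounds into one long explicit chain of inclusions and then collapses it, whereas you package the same computation as an induction on $l$; your worry about tracking convex-combination coefficients is unnecessary, since once an intermediate set is shown to lie inside $\tfrac{d-1}{d}S_{d-1}(n)$ the monotonicity of $\co\{\cdot\}$ absorbs it automatically.
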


\begin{proof} By repeated application of Lemma \ref{mainLemma} we have
\begin{eqnarray} S_d(n) & \subseteq & \tfrac{d-1}{d} \co\{ S_{d-1}(n), S_{d-1}(n-1,n,n)\}, \nonumber \\
\tfrac{d-1}{d}S_{d-1}(n-1,n,n) & \subseteq & \tfrac{d-2}{d} \co\{ S_{d-2}(n-1,n,n), S_{d-2}(n-1, n-1, n)\}, \nonumber \\
\tfrac{d-2}{d} S_{d-2}(n-1, n-1, n) & \subseteq & \tfrac{d-3}{d} \co\{  S_{d-3}(n -1, n-1, n), S_{d-3}(n-1)\}, \nonumber \\
\tfrac{d-3}{d} S_{d-3}(n-1) & \subseteq & \tfrac{d-4}{d} \co \{ S_{d-4}(n-1), S_{d-4}(n-2, n-1, n-1) \} \\
\tfrac{d-4}{d} S_{d-4}(n-2,n-1,n-1) & \subseteq & \tfrac{d-5}{d} \co \{ S_{d-5}(n-2,n-1,n-1), S_{d-5}(n-2, n-2, n-1) \} \nonumber \\
\tfrac{d-5}{d} S_{d-5}(n-2,n-2,n-1) & \subseteq & \tfrac{d-6}{d} \co \{ S_{d-6}(n-2,n-2,n-1), S_{d-6}(n-2) \} \nonumber \\
& \dots & \nonumber \\
\tfrac{d-3l+3}{d} S_{d-3l + 3}(n-l+1) & \subseteq & \tfrac{d-3l + 2}{d} \co\{ S_{d-3l + 2}(n-l+1), \nonumber \\ & &  S_{d-3l + 2}(n-l,n-l+1,n-l+1)\}, \nonumber \\
\tfrac{d-3l + 2}{d}S_{d-3l + 2}(n-l,n-l+1,n-l+1) & \subseteq & \tfrac{d-3l + 1}{d} \co\{ S_{d-3l + 1}(n-l,n-l+1,n-l+1), \nonumber \\ & & \qquad S_{d-2}(n-l, n-l, n-l+1)\}, \nonumber \\
\tfrac{d-3l + 1}{d} S_{d-3l + 1}(n-l, n-l, n-l+1) & \subseteq & \tfrac{d-3l}{d} \co\{  S_{d-3l}(n -l, n-l, n-l+1), S_{d-3l}(n-l)\}.  \end{eqnarray}
Notice that equation (3) follows from Lemma \ref{mainLemma} and the observation that $2n-2 < d-3$ if and only if $2n < d - 1$ if and only if $l > 1$ (if $l=1$, then the sequence terminates at the line preceding (3)). Similar observations produce the entire sequence of inclusions. The sequence terminates at equation (4) where either $l=m$ so that $2(n-l)=2n - 2m = d-3l$ or $n-l = 1$. It follows that \[ S_d(n) \subseteq \co \{ \tfrac{d-1}{d} S_{d-1}(n), \tfrac{d-2}{d} S_{d-2}(n-1,n,n), \tfrac{d-3}{d}S_{d-3}(n-1,n-1,n), \tfrac{d-4}{d} S_{d-4}(n-1), \dots, \] \[ \quad \tfrac{d-3l}{d}S_{d-3l}(n-l, n-l, n-l+1), \tfrac{d-3l}{d} S_{d-3l}(n-l) \}. \] Furthermore, by Lemma \ref{LemmaBlowUp} we have
\begin{eqnarray} \tfrac{d-3l}{d} S_{d-3l}(n-l, n-l, n-l+1) & \subseteq & \tfrac{d-3l + 1}{d} S_{d-3l + 1}(n-l,n-l+1,n-l+1) \nonumber \\ & \subseteq & \tfrac{d-3l + 2}{d} S_{d-3l+2}(n-l+1) \nonumber \\ & \dots & \nonumber \\
& \subseteq & \tfrac{d-3}{d}S_{d-3}(n-1,n-1,n) \nonumber \\ & \subseteq & \tfrac{d-2}{d} S_{d-2}(n-1,n,n) \nonumber \\ & \subseteq & \tfrac{d-1}{d} S_{d-1}(n). \nonumber \end{eqnarray} The statement follows by combining first chain of inclusions with the second. \end{proof}

\begin{proposition} \label{propTypeI} Suppose that $n$ and $d$ are positive integers such that $2n \leq d$. Then \[ S_d(n) \subseteq \co\{ \max(0, \tfrac{6n - 2d}{d})S_2(1), \tfrac{2n}{d} S_2(1)\}. \] \end{proposition}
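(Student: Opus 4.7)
The plan is to prove this by (strong) induction on $d$, using two separate base cases that are already established. When $d = 2n$, Proposition \ref{PropHalfSlice} gives $S_{2n}(n) = S_2(1)$, and one checks $\max(0, \tfrac{6n - 2d}{d}) = 1 = \tfrac{2n}{d}$, so the asserted inclusion is automatic. When $n = 1$ and $d \geq 3$, Proposition \ref{PropEasyCase} gives $S_d(1) \subseteq \co\{\vec{0}, \tfrac{2}{d} S_2(1)\}$, which matches the target because $\max(0, \tfrac{6 - 2d}{d}) = 0$. The case $n = 1$ has to be dealt with separately because Lemma \ref{lemmaTypeI} degenerates ($l = 0$) when $n = 1$, making it unavailable as an inductive step.

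For the inductive step, assume $n \geq 2$ and $d > 2n$. Set $m = d - 2n \geq 1$ and $l = \min(n - 1, m) \geq 1$. Lemma \ref{lemmaTypeI} then yields
\[ S_d(n) \subseteq \co\{\tfrac{d-1}{d} S_{d-1}(n),\; \tfrac{d-3l}{d} S_{d-3l}(n-l)\}, \]
and both parameter pairs $(d-1, n)$ and $(d-3l, n-l)$ still satisfy the inductive precondition $2n' \leq d'$ (for the second pair this uses $l \leq m$). Substituting the inductive hypothesis into each term and absorbing the leading scalars $\tfrac{d-1}{d}$ and $\tfrac{d-3l}{d}$ into the convex hulls gives an inclusion of $S_d(n)$ in the convex hull of four scaled copies of $S_2(1)$ with coefficients
\[ \max\!\left(0, \tfrac{6n - 2d + 2}{d}\right),\quad \tfrac{2n}{d},\quad \max\!\left(0, \tfrac{6n - 2d}{d}\right),\quad \tfrac{2(n - l)}{d}. \]

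To collapse these four copies down to the two required, I would use the elementary fact that $\gamma S_2(1) \subseteq \co\{\alpha S_2(1), \beta S_2(1)\}$ whenever $\alpha \leq \gamma \leq \beta$, which holds because any $v \in S_2(1)$ satisfies $\gamma v = \tfrac{\beta - \gamma}{\beta - \alpha}(\alpha v) + \tfrac{\gamma - \alpha}{\beta - \alpha}(\beta v)$. Setting $\alpha = \max(0, \tfrac{6n-2d}{d})$ and $\beta = \tfrac{2n}{d}$, a short arithmetic check shows that each of the four coefficients above lies in $[\alpha, \beta]$: the third equals $\alpha$ and the second equals $\beta$; the first is bounded above by $\beta$ because $d \geq 2n + 1$; and the fourth is bounded below by $\alpha$ because $l \leq m$. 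This absorbs everything into $\co\{\alpha S_2(1), \beta S_2(1)\}$ and closes the induction.

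I expect the main obstacle to be purely bookkeeping rather than genuine mathematical difficulty: one has to navigate the $n = 1$ escape hatch, track the behavior of $\max(0, \cdot)$ as $6n - 2d$ changes sign, and confirm that the two reductions coming out of Lemma \ref{lemmaTypeI} both respect $2n' \leq d'$ so that strong induction is legal. All the conceptual content is already packaged into Lemma \ref{lemmaTypeI} and Propositions \ref{PropHalfSlice} and \ref{PropEasyCase}, and this proposition serves as the clean inductive consolidation.
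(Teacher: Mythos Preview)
Your argument is correct. The base cases and the verification that both reduced pairs $(d-1,n)$ and $(d-3l,n-l)$ satisfy $2n' \le d'$ (and are positive) are handled properly, and the four resulting scale factors all land in $[\alpha,\beta]$ exactly as you claim.

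The paper's proof uses the same ingredients (Lemma~\ref{lemmaTypeI}, Proposition~\ref{PropHalfSlice}, Proposition~\ref{PropEasyCase}) but organizes them differently: rather than a single application of Lemma~\ref{lemmaTypeI} followed by strong induction, it iterates Lemma~\ref{lemmaTypeI} explicitly $m = d - 2n$ times, producing a long list of terms $\tfrac{d_j}{d}S_{d_j}(n-l_j)$ for $j = 0,\dots,m$, each of which is then handled terminally by one of the two base propositions after a case split on whether $n \le m$ or $n > m$. Your inductive packaging is tidier and avoids that explicit unrolling and the attendant case analysis; the paper's version, on the other hand, makes the exact chain of scale factors visible (they march in steps of $\tfrac{2}{d}$ from $\alpha$ to $\beta$), which foreshadows the explicit realizations in Remark~\ref{typeIRemark}. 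Mathematically the two are the same recursion, yours wrapped in induction and theirs fully expanded.
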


\begin{proof} First notice that if $2n = d$ the statement is true by Proposition \ref{PropHalfSlice} since $\tfrac{6n-2d}{d} = \tfrac{2n}{d} = 1$ in that case. Otherwise $2n = d - m$ for some positive $m$. For each $j \in \{ 0, 1, 2, \dots, m \}$, define $l_j = \min(n-1, m-j)$ and $d_j = d - j - 3l_j$. By Lemma \ref{lemmaTypeI}, we have the following inclusions. \begin{eqnarray} S_d(n) & \subseteq & \co\{ \tfrac{d_0}{d} S_{d_0}(n-l_0), \tfrac{d-1}{d} S_{d-1}(n)\} \nonumber \\
\tfrac{d-1}{d} S_{d-1}(n) & \subseteq & \co\{ \tfrac{d_1}{d} S_{d_1}(n-l_1), \tfrac{d-2}{d} S_{d-2}(n)\} \nonumber \\
& \dots & \nonumber \\
\tfrac{d-m+1}{d} S_{d-m+1}(n) & \subseteq & \co\{ \tfrac{d_{m-1}}{d} S_{d_{m-1}}(n-l_{m-1}), \tfrac{d_m}{d} S_{d_m}(n)\}. \nonumber \end{eqnarray} Combining the above chain of inclusions, we get \[ S_d(n) \subseteq \co\{ \tfrac{d_0}{d} S_{d_0}(n-l_0), \tfrac{d_1}{d} S_{d_1}(n-l_1), \dots, \tfrac{d_{m-1}}{d} S_{d_{m-1}}(n-l_{m-1}), \tfrac{d_m}{d} S_{d_m}(n) \}. \]

We must consider two cases: when $n \leq m$ and when $n > m$. We begin with the case $n \leq m$. Then for $j = 0, 1, 2, \dots, m-n$ we have $l_j = min(n-1, m-j) = n-1$. By Proposition \ref{PropEasyCase} we have \begin{eqnarray} \tfrac{d_j}{d} S_{d_j}(n - l_j) & = & \tfrac{d_j}{d} S_{d_j}(1) \nonumber \\
& \subseteq & \co\{ \vec{0}, \tfrac{2}{d} S_2(1)\}. \nonumber \end{eqnarray} On the other hand, for each $j = m-n+1, m-n+2, \dots, m$ we have $l_j = m-j$, so that \begin{eqnarray} d_j & = & d +2j - 3m \nonumber \\ & = & (d-m) - 2m + 2j \nonumber \\ & = & 2n + 2j - 2m \\ & = & 2(n-l_j). \nonumber \end{eqnarray} From Proposition \ref{PropHalfSlice} we get $S_{d_j}(n-l_j)=S_2(1)$ in this case. Thus we have \[ \tfrac{d_j}{d} S_{d_j}(n-l_j) = \tfrac{2(n-m+j)}{d} S_2(1). \] It follows that \[ S_d(n) \subseteq \co\{ \vec{0}, \tfrac{2}{d} S_2(1), \tfrac{4}{d} S_2(1), \dots, \tfrac{2n}{d} S_2(1)\} \subseteq \co\{ \vec{0}, \tfrac{2n}{d} S_2(1)\}. \]

In the case when $n > m$, we simply observe that $l_0 = m$ implies that $2(n-l_0) = 2(n-m) = 6n-2d$. Repeating the arguments of the previous case, we get \begin{eqnarray} S_d(n) & \subseteq & \co\{ \tfrac{6n-2d}{d} S_2(1), \tfrac{6n-2d + 2}{d} S_2(1), \dots, \tfrac{2n}{d} S_2(1)\} \nonumber \\ & \subseteq & \co\{ \tfrac{6n-2d}{d} S_2(1), \tfrac{2n}{d} S_2(1)\} \nonumber \end{eqnarray} concluding the proof. \end{proof}

\begin{remark} \label{typeIRemark} \emph{We will see later that Proposition \ref{propTypeI} essentially characterizes the $\vec{r}$-slices of $C_{max}^s(3,2)$ for $\vec{r}=(r_0,r_0,r_0)$ with $r_0$ rational. We outline the argument here. Indeed, the sets $\max(0, \tfrac{6n - 2d}{d})S_2(1)$ and $\tfrac{2n}{d} S_2(1)$ are actually subsets of $S_d(n)$. To see this, we just need to demonstrate $d \times d$ projections of rank $n$ implementing the correlations represented by these sets. This can be done for $\tfrac{2n}{d} S_2(1) = \tfrac{2n}{d} S_{2n}(n)$ with projections of the form \begin{equation} \label{t1rep1} Q_i = (\hat{Q}_i \otimes I_n) \oplus 0_{d-2n} \end{equation} for $i=1,2,3$, where the $\hat{Q}_i$'s are $2 \times 2$ rank one projections. For $\max(0, \tfrac{6n - 2d}{d})S_2(1)$, we must consider the cases when $3n \leq d$ and $3n > d$ separately. When $3n \leq d$, this becomes the singleton set containing the correlation $(0,0,0)$. This correlation is realized with projections of the form \begin{equation} \label{t1rep2} P_i = \delta_{i,1} I_n \oplus \delta_{i,2} I_n \oplus \delta_{i,3} I_n \oplus 0_{d-3n}. \end{equation} In the other case, we may realize $\tfrac{6n - 2d}{d} S_2(1)$ using projections of the form \begin{equation} \label{t1rep3} P_i = (\hat{P}_i \otimes I_{3n-d}) \oplus I_{d-2n} \delta_{i,1} \oplus I_{d-2n} \delta_{i,2} \oplus I_{d-2n} \delta_{i,3} \end{equation} where again $\hat{P}_i$ denotes a $2 \times 2$ rank one projection. We can then build any correlation in the convex hull $\co\{ \max(0, \tfrac{6n - 2d}{d})S_2(1), \tfrac{2n}{d} S_2(1)\}$ by considering arbitrary traces on $\mathbb{M}_d \oplus \mathbb{M}_d$ and projections of the form $R_i = P_i \oplus Q_i$.} \end{remark}

The representation of $\co\{ \max(0, \tfrac{6n - 2d}{d})S_2(1), \tfrac{2n}{d} S_2(1)\}$ discussed above can also be used to characterize slices of $C_{max}^s(3,2)$ which are obtained from type I slices by swapping experiments or reversing outcomes. We mention one case here, which we will need to understand type II correlations.

\begin{lemma} \label{typeISwap} Suppose that $2n \leq d$. Then \[ S_d(n,d-n,d-n) \subseteq \co\{\max(0,\tfrac{6n-2d}{d}) S_2(1) + \vec{a}, \tfrac{2n}{d} S_2(1) + \vec{b}\} \] where $\vec{a} = (\tfrac{\min(n,d-2n)}{d},\tfrac{\min(n,d-2n)}{d},\tfrac{d-2n}{d})$ and $\vec{b} = (0,0,\tfrac{d-2n}{d})$. \end{lemma}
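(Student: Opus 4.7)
The plan is to reduce the lemma to the type I case via the substitution $Q_1 = P_1$, $Q_2 = I_d - P_2$, $Q_3 = I_d - P_3$, which converts a triple of projections of ranks $(n, d-n, d-n)$ on $\mathbb{C}^d$ into a triple of rank $n$ projections on $\mathbb{C}^d$. A direct expansion (using $\tr_d(I_d - Q) = 1 - \tr_d(Q)$ and $\tr_d(P(I_d - Q)) = \tr_d(P) - \tr_d(PQ)$) shows that the correlation transforms by the affine map
\[
\Phi(x, y, z) \;=\; \bigl(\tfrac{n}{d} - x,\; \tfrac{n}{d} - y,\; \tfrac{d-2n}{d} + z\bigr).
\]
In particular, if $(\tr_d(Q_1 Q_2), \tr_d(Q_1 Q_3), \tr_d(Q_2 Q_3)) \in C$, then the corresponding correlation for $(P_1, P_2, P_3)$ lies in $\Phi(C)$. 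Applying this with $C = S_d(n)$ and invoking Proposition \ref{propTypeI} gives
\[
S_d(n, d-n, d-n) \;\subseteq\; \Phi\bigl( \co\{\max(0, \tfrac{6n-2d}{d}) S_2(1),\; \tfrac{2n}{d} S_2(1)\}\bigr),
\]
which equals $\co\{\Phi(\max(0, \tfrac{6n-2d}{d}) S_2(1)),\, \Phi(\tfrac{2n}{d} S_2(1))\}$ by affineness of $\Phi$.

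The next step is to compute $\Phi(\alpha S_2(1))$ for each scaling factor $\alpha$. Here I would use the elliptope parameterization from Proposition \ref{PropHalfSlice} — a point of $S_2(1)$ has the form $\tfrac{1}{4}((x', y', z') + (1,1,1))$ for an elliptope point $(x', y', z')$ — together with the symmetry of the $3 \times 3$ elliptope under $(x', y', z') \mapsto (-x', -y', z')$. This symmetry is immediate from the determinantal description $1 + 2x'y'z' - x'^2 - y'^2 - z'^2 \geq 0$. Tracking the parameterization through the sign flips in the first two coordinates of $\Phi$ yields
\[
\Phi(\alpha S_2(1)) \;=\; \alpha S_2(1) + \bigl(\tfrac{n}{d} - \tfrac{\alpha}{2},\; \tfrac{n}{d} - \tfrac{\alpha}{2},\; \tfrac{d-2n}{d}\bigr).
\]

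Finally I would carry out the case analysis. Setting $\alpha = 2n/d$ gives the translation $(0, 0, \tfrac{d-2n}{d}) = \vec{b}$, matching the second set in the statement. For the first set, I split on whether $3n > d$ or $3n \leq d$. In the former case $\max(0, (6n-2d)/d) = (6n-2d)/d$ and the translation becomes $(\tfrac{d-2n}{d}, \tfrac{d-2n}{d}, \tfrac{d-2n}{d})$; since $\min(n, d-2n) = d-2n$ in this range, this equals $\vec{a}$. In the latter case the scaled set collapses to $\{\vec{0}\}$ and its $\Phi$-image is the single point $(\tfrac{n}{d}, \tfrac{n}{d}, \tfrac{d-2n}{d})$; since $\min(n, d-2n) = n$ here, this again equals $\vec{a}$.

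The main obstacle I expect is identifying and correctly applying the elliptope's sign-flip symmetry and reconciling the two subcases of $\vec{a}$ with the two cases of $\max(0, (6n-2d)/d)$. Once the symmetry is in hand, the rest reduces to bookkeeping with affine maps.
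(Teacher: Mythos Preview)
Your proposal is correct. Both your argument and the paper's start the same way: pass from $(P_1,P_2,P_3)$ of ranks $(n,d-n,d-n)$ to $(Q_1,Q_2,Q_3)=(P_1,I-P_2,I-P_3)$ of ranks $(n,n,n)$, invoke Proposition~\ref{propTypeI} to place the $Q$-correlation in $\co\{\max(0,\tfrac{6n-2d}{d})S_2(1),\,\tfrac{2n}{d}S_2(1)\}$, and then push the convex hull through the affine map $\Phi$. The difference is in how the image $\Phi(\alpha S_2(1))$ is computed. The paper does not write down $\Phi$ explicitly; instead it appeals to the concrete block-diagonal representations from Remark~\ref{typeIRemark} (equations~(\ref{t1rep1})--(\ref{t1rep3})), replaces $Q'_2,Q'_3$ by their complements inside those representations, and reads off the resulting correlations block by block. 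Your route is more intrinsic: you identify $\Phi$ with the sign-flip $(x',y',z')\mapsto(-x',-y',z')$ on elliptope coordinates and use that this is a symmetry of the elliptope to conclude $\Phi(\alpha S_2(1))=\alpha S_2(1)+(\tfrac{n}{d}-\tfrac{\alpha}{2},\tfrac{n}{d}-\tfrac{\alpha}{2},\tfrac{d-2n}{d})$ directly. Your method is shorter and avoids dependence on the explicit realizations in Remark~\ref{typeIRemark}; the paper's method has the advantage of keeping track of concrete projections, which is useful later (e.g.\ in Lemma~\ref{propTypeIISwap}) where further manipulations of those representations are needed.
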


\begin{proof} Assume $P_1, P_2$, and $P_3$ are $d \times d$ projections of rank $n$, $d-n$ and $d-n$, respectively. Then $Q_1=P_1, Q_2 = I - P_2,$ and $Q_3 = I - P_3$ are projections of rank $n$. We proceed by considering the cases when the vector $p=(\tr_d(Q_1 Q_2), \tr_d(Q_1 Q_3), \tr_d(Q_2 Q_3))$ is an element of $\max(0,\tfrac{6n-2d}{d}) S_2(1)$ or $\tfrac{2n}{d} S_2(1)$ from which the general case will follow.

First, assume $p \in \max(0,\tfrac{6n-2d}{d}) S_2(1)$. As shown in Remark \ref{typeIRemark}, we have \[ p = (\tr_d(Q'_1 Q'_2), \tr_d(Q'_1 Q'_3), \tr_d(Q'_2 Q'_3)) \] where $Q'_i$ has the form given in equation ($\ref{t1rep2}$) if $3n \leq d$ or ($\ref{t1rep3}$) otherwise. Set $P'_i = I - Q'_i$ for $i=\{2,3\}$ and $P'_1 = Q'_1$. First assume $3n > d$. Then each $P'_i$ has the form \[ P'_i = (\hat{P}_i \otimes I_{3n-d}) \oplus (I_{d-2n} \oplus I_{d-2n} \delta_{i,3} \oplus I_{d-2n} \delta_{i,2}). \] Using $\tr_d(A \oplus B) = \tfrac{d_1}{d} \tr_{d_1}(A) + \tfrac{d_2}{d}\tr_{d_2}(B)$ for $d_1 = 6n - 2d$ and $d_2 = 3d-6n$, we observe that \begin{eqnarray} (\tr_d(P'_1 P'_2), \tr_d(P'_1 P'_3), \tr_d(P'_2 P'_3)) & \in & \tfrac{6n-2d}{d} S_2(1) + (\tfrac{d-2n}{d}, \tfrac{d-2n}{d}, \tfrac{d-2n}{d}). \nonumber \end{eqnarray} When $3n \leq d$, the $P'_i$ have the form \[ P'_i = I_n \oplus \delta_{i,3} I_n \oplus \delta_{i,2} I_n \oplus (\delta_{i,2} + \delta_{i,3}) I_{d-3n} \] from which it follows that \begin{eqnarray} (\tr_d(P'_1 P'_2), \tr_d(P'_1 P'_3), \tr_d(P'_2 P'_3)) & = & (\tfrac{n}{d}, \tfrac{n}{d}, \tfrac{d-2n}{d}). \nonumber \end{eqnarray} Finally, if we assume that $p \in \tfrac{2n}{d} S_2(1)$, then we have \[ p = (\tr_d(Q'_1 Q'_2), \tr_d(Q'_1 Q'_3), \tr_d(Q'_2 Q'_3)) \] with $Q'_i$ having the form given in equation ($\ref{t1rep1}$) above. It follows that for $P'_1 = Q'_1, P'_i = I - Q'_i$ ($i=2,3$) we have \[ P'_i = (\hat{Q}_i \otimes I_n) \oplus (\delta_{i,2} + \delta_{i,3}) I_{d-2n} \] and thus \begin{eqnarray} (\tr_d(P'_1 P'_2), \tr_d(P'_1 P'_3), \tr_d(P'_2 P'_3)) & \in & \tfrac{2n}{d} S_2(1) + (0,0, \tfrac{d-2n}{d}). \nonumber \end{eqnarray} The general result follows by considering convex combinations of the above cases. \end{proof}

\subsection{Type II slices}

We are ready to consider the type II slices of $C_{max}^s(3,2)$. Our strategy will be to show that this set can be described in terms of the type I slices and in terms of the $(r,1-r,1-r)$-slices considered in Lemma \ref{typeISwap}.

The next lemma applies Lemma \ref{mainLemma} to the setting of type II slices. It will allow us to describe arbitrary type II slices in terms of type I slices and the swapped type I slices of Proposition \ref{typeISwap}. Since the proof is similar to the proof of Lemma \ref{lemmaTypeI} we leave some details to the reader.

\begin{lemma} \label{lemmaTypeII} Suppose that $k,n,m$ and $d$ are positive integers with $2n+k = d - m$ and set $l = \min(k,m)$. Then \[ S_d(n,n+k,n+k) \subseteq \co\{ \tfrac{d-1}{d} S_{d-1}(n,n+k,n+k), \tfrac{d-2l}{d} S_{d-2l}(n, n+k-l, n+k-l)\}. \] \end{lemma}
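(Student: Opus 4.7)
The plan is to imitate the proof of Lemma \ref{lemmaTypeI}, but now the symmetry is different: we want to keep $n_1=n$ fixed while lowering the two equal ranks $n_2 = n_3 = n+k$ symmetrically, each by $l$. So I would iterate Lemma \ref{mainLemma} alternately, first using the version that decreases $n_3$ (which requires $n_1+n_2 < d'$ at the current dimension $d'$), then the version that decreases $n_2$ (which requires $n_1+n_3 < d'$). After $2j$ applications the ``main branch'' reaches the state $S_{d-2j}(n,\,n+k-j,\,n+k-j)$, and the arithmetic to keep track of is clean: after $2j$ steps the current dimension is $d-2j$ and the two relevant sums $n_1+n_2$ and $n_1+n_3$ are $2n+k-j$, so the slack available for the next reduction is either $m-j$ or $m-j+1$ depending on parity. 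In particular both remain positive as long as $j<l$, and the very last reduction at step $2l$ requires only $l\le m$, which follows from $l=\min(k,m)$.

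More concretely, I would write out the chain
\begin{align*}
S_d(n,n{+}k,n{+}k) & \subseteq \tfrac{d-1}{d}\co\{S_{d-1}(n,n{+}k,n{+}k),\,S_{d-1}(n,n{+}k,n{+}k{-}1)\}, \\
\tfrac{d-1}{d}S_{d-1}(n,n{+}k,n{+}k{-}1) & \subseteq \tfrac{d-2}{d}\co\{S_{d-2}(n,n{+}k,n{+}k{-}1),\,S_{d-2}(n,n{+}k{-}1,n{+}k{-}1)\}, \\
\tfrac{d-2}{d}S_{d-2}(n,n{+}k{-}1,n{+}k{-}1) & \subseteq \tfrac{d-3}{d}\co\{S_{d-3}(n,n{+}k{-}1,n{+}k{-}1),\,S_{d-3}(n,n{+}k{-}1,n{+}k{-}2)\},
\end{align*}
and so on, stopping at the step that produces $\tfrac{d-2l}{d}S_{d-2l}(n,n{+}k-l,n{+}k-l)$. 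Concatenating gives
\[ S_d(n,n{+}k,n{+}k) \subseteq \co\Bigl\{\tfrac{d-1}{d}S_{d-1}(n,n{+}k,n{+}k),\;\tfrac{d-2}{d}S_{d-2}(n,n{+}k,n{+}k{-}1),\;\dots,\;\tfrac{d-2l}{d}S_{d-2l}(n,n{+}k{-}l,n{+}k{-}l)\Bigr\}. \]

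Then I would invoke Lemma \ref{LemmaBlowUp} to collapse every intermediate term into the first one: each $\tfrac{d-j-1}{d}S_{d-j-1}(n,\alpha,\beta)$ appearing in the chain is contained in $\tfrac{d-j}{d}S_{d-j}(n,\alpha',\beta')$ where exactly one of $\alpha',\beta'$ is obtained from $\alpha,\beta$ by adding $1$; iterating this blow-up absorbs every term of the chain except the last into $\tfrac{d-1}{d}S_{d-1}(n,n{+}k,n{+}k)$. This yields exactly the claimed inclusion.

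The only real obstacle is the bookkeeping around the slack conditions: one must verify that at each odd step $n_1+n_2<d'$ and at each even step $n_1+n_3<d'$, and that the process terminates precisely when $j=l=\min(k,m)$ (so that if $l=m$ the slack is exhausted, and if $l=k$ we reach the type I situation $(n,n,n)$). The symmetric situation $n_2=n_3$ keeps the branch arithmetic very clean, so once the alternation pattern is laid down the inductive/iterative argument is essentially the same as in Lemma \ref{lemmaTypeI}, which is why details can reasonably be left to the reader.
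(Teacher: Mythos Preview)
Your proposal is correct and follows essentially the same approach as the paper: an alternating application of Lemma~\ref{mainLemma} to peel off one from $n_2$ and $n_3$ in turn, followed by Lemma~\ref{LemmaBlowUp} to absorb all intermediate side-branches into $\tfrac{d-1}{d}S_{d-1}(n,n{+}k,n{+}k)$. The only cosmetic difference is that the paper lowers $n_2$ first and then $n_3$ in each pair of steps, whereas you do it in the opposite order; since $n_2=n_3$ at every even stage this is immaterial, and your bookkeeping of the slack conditions (steps $2j{-}1$ and $2j$ both requiring $m\ge j$, hence the chain running exactly through $j=l=\min(k,m)$) is correct.
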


\begin{proof} By repeated application of Lemma \ref{mainLemma} we have \begin{eqnarray} S_d(n,n+k,n+k) & \subseteq & \tfrac{d-1}{d} \co \{ S_{d-1}(n,n+k,n+k), \nonumber \\
& & \qquad \qquad S_{d-1}(n,n+k-1,n+k) \}, \nonumber \\
\tfrac{d-1}{d}S_{d-1}(n,n+k-1,n+k) & \subseteq & \tfrac{d-2}{d} \co \{ S_{d-2}(n, n+k-1, n+k), \nonumber \\
& & \qquad \qquad S_{d-2}(n, n+k-1, n+k-1) \}, \nonumber \\
\tfrac{d-2}{d} S_{d-2}(n, n+k-1, n+k-1) & \subseteq & \tfrac{d-3}{d} \co \{  S_{d-3}(n, n+k-1,n+k-1), \nonumber \\
& & \qquad \qquad S_{d-3}(n,n+k-2,n+k-1) \}, \nonumber \\
\tfrac{d-3}{d} S_{d-3}(n,n+k-2,n+k-1) & \subseteq & \tfrac{d-4}{d} \co \{  S_{d-4}(n, n+k-2,n+k-1), \nonumber \\
& & \qquad \qquad S_{d-4}(n,n+k-2,n+k-2) \}, \nonumber \\
& \dots & \nonumber \end{eqnarray}
\begin{eqnarray} \tfrac{d-2l+2}{d} S_{d-2l+2}(n,n+k-l+1,n+k-l+1) & \subseteq & \tfrac{d-2l + 1}{d} \co \{ S_{d-2l + 1}(n,n+k-l+1,n+k-l+1), \nonumber \\
& & \qquad \qquad S_{d-2l + 1}(n, n+k-l, n+k-l+1) \}, \nonumber \\
\tfrac{d-2l + 1}{d} S_{d-2l + 1}(n, n+k-l, n+k-l+1) & \subseteq & \tfrac{d-2l}{d} \co \{ S_{d-2l}(n, n+k-l, n+k-l+1), \nonumber \\
& & \qquad \qquad S_{d-2}(n, n+k-l, n+k-l) \}. \nonumber \end{eqnarray} Furthermore, by Lemma \ref{LemmaBlowUp} we have \begin{eqnarray} \tfrac{d-2l}{d} S_{d-2l}(n, n+k-l, n+k-l+1) & \subseteq & \tfrac{d-2l + 1}{d} S_{d-2l + 1}(n,n+k-l+1,n+k-l+1) \nonumber \\
& \subseteq & \tfrac{d-2l + 2}{d} S_{d-2l+2}(n,n+k-l+1,n+k-l+2) \nonumber \\
& \dots & \nonumber \\ & \subseteq & \tfrac{d-3}{d} S_{d-3}(n, n+k-1,n+k-1) \nonumber \\
& \subseteq & \tfrac{d-2}{d} S_{d-2}(n, n+k-1, n+k) \nonumber \\
& \subseteq & \tfrac{d-1}{d} S_{d-1}(n,n+k,n+k). \nonumber \end{eqnarray} The statement follows by combining first chain of inclusions with the second. \end{proof}

We are now ready to characterize type II slices.

\begin{proposition} \label{propTypeII} Suppose that $n,k$ and $d$ are positive integers with $2(n + k) \leq d$. Define convex sets \[ A_1 := \max(0, \tfrac{6n+4k-2d}{d})S_2(1), \qquad A_2:= \tfrac{2n}{d} S_2(1) + (0,0,[0,\tfrac{k}{d}]). \] Then $S_d(n,n+k,n+k) \subseteq \co\{ A_1, A_2 \}$. \end{proposition}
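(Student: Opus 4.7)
The plan is to mimic the proof of Proposition \ref{propTypeI}: iteratively apply Lemma \ref{lemmaTypeII} to express $S_d(n,n+k,n+k)$ as a subset of a convex hull of lower-dimensional pieces, and then show each piece lies in $\co\{A_1, A_2\}$. Set $m := d - 2n - k$; by hypothesis $m \geq k$. For each $j = 0, 1, \ldots, m-1$ let $l_j := \min(k, m-j)$ and $d_j := d - j - 2l_j$. Chaining $m$ applications of Lemma \ref{lemmaTypeII} to the residual at each step (as in the proof of Proposition \ref{propTypeI}) gives
\[ S_d(n,n+k,n+k) \subseteq \co\bigl\{ \tfrac{d_0}{d} S_{d_0}(n, n+k-l_0, n+k-l_0), \ldots, \tfrac{d_{m-1}}{d} S_{d_{m-1}}(n, n+k-l_{m-1}, n+k-l_{m-1}), \tfrac{d-m}{d} S_{d-m}(n,n+k,n+k) \bigr\}. \]
For $j = 0, 1, \ldots, m-k$ we have $l_j = k$, so the $j$-th piece is a type I slice $\tfrac{d_j}{d} S_{d_j}(n)$. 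For the remaining spin-offs and the terminal piece (which we index $j = m$), a direct computation shows $d_j = 2n + k_j$ with $k_j := k - (m-j) \in \{1, \ldots, k\}$, so these pieces fit the setup $S_{d'}(n, d'-n, d'-n)$ of Lemma \ref{typeISwap} with $d' = d_j$.

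The type I pieces are handled by Proposition \ref{propTypeI}, which gives $\tfrac{d_j}{d} S_{d_j}(n) \subseteq \co\{\max(0, c_j) S_2(1), \tfrac{2n}{d} S_2(1)\}$ with $c_j = \tfrac{6n+4k-2d+2j}{d}$ ranging between $c_0 = \tfrac{6n+4k-2d}{d}$ (the coefficient of $A_1$) and $\tfrac{2n}{d}$. Each $c_j v$ for $v \in S_2(1)$ is a convex combination of $c_0 v \in A_1$ (or $\vec{0} \in A_1$ when $c_0 < 0$) with $\tfrac{2n}{d} v \in A_2$, placing the piece in $\co\{A_1, A_2\}$. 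For each swapped type I piece Lemma \ref{typeISwap} yields the decomposition $\tfrac{d_j}{d} S_{d_j}(n, d_j - n, d_j - n) \subseteq \co\{\max(0, \tfrac{2n - 2k_j}{d}) S_2(1) + \vec{a}_j,\, \tfrac{2n}{d} S_2(1) + \vec{b}_j\}$ with $\vec{a}_j = (\tfrac{\min(n, k_j)}{d}, \tfrac{\min(n, k_j)}{d}, \tfrac{k_j}{d})$ and $\vec{b}_j = (0, 0, \tfrac{k_j}{d})$. The second set lies in $A_2$ since $k_j \leq k$.

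The main obstacle is showing the first set of the swapped-type-I decomposition also lies in $A_2$. When $k_j \geq n$ this set collapses to the point $(\tfrac{n}{d}, \tfrac{n}{d}, \tfrac{k_j}{d}) = \tfrac{2n}{d}(\tfrac{1}{2}, \tfrac{1}{2}, \tfrac{1}{2}) + (0, 0, \tfrac{k_j - n}{d}) \in A_2$, using that $0 \leq k_j - n \leq k$. When $k_j < n$, the key geometric observation is that the affine contraction $\phi(v) := \tfrac{n - k_j}{n} v + \tfrac{k_j}{2n}(1, 1, 1)$ carries $S_2(1)$ into itself: in the elliptope coordinates of Proposition \ref{PropHalfSlice}, $\phi$ corresponds to a convex combination with $(1, 1, 1)$, a point lying in the elliptope. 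A short computation then gives $\tfrac{2n}{d} \phi(v) = \tfrac{2n - 2k_j}{d} v + \tfrac{k_j}{d}(1, 1, 1)$, so
\[ \tfrac{2n - 2k_j}{d} S_2(1) + \tfrac{k_j}{d}(1, 1, 1) \;=\; \tfrac{2n}{d} \phi(S_2(1)) \;\subseteq\; \tfrac{2n}{d} S_2(1) \;\subseteq\; A_2. \]
Combining the containments across all type I and swapped type I pieces of the iterated convex hull yields $S_d(n, n+k, n+k) \subseteq \co\{A_1, A_2\}$.
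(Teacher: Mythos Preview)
Your proof is correct and follows essentially the same route as the paper's: the same $m$-fold iteration of Lemma \ref{lemmaTypeII}, the same split into type I pieces ($j \le m-k$) handled via Proposition \ref{propTypeI} and swapped type I pieces ($j > m-k$) handled via Lemma \ref{typeISwap}, and the same verification that both parts of each swapped piece land in $A_2$. The one noteworthy difference is your treatment of the case $k_j < n$: where the paper builds explicit $2n \times 2n$ rank-$n$ projections of the form $(\hat P_i \otimes I_{3n-d_j}) \oplus (I_{d_j-2n} \oplus 0_{d_j-2n})$ to witness membership in $\tfrac{2n}{d}S_2(1)$, you instead observe that the contraction $\phi(v)=\tfrac{n-k_j}{n}v+\tfrac{k_j}{n}(\tfrac12,\tfrac12,\tfrac12)$ is a convex combination with a point of $S_2(1)$ and therefore maps $S_2(1)$ into itself. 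Both arguments yield $\tfrac{2n-2k_j}{d}S_2(1)+\tfrac{k_j}{d}(1,1,1)\subseteq \tfrac{2n}{d}S_2(1)$; yours is a bit cleaner since it uses only the convexity of the elliptope rather than an ad hoc construction.
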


\begin{proof} Assume $2n + k = d - m$. Since $2(n+k) \leq d$, we see that $m \geq k$. For each $j \in \{ 0,1,2, \dots, m\}$, define $l_j = \min(k,m-j)$ and $d_j = d - j - 2l_j$, and $k_j = k - l_j$. By Lemma \ref{lemmaTypeII}, we have the following inclusions.
\begin{eqnarray}
S_d(n,n+k,n+k) & \subseteq & \co\{ \tfrac{d-1}{d} S_{d-1}(n,n+k,n+k), \nonumber \\
& & \qquad \qquad \tfrac{d_0}{d} S_{d_0}(n, n+k_0, n+k_0)\} \nonumber \\
\tfrac{d-1}{d} S_{d-1}(n,n+k,n+k) & \subseteq & \co\{ \tfrac{d-2}{d} S_{d-2}(n,n+k,n+k), \nonumber \\
& & \qquad \qquad \tfrac{d_1}{d} S_{d_1}(n, n+k_1, n+k_1)\} \nonumber \\
& \dots & \nonumber \\
\tfrac{d-m+1}{d} S_{d-m+1}(n,n+k,n+k) & \subseteq & \co\{ \tfrac{d_m}{d} S_{d_m}(n,n+k,n+k), \nonumber \\
& & \qquad \tfrac{d_{m-1}}{d} S_{d_{m-1}}(n, n+k_{m-1}, n+k_{m-1}) \}. \nonumber \end{eqnarray} Combining the above inclusions gives us \begin{eqnarray} S_d(n,n+k,n+k) & \subseteq & \co\{ \tfrac{d_0}{d} S_{d_0}(n,n+k_0,n+k_0), \tfrac{d_1}{d} S_{d_1}(n,n+k_1,n+k_1), \dots \nonumber \\
& & \tfrac{d_{m-1}}{d} S_{d_{m-1}}(n,n+k_{m-1},n+k_{m-1}), \tfrac{d_m}{d} S_{d_m}(n,n+k,n+k)\}. \nonumber \end{eqnarray}

Assume $m = k + m'$. Then for each $j \in \{0,1,\dots, m'\}$ we have $l_j = \min(k,m-j) = k$, and hence $k_j = 0$ for $j \leq m'$. For $j > m'$, $l_j = m-j$, so that \begin{eqnarray} 2n + k_j & = & 2n + k + j - m \nonumber \\
& = & d - 2m + j \nonumber \\
& = & d_j. \nonumber \end{eqnarray} Obviously $2n+k_j = d_j$ if and only if $n + k_j = d_j - n$. It follows that for each $j > m'$ \[ \tfrac{d_j}{d} S_{d_j}(n,n+ k_j, n+k_j) = \tfrac{d_j}{d} \pi_{2,3}S_{d_j}(n) \] where $\pi_{2,3} = \pi_2 \circ \pi_3$ denotes the the affine map given by reversing the outcomes of the second experiment and the third experiment. Hence \begin{eqnarray} S_d(n,n+k,n+k) & \subseteq & \co\{ \tfrac{d_0}{d} S_{d_0}(n), \tfrac{d_1}{d} S_{d_1}(n), \dots, \tfrac{d_{m'}}{d} S_{d_{m'}}(n), \nonumber \\& & \tfrac{d_{m'+1}}{d} \pi_{2,3} S_{d_{m'+1}}(n), \dots, \tfrac{d_m}{d} \pi_{2,3} S_{d_m}(n)\}.  \end{eqnarray}

We need to show that for each $j \leq m'$ we have \[ \tfrac{d_j}{d} S_{d_j}(n) \subseteq \co\{A_1, A_2\} \] and that for each $j > m'$ we have \[ \tfrac{d_j}{d} \pi_{2,3} S_{d_j}(n) \subseteq \co\{A_1, A_2\}. \] In the case when $j \leq m'$, we have $d_j = d-j-2k$. Hence \[ \tfrac{d_j}{d} S_{d_j}(n) \subseteq \co\{ \max(0, \tfrac{6n + 4k + 2j -2d}{d})S_2(1), \tfrac{2n}{d} S_2(1)\} \] by Proposition \ref{propTypeI} and the equality $6n-2d_j = 6n + 4k + 2j - 2d$. Since $m' = m - k$, we see that
\begin{eqnarray} 6n + 4k - 2d & \leq & 6n + 4k + 2j - 2d \nonumber \\
& \leq & 6n + 4k + 2(m-k) - 2d \nonumber \\
& = & 6n + 2k + 2m - 2d \nonumber \\
& = & 2n + 2(2n+k) + 2m - 2d \nonumber \\
& = & 2n + 2(d-m) + 2m - 2d \nonumber \\
& = & 2n \nonumber \end{eqnarray} for each $j \leq m'$. So for each $j \leq m'$, $6n + 4k - 2d \leq 6n +4k +2j - 2d \leq 2n$, which implies \begin{eqnarray} \tfrac{d_j}{d}S_{d_j}(n) & \subseteq & \co\{ \max(0, \tfrac{6n + 4k -2d}{d}) S_2(1), \tfrac{2n}{d} S_2(1)\} \nonumber \\ & \subseteq & \co\{A_1, A_2\}. \nonumber \end{eqnarray}

Lastly we consider the case $j > m'$. By Lemma \ref{typeISwap}, we have \[ \tfrac{d_j}{d} \pi_{2,3} S_{d_j}(n) \subseteq \co\{ \max(0, \tfrac{6n-2d_j}{d}) S_2(1) + \vec{a}_j, \tfrac{2n}{d} S_2(1) + \vec{b}_j\} \] where $\vec{a}_j = (\tfrac{\min(n,d_j - 2n)}{d}, \tfrac{\min(n,d_j - 2n)}{d}, \tfrac{d_j - 2n}{d})$ and $\vec{b}_j = (0, 0, \tfrac{d_j - 2n}{d})$. We will show that \[ \co\{ \max(0, \tfrac{6n-2d_j}{d}) S_2(1) + \vec{a}_j, \tfrac{2n}{d} S_2(1) + \vec{b}_j\} \subseteq A_2. \] Let $p \in \max(0,\tfrac{6n-2d_j}{d}) S_2(1) + \vec{a}_j$. If $d_j < 3n$ then $6n - 2d_j > 0$ so that\[ p \in \tfrac{6n-2d_j}{d}S_2(1) + (\tfrac{d_j-2n}{d},\tfrac{d_j-2n}{d},\tfrac{d_j-2n}{d}).\] Thus $ \tfrac{d}{2n} p = (\tr_{2n}(P_1 P_2), \tr_{2n}(P_1 P_3), \tr_{2n}(P_2 P_3) )$ where $P_i = (\hat{P}_i \otimes I_{3n-d_j}) \oplus (I_{d_j - 2n} \oplus 0_{d_j - 2n})$ with $\hat{P}_i$ a $2 \times 2$ projection. Notice that $P_i$ is a rank $n$ matrix of size $2n \times 2n$. Hence $\tfrac{d}{2n}p \in S_2(1)$ by Proposition \ref{PropHalfSlice}. It follows that $p \in \tfrac{2n}{d} S_2(1) \subseteq A_2$. On the other hand, if $d_j \geq 3n$, then $6n - 2d_j \leq 0$ and $p = \vec{a}_j$. Using projections of the form $P_i =  I_{n} \oplus 0_{n}$ we see that $(\tfrac{n}{d},\tfrac{n}{d},\tfrac{n}{d}) \in \tfrac{2n}{d} S_2(1)$. Since $d_j = 2n + k_j \leq 2n + k$, we have $d_j - 3n \leq k - n \leq k$. It follows that $p=\vec{a}_j = (\tfrac{n}{d},\tfrac{n}{d},\tfrac{n}{d}) + (0,0, \tfrac{d_j-3n}{d}) \in \tfrac{2n}{d} S_2(1) + (0,0,[0,\tfrac{k}{d}]) = A_2$. Finally, in either case we have $d_j - 2n = k_j \leq k$, and hence \[ \tfrac{2n}{d} S_2(1) + \vec{b}_j \subseteq A_2. \] We conclude that for each $j > m'$, \[ \tfrac{d_j}{d} \pi_{2,3}S_{d_j}(n) \subseteq A_2  \] and the proposition follows. \end{proof}

\begin{remark} \label{typeIIRemark} \emph{Similar to the case of type I correlations, it turns out that the sets $A_1$ and $A_2$ from Proposition \ref{propTypeII} are actually subsets of $S_d(n,n+k,n+k)$. Indeed, the set $A_2 = \tfrac{2n}{d} S_2(1) + (0,0,[0,\tfrac{k}{d}])$ can be implemented with projections of the form \begin{equation} \label{t2rep1} P_i = (\hat{P}_i \otimes I_n) \oplus (\delta_{i,2}+\delta_{i,3})(\hat{Q}_i \otimes I_k) \oplus 0_{d-2n-2k} \end{equation} for $2 \times 2$ rank-1 projections $\hat{P}_i$ and $\hat{Q}_i$. When $3n + 2k - d \leq 0$, we have \[ A_1 = \max(0, \tfrac{6n+4k-2d}{d})S_2(1) = (0,0,0), \] which can be implemented with projections of the form \begin{equation} \label{t2rep2} Q_i = \delta_{i,1} I_n \oplus \delta_{i,2} I_{n+k} \oplus \delta_{i,3} I_{n+k} \oplus 0_{d-3n-2k}. \end{equation} Otherwise, $A_1 = \tfrac{6n+4k-2d}{d} S_2(1)$ and can be implemented with projections of the form \begin{equation} \label{t2rep3} Q_i = (\hat{R}_i \otimes I_{3n+2k-d}) \oplus \delta_{i,1} I_{d-2n-2k} \oplus \delta_{i,2} I_{d-2n-k} \oplus \delta_{i,3} I_{d-2n-k}. \end{equation} To implement correlations in the convex hull of these sets, one can consider direct sums of the projections above together with arbitrary traces on $\mathbb{M}_d \oplus \mathbb{M}_d$.} \end{remark}

As with type I slices, the representations provided in Remark \ref{typeIIRemark} allow us to find expressions for correlations obtained from type II correlations by swapping experiments or reversing outcomes. We record one special case here, which will help us determine the structure of the type III slices.

\begin{lemma} \label{propTypeIISwap} Suppose that $2(n + k) \leq d$. Then \[ S_d(n, n + k, d - (n+k)) \subseteq \co\{ \max(0, \tfrac{6n + 4k - 2d}{d}) S_2(1) + \vec{a}, \tfrac{2n}{d} S_2(1) + (0,0,[0,\tfrac{k}{d}]) \} \] where $\vec{a} = (0, \tfrac{\min(n, d-2n-2k)}{d}, \tfrac{\min(n,d-2n-2k) + k}{d}).$ \end{lemma}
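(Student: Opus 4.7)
The plan is to mirror the strategy of Lemma \ref{typeISwap}, now applied one level up: take a triple of projections realizing a type II correlation (via the explicit forms in Remark \ref{typeIIRemark}) and flip the third one by replacing it with its complement. Concretely, given rank-$n$, rank-$(n+k)$, rank-$(d-(n+k))$ projections $P_1, P_2, P_3$, set $Q_1 = P_1$, $Q_2 = P_2$, $Q_3 = I - P_3$, so that $Q_1, Q_2, Q_3$ have ranks $n, n+k, n+k$ and the triple $(\tr_d(Q_1 Q_2), \tr_d(Q_1 Q_3), \tr_d(Q_2 Q_3))$ lies in $S_d(n, n+k, n+k)$. By Proposition \ref{propTypeII} and Remark \ref{typeIIRemark} it suffices to handle the extreme points of the convex hull $\co\{A_1, A_2\}$ realized by the projections (\ref{t2rep1}), (\ref{t2rep2}) and (\ref{t2rep3}), since the general case will follow by taking direct sums with an arbitrary trace on $\mathbb{M}_d \oplus \mathbb{M}_d$.

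Next I would work through the three sub-cases by direct computation, using $\tr_d(P_1 P_3) = \tr_d(Q_1) - \tr_d(Q_1 Q_3)$ and similarly for $\tr_d(P_2 P_3)$. When the $Q_i$ take the form (\ref{t2rep1}), the $P_3 = I - Q_3$ becomes $((I-\hat{P}_3) \otimes I_n) \oplus ((I-\hat{Q}_3) \otimes I_k) \oplus I_{d-2n-2k}$, and a block-by-block trace calculation gives
\[ (\tr_d(P_1 P_2), \tr_d(P_1 P_3), \tr_d(P_2 P_3)) = \tfrac{2n}{d}\bigl(\tr_2(\hat{P}_1 \hat{P}_2), \tr_2(\hat{P}_1 \hat{P}_3'), \tr_2(\hat{P}_2 \hat{P}_3')\bigr) + \bigl(0, 0, \tfrac{2k}{d}\tr_2(\hat{Q}_2 \hat{Q}_3')\bigr), \]
with $\hat{P}_3' = I - \hat{P}_3$ and $\hat{Q}_3' = I - \hat{Q}_3$ still rank-one. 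This lands in $\tfrac{2n}{d} S_2(1) + (0,0,[0,\tfrac{k}{d}])$, which is the second set in the target hull.

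The case (\ref{t2rep2}), valid when $3n+2k \leq d$, is a direct computation giving exactly the single point $(0, \tfrac{n}{d}, \tfrac{n+k}{d})$; since $d-2n-2k \geq n$ in this regime, this coincides with $\vec{a}$. The main obstacle is the case (\ref{t2rep3}), applicable when $3n + 2k > d$. Here $Q_3$ has a Kronecker block $\hat{R}_3 \otimes I_{3n+2k-d}$ plus an identity block of size $d-2n-k$ in the fourth summand, and flipping turns these into $(I-\hat{R}_3) \otimes I_{3n+2k-d}$ and $0_{d-2n-k}$ respectively while leaving the second and third summands (of sizes $d-2n-2k$ and $d-2n-k$) as identities. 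A careful trace computation then yields
\[ (\tr_d(P_1 P_2), \tr_d(P_1 P_3), \tr_d(P_2 P_3)) = \tfrac{6n+4k-2d}{d}\bigl(\tr_2(\hat{R}_1\hat{R}_2), \tr_2(\hat{R}_1 \tilde{R}_3), \tr_2(\hat{R}_2 \tilde{R}_3)\bigr) + \bigl(0, \tfrac{d-2n-2k}{d}, \tfrac{d-2n-k}{d}\bigr), \]
where $\tilde{R}_3 = I - \hat{R}_3$. Since $3n+2k > d$ forces $d-2n-2k < n$, so $\min(n, d-2n-2k) = d-2n-2k$, the constant offset is precisely $\vec{a}$, and the variable part lies in $\tfrac{6n+4k-2d}{d} S_2(1)$ by Proposition \ref{PropHalfSlice}.

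Finally, since every correlation in $S_d(n, n+k, n+k)$ is a convex combination of correlations from (\ref{t2rep1})--(\ref{t2rep3}) as noted in Remark \ref{typeIIRemark}, taking the corresponding convex combinations of the flipped projections (via direct sums with an appropriate trace) shows that every correlation in $S_d(n, n+k, d-(n+k))$ lies in the convex hull of the two displayed sets, completing the argument. The only delicate bookkeeping is verifying that the offset arising in the flipped (\ref{t2rep3}) case matches $\vec{a}$ on the nose in both regimes of $\min(n, d-2n-2k)$.
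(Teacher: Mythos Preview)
Your proposal is correct and follows essentially the same route as the paper: both arguments flip the third projection to pass from $S_d(n,n+k,d-(n+k))$ to $S_d(n,n+k,n+k)$, invoke Proposition~\ref{propTypeII} and Remark~\ref{typeIIRemark} to reduce to the explicit representatives (\ref{t2rep1}), (\ref{t2rep2}), (\ref{t2rep3}), and then compute block by block what happens when $Q_3$ is replaced by $I-Q_3$ in each case. Your case computations and the verification that the resulting offset equals $\vec a$ in both regimes of $\min(n,d-2n-2k)$ match the paper's; the only cosmetic difference is that you phrase the passage to general convex combinations via ``direct sums with an arbitrary trace,'' whereas the cleaner justification (implicit in both proofs) is simply that $\pi_3:(x,y,z)\mapsto(x,\tfrac{n}{d}-y,\tfrac{n+k}{d}-z)$ is affine and hence commutes with convex hulls.
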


\begin{proof} By Remark \ref{typeIIRemark}, we may implement any correlation in $A_2$ with projections of the form $P_i$ given in (\ref{t2rep1}) and any correlation in $A_1$ with projections of the form $Q_i$ given in (\ref{t2rep2}) or (\ref{t2rep3}). We will show that replacing $P_3$ with $I-P_3$ yields a correlation in $\tfrac{2n}{d} S_2(1) + (0,0,[0,\tfrac{k}{d}]) = A_2$ and replacing $Q_3$ with $I-Q_3$ yields a correlation in $\max(0, \tfrac{6n + 4k - 2d}{d}) S_2(1) + \vec{a}$. The Lemma follows from these observations and Proposition \ref{propTypeII}.

First, observe that $I-P_3$ has the form \[ I-P_3 = ((I_2 - \hat{P}_i) \otimes I_n) \oplus (\delta_{i,2}+\delta_{i,3})((I-\hat{Q}_i) \otimes I_k) \oplus I_{d-2n-2k}. \] It follows that \[ (Tr(P_1 P_2), Tr(P_1 (I-P_3)), Tr(P_2(I-P_3))) = (Tr(P_1 P_2), Tr(P_1 R_3), Tr(P_2, R_3)) \] where \[ R_3 = ((I_2 - \hat{P}_i) \otimes I_n) \oplus ((I-\hat{Q}_i) \otimes I_k) \oplus 0_{d-2n-2k} \] since $P_1$ and $P_2$ each have the form given by (\ref{t2rep1}) and hence have a final summand of $0_{d-2n-2k}$. It follows that replacing $P_3$ by $I-P_3$ yields another correlation in $A_2$.

Next, observe that $I-Q_3$ has the form \[ I - Q_3 = I_n \oplus I_{n+k} \oplus 0_{n+k} \oplus I_{d - 3n - 2k} \] when $3n+2k-d \leq 0$ and \[ I - Q_3 = ( (I-\hat{R}_3) \otimes I_{3n + 2k - d}) \oplus (I_{d-2n-2k} \oplus I_{d-2n-k} \oplus 0_{d-2n-k}) \] otherwise. In the first case we get \[ \tfrac{1}{d}(Tr(Q_1 Q_2), Tr(Q_1 (I-Q_3)), Tr(Q_2 (I-Q_3))) = (0,\tfrac{n}{d},\tfrac{n+k}{d}) = \vec{a} \] where the last equality follows since $n \leq d-2n-2k$. In the second case, we may use $\tr_d(A \oplus B) = \tfrac{d_1}{d} \tr_{d_1} Tr(A) + \tfrac{d_2}{d} \tr_{d_2}(B)$ to get  \[ \tfrac{1}{d}(Tr(Q_1 Q_2), Tr(Q_1 (I-Q_3)), Tr(Q_2 (I-Q_3))) \in \tfrac{6n+4k-2d}{d} S_2(1) + (0, \tfrac{d-2n-2k}{d}, \tfrac{d-2n-k}{d}) \] completing the proof. \end{proof}


\subsection{Type III slices}

Finally we must consider the type III slices of $C_{max}^s(3,2)$. Here our strategy will be to show that type III slices can be described in terms of type II slices and the $(r, r', 1-r')$-slices produced in Lemma \ref{propTypeIISwap}. We begin by applying Lemma \ref{mainLemma} to the setting of type III correlations. The proof is similar to the proofs of Lemma \ref{lemmaTypeI} and Lemma \ref{lemmaTypeII} so we leave some details to the reader.

\begin{lemma} \label{lemmaTypeIII} Suppose that $k,k',n,m$ and $d$ are positive integers such that $k \leq k'$ and $2n+k+k' = d - m$ and set $l = \min(n,m)$. Then \[ S_d(n,n+k,n+k') \subseteq \co\{ \tfrac{d-1}{d} S_{d-1}(n,n+k,n+k'), \tfrac{d-l}{d} S_{d-l}(n-l,n+k,n+k')\}. \] \end{lemma}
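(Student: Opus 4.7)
The plan is to mimic the arguments used in the proofs of Lemma \ref{lemmaTypeI} and Lemma \ref{lemmaTypeII}, applying Lemma \ref{mainLemma} repeatedly to pare down the rank of the smallest projection (here $n$), and then invoking Lemma \ref{LemmaBlowUp} to collapse all intermediate terms in the resulting convex hull into $\tfrac{d-1}{d}S_{d-1}(n,n+k,n+k')$.

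First I would verify that Lemma \ref{mainLemma} can be applied in the form that reduces the first coordinate. Since $n_2+n_3 = 2n+k+k' = d-m < d$, the third inclusion in Lemma \ref{mainLemma} yields
\[
S_d(n,n+k,n+k') \subseteq \tfrac{d-1}{d}\co\{S_{d-1}(n,n+k,n+k'),\ S_{d-1}(n-1,n+k,n+k')\}.
\]
Now the hypothesis on $S_{d-1}(n-1,n+k,n+k')$ becomes $n_2+n_3 = 2n+k+k' = (d-1)-(m-1)$, so the same reduction is legal whenever $m-1 \geq 1$ and $n-1 \geq 0$. Iterating this step $j$ times replaces $(d,n)$ by $(d-j,n-j)$, which is permissible as long as $j \leq m$ and $j \leq n$; hence the process terminates after exactly $l = \min(n,m)$ steps. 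Telescoping the resulting chain of inclusions gives
\[
S_d(n,n+k,n+k') \subseteq \co\left\{\tfrac{d-j}{d}S_{d-j}(n-j+1,n+k,n+k')\right\}_{j=1}^{l}\ \cup\ \left\{\tfrac{d-l}{d}S_{d-l}(n-l,n+k,n+k')\right\}.
\]

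To reach the stated form I would then absorb the intermediate terms $\tfrac{d-j}{d}S_{d-j}(n-j+1,n+k,n+k')$ for $2 \leq j \leq l$ into the single set $\tfrac{d-1}{d}S_{d-1}(n,n+k,n+k')$. By Lemma \ref{LemmaBlowUp} (applied to the first coordinate), for each such $j$,
\[
\tfrac{d-j}{d}S_{d-j}(n-j+1,n+k,n+k') \subseteq \tfrac{d-j+1}{d}S_{d-j+1}(n-j+2,n+k,n+k'),
\]
and iterating this blow-up $j-1$ times lands inside $\tfrac{d-1}{d}S_{d-1}(n,n+k,n+k')$. After this cleanup only two terms remain on the right-hand side, yielding exactly the stated inclusion.

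The main technical care required is bookkeeping: at each reduction step one must confirm that the updated parameters continue to satisfy the hypothesis $n_2+n_3 < d'$ needed to apply Lemma \ref{mainLemma}, and that $n-j+1 \geq 1$ so that all ranks stay non-negative. Both conditions are precisely captured by $l = \min(n,m)$, so no additional case analysis is needed. Because the proof is structurally identical to those of Lemma \ref{lemmaTypeI} and Lemma \ref{lemmaTypeII} and the only modification is that each reduction of $n$ now requires only a single application of Lemma \ref{mainLemma} (rather than three or two, respectively, as in the type I/II cases), the detailed iteration can be left to the reader in the same spirit as the preceding lemmas.
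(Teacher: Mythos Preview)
Your proposal is correct and follows essentially the same argument as the paper: repeatedly apply the third case of Lemma \ref{mainLemma} (using $n_2+n_3=2n+k+k'=d-m<d$ at each stage) to peel the first rank down from $n$ to $n-l$, then use Lemma \ref{LemmaBlowUp} on the first coordinate to absorb every intermediate term $\tfrac{d-j}{d}S_{d-j}(n-j+1,n+k,n+k')$ back into $\tfrac{d-1}{d}S_{d-1}(n,n+k,n+k')$. Your bookkeeping check that the $j$-th reduction is legal precisely when $j\le m$ and $j\le n$, so that the process halts at $l=\min(n,m)$, matches the paper's reasoning exactly.
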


\begin{proof} By repeated application of Lemma \ref{mainLemma} we have \begin{eqnarray} S_d(n, n+k, n+k') & \subseteq & \tfrac{d-1}{d} \co \{ S_{d-1}(n,n+k,n+k'), \nonumber \\
& & \qquad \qquad S_{d-1}(n-1,n+k,n+k) \}, \nonumber \\
\tfrac{d-1}{d}S_{d-1}(n-1,n+k,n+k') & \subseteq & \tfrac{d-2}{d} \co \{ S_{d-2}(n-1, n+k, n+k'), \nonumber \\
& & \qquad \qquad S_{d-2}(n-2, n+k, n+k') \}, \nonumber \\
\tfrac{d-2}{d}S_{d-2}(n-2,n+k,n+k') & \subseteq & \tfrac{d-3}{d} \co \{ S_{d-3}(n-2, n+k, n+k'), \nonumber \\
& & \qquad \qquad S_{d-3}(n-3, n+k, n+k') \}, \nonumber \\
& \dots & \nonumber \\
\tfrac{d-l+2}{d}S_{d-l+2}(n-l+2,n+k,n+k') & \subseteq & \tfrac{d-l+1}{d} \co \{ S_{d-l+1}(n-l+2, n+k, n+k'), \nonumber \\
& & \qquad \qquad S_{d-l+1}(n-l+1, n+k, n+k') \}, \nonumber \\
\tfrac{d-l+1}{d}S_{d-2}(n-l+1,n+k,n+k') & \subseteq & \tfrac{d-l}{d} \co \{ S_{d-l}(n-l+1, n+k, n+k'), \nonumber \\
& & \qquad \qquad S_{d-l}(n-l, n+k, n+k') \}. \nonumber \end{eqnarray} Furthermore, by Lemma \ref{LemmaBlowUp} we have \begin{eqnarray} \tfrac{d-l}{d} S_{d-l}(n-l+1, n+k, n+k') & \subseteq & \tfrac{d-l + 1}{d} S_{d-l + 1}(n-l+2,n+k+1,n+k') \nonumber \\
& \subseteq & \tfrac{d-l + 2}{d} S_{d-l+2}(n-l+2,n+k,n+k') \nonumber \\
& \dots & \nonumber \\
& \subseteq & \tfrac{d-2}{d} S_{d-2}(n-1, n+k, n+k') \nonumber \\
& \subseteq & \tfrac{d-1}{d} S_{d-1}(n,n+k,n+k'). \nonumber \end{eqnarray} The statement follows by combining first chain of inclusions with the second.  \end{proof}

\begin{proposition} \label{propositionTypeIII} Suppose that $n,k,k'$ and $d$ are positive integers such that $2(n+k') \leq d$ and $k \leq k'$. Then $S_d(n,n+k,n+k')$ is contained in the convex hull of the sets \[ B_1 := \max(0,\tfrac{6n+2k+2k'-2d}{d})S_2(1), \qquad B_2 := \tfrac{2n}{d} S_2(1) + (0,0,[0,\tfrac{k}{d}]), \] and \[ B_3 := \max(0,\tfrac{2(n+k-k')}{d}) S_2(1) + (0, \tfrac{\min(n,k'-k)}{d}, \tfrac{\min(n,k'-k)+k}{d}). \] \end{proposition}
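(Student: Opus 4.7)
The plan is to mimic the proofs of Propositions \ref{propTypeI} and \ref{propTypeII}, iterating Lemma \ref{lemmaTypeIII} to decompose $S_d(n,n+k,n+k')$ as a convex hull of simpler sub-slices, and then verifying each piece lies in $\co\{B_1,B_2,B_3\}$. Setting $2n+k+k'=d-m$ and, for each $j\in\{0,1,\dots,m\}$, defining $l_j=\min(n,m-j)$, $d_j=d-j-l_j$, and $n_j=n-l_j$, the iteration of Lemma \ref{lemmaTypeIII} (applied to $S_{d-j}(n,n+k,n+k')$ at the $j$th step and pushing the unreduced term forward, exactly as in the proof of Proposition \ref{propTypeII}) yields the inclusion
\[ S_d(n,n+k,n+k') \;\subseteq\; \co\!\bigl\{\tfrac{d_j}{d}\,S_{d_j}(n_j,n+k,n+k'):\ j=0,1,\dots,m\bigr\}. \]

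I would then partition the indices into two cases based on which branch of the $\min$ defines $l_j$. When $l_j=n$ (so $j\le m-n$ and $n_j=0$), the first projection is the zero matrix, and by Lemma \ref{traceProjections} the sub-slice reduces to $\{(0,0,t):0\le t\le \tfrac{n+k}{d_j}\}$; after scaling by $d_j/d$ this sits inside $B_2$, using that $(0,0,z)\in S_2(1)$ for $z\in[0,\tfrac{1}{2}]$, a fact verified as in the proof of Proposition \ref{PropHalfSlice}. When $l_j=m-j$, we have $d_j=d-m$ and the sub-slice becomes $S_{d-m}(n_j,n+k,n+k')$; the identity $(n+k)+(n+k')=d-m$ places us precisely in the hypothesis of Lemma \ref{propTypeIISwap} with parameters $n\mapsto n_j$, $k\mapsto n+k-n_j$, $d\mapsto d-m$. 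At the extreme $j=m$ (so $n_j=n$), a direct substitution using $d-m=2n+k+k'$ shows that the scaled Lemma \ref{propTypeIISwap} bound equals $\co\{B_2,B_3\}$ exactly, matching the defining expressions of $B_2$ and $B_3$ in the proposition.

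The main obstacle will be establishing the inclusion for the intermediate indices $m-n<j<m$, where $1\le n_j<n$. For these $j$, the scaled Lemma \ref{propTypeIISwap} bound decomposes as the convex hull of two summands: the set $\tfrac{2n_j}{d}S_2(1)+(0,0,[0,\tfrac{n+k-n_j}{d}])$ and the translate $\max(0,\tfrac{2(n_j+k-k')}{d})S_2(1)+(0,\tfrac{\min(n_j,k'-k)}{d},\tfrac{\min(n_j,k'-k)+n+k-n_j}{d})$. Each must be placed inside $\co\{B_1,B_2,B_3\}$, which I would do by realizing the intermediate correlations as block-diagonal direct sums of the rank-$n$ and rank-$0$ extremal projection patterns (in the spirit of Remark \ref{typeIIRemark}), with the tracial state weighting the two blocks by $n_j/n$ and $(n-n_j)/n$ respectively; this exhibits each intermediate correlation as a genuine convex combination of a point in $B_3$ (or $B_1$) and a point in $B_2$. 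The verification splits naturally into subcases $n_j\le k'-k$ versus $n_j>k'-k$, depending on whether the $\max$ defining $B_1$ or the coefficient in $B_3$ vanishes and which branch of $\min(n_j,k'-k)$ is active, and in each subcase the required convex-combination identity follows from the elliptope description of $S_2(1)$ from Proposition \ref{PropHalfSlice}. The bookkeeping is tedious but introduces no new geometric ideas beyond those already deployed in the proofs of Propositions \ref{propTypeI} and \ref{propTypeII}.
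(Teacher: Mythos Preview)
Your overall strategy matches the paper's exactly: iterate Lemma \ref{lemmaTypeIII}, split on whether $l_j=n$ or $l_j=m-j$, and in the latter case invoke Lemma \ref{propTypeIISwap} with the substitution $n\mapsto n_j$, $k\mapsto n+k-n_j$, $d\mapsto d-m$. Your identification of the $j=m$ term with $\co\{B_2,B_3\}$ is correct.

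However, there is a concrete error in your treatment of the case $n_j=0$. You assert that the scaled sub-slice $(0,0,[0,\tfrac{n+k}{d}])$ sits inside $B_2$, ``using that $(0,0,z)\in S_2(1)$ for $z\in[0,\tfrac12]$.'' This is false: from the elliptope description in Proposition \ref{PropHalfSlice}, $(x,y,z)=(0,0,z)$ forces $x'=y'=-1$, and then the constraint $1+2x'y'z'-x'^2-y'^2-z'^2\ge 0$ becomes $-(z'-1)^2\ge 0$, so $z'=1$ and $z=\tfrac12$. Thus $(0,0,z)\in S_2(1)$ only for $z=\tfrac12$, and in particular $(0,0,0)\notin B_2$. (One can also check directly that $(0,0,0)\notin\co\{B_2,B_3\}$, since every point of $B_3$ has third coordinate at least $k'/d>0$ and every point of $B_2$ with third coordinate $0$ has second coordinate $n/d>0$.) The paper handles this by observing that $j\le m-n$ forces $3n+k+k'\le d$, so $B_1=\{(0,0,0)\}$; combining this with $(0,0,\tfrac{n+k}{d})\in B_2$ gives $(0,0,[0,\tfrac{n+k}{d}])\subseteq\co\{B_1,B_2\}$. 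You will need $B_1$ here, and you should watch for the same issue in the intermediate-$j$ analysis when placing $\tfrac{2n_j}{d}S_2(1)+(0,0,[0,\tfrac{n+k-n_j}{d}])$: the paper again uses $\co\{B_1,B_2\}$ rather than $B_2$ alone.
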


\begin{proof} First, if $k=k'$ and $2(n+k)=d$, then the statement follows from the observation that $B_1 =A_1, B_2 = A_2$, and $B_3 \subseteq A_2$, where $A_1$ and $A_2$ are the sets described in Proposition \ref{propTypeII}. Otherwise we may assume $2n + k + k' = d - m$ for some positive integer $m$. For each $j = 0,1,2,\dots,m$, define $l_j = \min(n, m-j)$, $d_j = d - j - l_j$, and $n_j = n - l_j$. Then, by repeated application of Lemma \ref{lemmaTypeIII}, we have \begin{eqnarray} S_d(n, n+k, n+k') & \subseteq & \co\{ \tfrac{d_0}{d} S_{d_0}(n_0,n+k,n+k'), \tfrac{d_1}{d} S_{d_1}(n_1,n+k,n+k'), \dots \nonumber \\
& & \dots, \tfrac{d_{m-1}}{d} S_{d_{m-1}}(n_{m-1}, n+k, n+k'), \tfrac{d_m}{d} S_{d_m}(n_m, n+k, n+k') \}. \nonumber \end{eqnarray}

We will show that each set \[\tfrac{d_j}{d} S_{d_j}(n_j,n+k,n+k')\] is contained in the convex hull of the three sets $B_1, B_2,$ and $B_3$. We will need to consider two cases, depending on whether or not $j \leq m -n$.

We begin by considering the case when $j \leq m - n$. Obviously we only need to consider this case only when $n \leq m$. When this occurs, we have $l_j = n$ and hence $n_j = 0$ and $d_j = d - j - n$. Thus \[ \tfrac{d_j}{d} S_{d_j}(n_j, n+k, n+k') = \tfrac{d-j-n}{d} S_{d-j-n}(0, n+k, n+k'). \] We can characterize this set using Lemma \ref{traceProjections}, since whenever the rank of $P_1$ is zero we must have $P_1=0$, and hence $Tr(P_1 P_2) = Tr(P_1 P_3) = 0$, leaving us to determine the possible values of $Tr(P_2 P_3)$. We must further consider two cases, depending on whether or not $d \geq 3n + k + k' + j$. If this holds, then \[ \tfrac{d-j-n}{d} S_{d-j-n}(0, n+k, n+k') = (0,0,[0, \tfrac{n+k}{d}]) \] by Lemma \ref{traceProjections}.  In the case when $d < 3n + k + k' + j$, we have \[ \tfrac{d-j-n}{d} S_{d-j-n}(0,n+k,n+k') = (0,0, [\tfrac{3n + k + k' + j-d}{d}, \tfrac{n+k}{d}]) \] by Lemma \ref{traceProjections} again. Finally we show that each of the sets $(0,0,[0,\tfrac{n+k}{d}])$ and $(0,0,[\tfrac{3n + k + k' + j - d}{d}, \tfrac{n+k}{d}])$ are contained in the convex hull of $B_1$ and $B_2$. In the case when $d \geq 3n + k + k' + j$, $B_1$ is the singleton set containing $(0,0,0)$, so we need only show that $B_2$ contains the correlation $(0,0,\tfrac{n+k}{d})$. Indeed, considering the $2 \times 2$ matrices $P_1 = 1 \oplus 0$ and $P_2=P_3 = 0 \oplus 1$ we see that this holds. When $d < 3n + k + k' + j$, we have \[ B_1 = \tfrac{6n+2k+2k'-2d}{d}S_2(1). \] Again using $P_1 = 1 \oplus 0$ and $P_2=P_3 = 0 \oplus 1$, we see that \[ (0,0, \tfrac{3n+k+k'-d}{d}) \in B_1. \] Observing that $3n + k + k' - d < 3n+k+k'+j - d \leq n+k$ we see that \[ (0,0,[\tfrac{3n + k + k' + j}{d}, \tfrac{n+k}{d}]) \in \co\{B_1, B_2\}, \] concluding the case $j \leq m - n$.

We now consider the case $j > m-n$. In this case, $l_j = m - j$ and hence $n_j = n-m+j$ and $d_j = d-m$. We are left to consider the set \[ \tfrac{d_j}{d} S_{d_j}(n_j,n+k,n+k) = \tfrac{d-m}{d} S_{d-m}(n-m+j, n+k, n+k'). \] Since $(d-m) - (n+k) = n+k'$, we have \[ \tfrac{d-m}{d} S_{d-m}(n-m+j, n+k, n+k') = \tfrac{d-m}{d} \pi_3 S_{d-m}(n-m+j,n+k,n+k) \] where $\pi_3$ is the affine map given by reversing the outcome of the third experiment. Applying Lemma \ref{propTypeIISwap}, we see that $\tfrac{d-m}{d} S_{d-m}(n-m+j, n+k, n+k')$ is a subset of \begin{equation} \label{swapEqn} \co \{ \max(0, \tfrac{6n + 2j + 4k - 2d}{d}) S_2(1) + \vec{a}_j , \tfrac{2(n-m+j)}{d} S_2(1) + (0,0,[0, \tfrac{k+m-j}{d}]) \} \end{equation} where \begin{equation} \label{swapAjay} \vec{a}_j = (0, \min(\tfrac{n-m+j}{d}, \tfrac{k'-k}{d}), \min(\tfrac{n-m+j}{d}, \tfrac{k'-k}{d}) + \tfrac{k + m - j}{d}). \end{equation} We must consider two cases depending on whether or not $d \geq 3n + 2k + j$. When this does occur, we have $n - m + j \leq d - m - 2n - 2k = k'-k$. Thus \[ \max(0, \tfrac{6n + 2j + 4k - 2d}{d}) S_2(1) + \vec{a}_j = \{ \vec{a}_j \} =  \{ (0,\tfrac{n-m+j}{d}, \tfrac{n+k}{d}) \}. \] We will show that the correlation $\vec{a}_j$ is in the convex hull of $B_2$ and $B_3$. First observe that $B_2$ contains the correlation $(0,0,\tfrac{n+k}{d})$ (as shown in the previous paragraph). In the case when $n \leq k' - k$, $B_3$ is the single set containing the correlation $(0,\tfrac{n}{d}, \tfrac{n+k}{d})$. Since $0 \leq n-m+j \leq n$, we see that $\vec{a}_j \in \co \{B_2, B_3\}$. In the case when $n > k'-k$ we have \[ B_3 = \tfrac{2(n+k-k')}{d} S_2(1) + (0,\tfrac{k'-k}{d}, \tfrac{k'}{d}). \] Since $(0,0,1/2) \in S_2(1)$, we see that \[ (0,0,\tfrac{n+k-k'}{d}) + (0,\tfrac{k'-k}{d}, \tfrac{k'}{d}) = (0, \tfrac{k'-k}{d}, \tfrac{n+k}{d}) \in B_3. \] Since $0 \leq n-m+j \leq k-k'$ we see that $\vec{a}_j \in \co\{B_2,B_3\}$ again. We conclude the case $d \geq 3n+2k+j$ by showing that \[ \tfrac{2(n-m+j)}{d} S_2(1) + (0,0,[0, \tfrac{k+m-j}{d}]) \subseteq \co\{B_1, B_2\}. \] To see this, first observe that since $n - m = 3n + k + k' - d$, we have \[ \max(0,2(3n + k' + k' - d)) \leq 2(n - m + j) \leq 2n, \] implying that $\tfrac{2(n-m+j)}{d} S_2(1)$ lies in the convex hull of $B_1$ and $B_2$. Also, \[ \tfrac{2(n-m+j)}{d} S_2(1) + (0,0, \tfrac{k+m-j}{d}) \subseteq B_2 \] since any correlation in the left hand side can be implemented with projections of the form \[ P_i = (\hat{P}_i \otimes I_{n-(m-j)}) \oplus (\delta_{i,2}+\delta_{i,3})I_{m-j} \oplus \delta_{i,1} I_{m-j} \oplus (\delta_{i,2} + \delta_{i,3}) I_k \] where the $\hat{P}_i$ are $2 \times 2$ rank-one projections. Since the operators \[ (\hat{P}_i \otimes I_{n-(m-j)}) \oplus (\delta_{i,2}+\delta_{i,3})I_{m-j} \oplus \delta_{i,1} I_{m-j}\]  are $2n \times 2n$ projections of rank $n$, we see that the corresponding correlation belongs to $\tfrac{2n}{d}S_2(1) + (0,0,\tfrac{k}{d})$, concluding the case $d \geq 3n+2k+j$.

We conclude the proof by considering the case when $j > m - n$ and $d < 3n + 2k + j$. In this case we have $n \geq n - m + j > d - m -2n -2k = k'-k$, so that $\tfrac{d_j}{d} S_{d_j}(n_j,n+k,n+k')$ is a subset of \[ \co\{ \tfrac{6n+2j+4k-2d}{d}S_2(1) + \vec{a}_j, \tfrac{2(n-m+j)}{d} S_2(1) + (0,0,[0,\tfrac{k+m-j}{d}]) \} \] where $\vec{a}_j = (0,\tfrac{k'-k}{d}, \tfrac{k+m-j}{d})$, using equations (\ref{swapEqn}) and (\ref{swapAjay}), respectively, and $B_1$ and $B_3$ equal \[ \tfrac{6n + 2k + 2k' - 2d}{d} S_2(1) \quad \text{ and } \quad \tfrac{2(n-(k'-k))}{d} S_2(1) + (0, \tfrac{k'-k}{d}, \tfrac{k'}{d}), \] respectively. Repeating the arguments in the previous paragraph, we see that \[ \tfrac{2(n-m+j)}{d} S_2(1) + (0,0,[0,\tfrac{k+m-j}{d}]) \subset \co\{B_1,B_2\}. \] We finish the proof by showing that $\tfrac{6n+2j+4k-2d}{d}S_2(1) + \vec{a}_j \subseteq B_3$. Indeed, any correlation in $\tfrac{6n+2j+4k-2d}{d}S_2(1) + (0,\tfrac{k'-k}{d}, \tfrac{k+m-j}{d})$ can be implemented with projections of the form \[ P_i = (\hat{P}_i \otimes I_{n - (k'-k) + j - m}) \oplus (\delta_{i,2} + \delta_{i,3})I_{m-j} \oplus \delta_{i,1} I_{m-j} \oplus (\delta_{i,1} + \delta_{i,3}) I_{k'-k} \oplus (\delta_{i,2} + \delta_{i,3}) I_{k'} \] where $\hat{P}_i$ are $2 \times 2$ projections, since $6n+2j+4k-2d = 2(n-(k'-k)+j-m)$. Since  \[ (\hat{P}_i \otimes I_{n - (k'-k) + j - m}) \oplus (\delta_{i,2} + \delta_{i,3})I_{m-j} \oplus \delta_{i,1} I_{m-j} \] is a projection of rank $n-(k'-k)$ and size $2(n - (k'-k))$ by $2(n-(k'-k))$, we see that the correlation implemented by these projections is in $B_3$. \end{proof}

\subsection{Description of $C_q^s(3,2)$}

We are finally ready to describe $C_q^s(3,2)$. It suffices to describe an arbitrary slice $\hat{S}_{\vec{r}}[ C_q^s(3,2) ]$. Since every slice is an affine image of a standard slice, it suffices to provide a description for standard vectors $\vec{r}$ only. This is achieved in the following theorem.

\begin{theorem} \label{mainTheorem} Let $(r_1,r_2,r_3) = \vec{r} \in \mathbb{R}^3$ such that $0 \leq r_1 \leq r_2 \leq r_3 \leq 1/2$. Then the slice $\hat{S}_{\vec{r}}[ C_q^s(3,2)]$ is equal to the convex hull of the three sets \[ D_1^{\vec{r}} := 2\max(0,r_1 + r_2 + r_3 - 1)S_2(1), \quad D_2^{\vec{r}} := 2r_1 S_2(1) + (0,0,[0,r_2-r_1]), \] and \[ D_3^{\vec{r}} := 2\max(0,r_1+r_2-r_3) S_2(1) + (0, \min(r_1,r_3 - r_2), \min(r_2,r_3 - r_1)). \] Consequently, $C_q^s(3,2)$ is a topologically closed set. \end{theorem}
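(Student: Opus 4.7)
The plan is to establish both inclusions $\hat{S}_{\vec{r}}[C_q^s(3,2)] \subseteq \co\{D_1^{\vec{r}}, D_2^{\vec{r}}, D_3^{\vec{r}}\}$ and its reverse, mirroring the strategy used for $C_q^s(2,2)$ in Section 2, and then deduce closedness from compactness of each slice. I would begin by observing that each $D_i^{\vec{r}}$ is a translate of a scalar multiple of $S_2(1)$, which Proposition \ref{PropHalfSlice} identifies with an affine image of the compact convex $3\times 3$ elliptope. Hence each $D_i^{\vec{r}}$ is compact and convex, and $\co\{D_1^{\vec{r}}, D_2^{\vec{r}}, D_3^{\vec{r}}\}$---the convex hull of a finite union of compact convex sets in $\mathbb{R}^3$---is itself compact.

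For the forward inclusion, I would first handle rational standard $\vec{r}=(N_1/d, N_2/d, N_3/d)$ by setting $n=N_1$, $k=N_2-N_1$, $k'=N_3-N_1$ and checking by direct algebraic manipulation that the sets $B_1,B_2,B_3$ from Proposition \ref{propositionTypeIII} coincide with $D_1^{\vec{r}}, D_2^{\vec{r}}, D_3^{\vec{r}}$ under this identification; the key identities are $(6n+2k+2k'-2d)/d = 2(r_1+r_2+r_3-1)$, $2n/d=2r_1$ with $k/d=r_2-r_1$, and the case analysis needed to verify that $(\min(n,k'-k)+k)/d = \min(r_2, r_3-r_1)$. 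This gives $\hat{S}_{\vec{r}}[C_{max}^s(3,2)] \subseteq \co\{D_1^{\vec{r}}, D_2^{\vec{r}}, D_3^{\vec{r}}\}$ for rational standard $\vec{r}$. For an arbitrary standard $\vec{r}$, I would invoke Theorem \ref{densityThm}: any $p \in \hat{S}_{\vec{r}}[C_q^s(3,2)]$ lies in $\overline{C_{max}^s(3,2)}$, hence is a limit of correlations $p^{(k)} \in C_{max}^s(3,2)$ whose (rational) marginals $\vec{r}^{(k)}$ converge to $\vec{r}$; each $p^{(k)}$ lies in $\co\{D_1^{\vec{r}^{(k)}}, D_2^{\vec{r}^{(k)}}, D_3^{\vec{r}^{(k)}}\}$ by the rational case (reducing $\vec{r}^{(k)}$ to standard form by the symmetries $\pi_x, \epsilon_{x,y}$), and since the $D_i^{\vec{r}}$ depend continuously on $\vec{r}$ (in Hausdorff metric), the limit $p$ lies in $\co\{D_1^{\vec{r}}, D_2^{\vec{r}}, D_3^{\vec{r}}\}$.

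The reverse inclusion requires exhibiting, for every point of $\co\{D_1^{\vec{r}}, D_2^{\vec{r}}, D_3^{\vec{r}}\}$, an explicit synchronous quantum realization with marginal exactly $\vec{r}$. Following the approach of Remarks \ref{typeIRemark} and \ref{typeIIRemark}, I would use the families of projections recorded there together with the ones produced in the proof of Proposition \ref{propositionTypeIII} for $B_3$. The crucial point is that a general tracial state on a direct sum $\mathbb{M}_{d_1} \oplus \mathbb{M}_{d_2} \oplus \mathbb{M}_{d_3}$ has the form $\sum \lambda_i \tr_{d_i}$ with $\lambda_i \geq 0$, $\sum \lambda_i = 1$, and the $\lambda_i$ may be \emph{arbitrary} non-negative reals; by placing block-diagonal realizations of $D_1^{\vec{r}}, D_2^{\vec{r}}, D_3^{\vec{r}}$ into the three summands and tuning the weights $\lambda_i$, any real marginal $\vec{r}$ in the standard region and any convex combination of points from the three $D_i^{\vec{r}}$ can be obtained simultaneously. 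This step is the main obstacle, since one must verify that the block realizations combine compatibly to yield the prescribed $\vec{r}$ jointly with the prescribed values of $w_{1,2}, w_{1,3}, w_{2,3}$, and that the interval factor $[0, r_2-r_1]$ in $D_2^{\vec{r}}$ (and similar slack in $D_3^{\vec{r}}$) can likewise be swept out.

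Combining the two inclusions yields the claimed equality for standard $\vec{r}$, and by applying the affine symmetries $\pi_x$ and $\epsilon_{x,y}$ it extends to arbitrary $\vec{r}$. To conclude closedness of $C_q^s(3,2)$, suppose $p^{(n)} \to p$ with $p^{(n)} \in C_q^s(3,2)$; linearity of the marginal map forces the marginals $\vec{r}^{(n)}$ to converge to the marginal $\vec{r}$ of $p$, each $p^{(n)}$ lies in the compact set $\co\{D_1^{\vec{r}^{(n)}}, D_2^{\vec{r}^{(n)}}, D_3^{\vec{r}^{(n)}}\}$, and Hausdorff continuity of these sets in $\vec{r}$ places the limit $p$ inside $\co\{D_1^{\vec{r}}, D_2^{\vec{r}}, D_3^{\vec{r}}\} = \hat{S}_{\vec{r}}[C_q^s(3,2)] \subseteq C_q^s(3,2)$, finishing the proof.
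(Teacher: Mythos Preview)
Your proposal is correct and follows essentially the same route as the paper: the forward inclusion via Proposition~\ref{propositionTypeIII} (together with Propositions~\ref{propTypeI} and~\ref{propTypeII} for the degenerate cases $k=0$ or $k=k'$) for rational standard $\vec{r}$, extended to arbitrary $\vec{r}$ by a continuity argument, and the reverse inclusion via explicit finite-dimensional realizations with weighted traces. The paper carries out the step you flag as the main obstacle by building, for each $D_i^{\vec{r}}$ individually, projections in small commutative-plus-$\mathbb{M}_2$ algebras such as $\mathbb{M}_2 \oplus \mathbb{C} \oplus \mathbb{C} \oplus \mathbb{C}$ equipped with a trace whose real weights are chosen so that the marginal is exactly $\vec{r}$; this is the natural irrational-weight analogue of the integer-block constructions in Remarks~\ref{typeIRemark} and~\ref{typeIIRemark} rather than a direct reuse of those $\mathbb{M}_d$ realizations, and once each $D_i^{\vec{r}}$ is so realized the convex hull follows by taking direct sums as you describe.
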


\begin{proof} Let $\vec{s} = (s_1, s_2, s_3) \in \mathbb{R}^3$ be chosen such that $s_1, s_2, s_3 \in [0,1]$. Then $\hat{S}_{\vec{s}}[C_q^s(3,2)]$ is non-empty. By swapping experiments and reversing outcomes, we may transform $\hat{S}_{\vec{s}}[C_q^s(3,2)]$ via some affine transformation $\pi$ into $\hat{S}_{\vec{r}}[C_q^s(3,2)] = \pi \hat{S}_{\vec{s}}[C_q^s(3,2)]$, where $\vec{r}$ is standard - i.e., $r_1 \leq r_2 \leq r_3 \leq 1/2$.

By Theorem \ref{densityThm}, the closure of the set of standard slices of $C_{max}^s(3,2)$ is equal to the set of standard slices of the closure of $C_q^s(3,2)$. By Propositions \ref{propTypeI}, \ref{propTypeII}, and \ref{propositionTypeIII}, $\hat{S}_{\vec{r'}}[ C_{max}^s] \subseteq \co \{D_1^{\vec{r'}}, D_2^{\vec{r'}}, D_3^{\vec{r'}} \}$ for each rational standard vector $\vec{r'}$. It follows that the $\vec{r}$ slice of the closure of $\cup_{\vec{r'}} \hat{S}_{\vec{r'}}[ C_{max}^s(3,2)]$ is precisely $\co \{D_1^{\vec{r}}, D_2^{\vec{r}}, D_3^{\vec{r}} \}$. Thus, $\hat{S}_{\vec{r}}[ C_{qa}^s(3,2) ] \subseteq \co \{D_1^{\vec{r}}, D_2^{\vec{r}}, D_3^{\vec{r}} \}$.

We conclude by showing that $\co \{D_1^{\vec{r}}, D_2^{\vec{r}}, D_3^{\vec{r}} \} \subseteq \hat{S}_{\vec{r}}[C_q^s(3,2)]$. To achieve this, we will show that $D_i^{\vec{r}} \subseteq \hat{S}_{\vec{r}}[C_q^s(3,2)]$ for each $i=1,2,3$. To do this, it suffices to demonstrate projections $P_i$ on a finite-dimensional $C^*$-algebra $\frak{A}$ with a state $\tau$ such that each correlation in $D_i^{\vec{r}}$ can be realized as $(\tau(P_1 P_2), \tau(P_1 P_3), \tau(P_2 P_3))$ with $\tau(P_i) = r_i$. To do this, we consider each $D_i^{\vec{r}}$ separately.

We begin with $D_1^{\vec{r}}$. In the case when $r_1 + r_2 + r_3 \leq 1$, then we can implement $D_1^{\vec{r}} = \{(0,0,0)\}$ with projections \[ P_i = \delta_{i,1} \oplus \delta_{i,2} \oplus \delta_{i,3} \oplus 0 \in \mathbb{C} \oplus \mathbb{C} \oplus \mathbb{C} \oplus \mathbb{C} \] using the trace $\tau(a \oplus b \oplus c \oplus d) = r_1 a + r_2 b + r_3 c + (1-r_1-r_2-r_3) d$. In the case when $r_1 + r_2 + r_3 > 1$, we may implement $D_1^{\vec{r}} = 2(r_1 + r_2 + r_3 - 1)S_2(1)$ with projections of the form \[ P_i = \hat{P}_i \oplus \delta_{i,1} \oplus \delta_{i,2} \oplus \delta_{i,3} \in \mathbb{M}_2 \oplus \mathbb{C} \oplus \mathbb{C} \oplus \mathbb{C} \] where $\hat{P}_i$ is a rank one $2 \times 2$ projection, using the trace \begin{eqnarray} \tau(A \oplus b \oplus c \oplus d) & = & (r_1 + r_2 + r_3 - 1) \Tr(A) + (1-r_2 - r_3)b  \nonumber \\ & & + (1 - r_1 - r_3) c + (1 - r_1 - r_2) d. \nonumber \end{eqnarray}

Next we consider $D_2^{\vec{r}} = 2r_1 S_2(1) + (0,0,[0,r_2-r_1])$. All correlations in this set can be implemented with projections of the form \[ P_i = \hat{P}_i \oplus (\delta_{i,2}+\delta_{i,3})\hat{Q}_i \oplus \delta_{i,3} \oplus 0 \in \mathbb{M}_2 \oplus \mathbb{M}_2 \oplus \mathbb{C} \oplus \mathbb{C} \] using the trace \begin{eqnarray} \tau(A \oplus B \oplus c \oplus d) & = & r_1 \Tr(A) + (r_2 - r_1)\Tr(B) \nonumber \\ & &  + (r_3 - r_2) c + (1 - r_3 - r_2) d. \nonumber \end{eqnarray}

Finally we consider \[ D_3^{\vec{r}} = \max(0,r_1+r_2-r_3) S_2(1) + (0, \min(r_1,r_3 - r_2), \min(r_2,r_3 - r_1)). \] In the case when $r_1 + r_2 \leq r_3$, we have $D_3^{\vec{r}} = \{(0,r_1,r_2)\}$. This correlation can be implemented with projections of the form \[ P_i = (\delta_{i,1} + \delta_{i,3}) \oplus (\delta_{i,2} + \delta_{i,3}) \oplus \delta_3 \oplus 0 \in \mathbb{C} \oplus \mathbb{C} \oplus \mathbb{C} \oplus \mathbb{C} \] using the trace $\tau(a \oplus b \oplus c \oplus d) = r_1 a + r_2 b + (r_3 - r_1 - r_2)c + (1-r_3)d$. In the case $r_1 + r_2 > r_3$, we have \[ D_3^{\vec{r}} = 2(r_1 + r_2 - r_3)S_2(1) + (0, r_3 - r_2, r_3 - r_1). \] This can be implemented using projections of the form \[ P_i = \hat{P}_i \oplus (\delta_{i,1} + \delta_{i,3}) \oplus (\delta_{i,2} + \delta_{i,3}) \oplus 0 \in \mathbb{M}_2 \oplus \mathbb{C} \oplus \mathbb{C} \oplus \mathbb{C} \] with the trace \begin{eqnarray} \tau(A \oplus b \oplus c \oplus d) & = & (r_1 + r_2 - r_3) \Tr(A) + (r_3 - r_2) b \nonumber \\ & & + (r_3 - r_1) c + (1 - r_1 - r_2) d. \nonumber \end{eqnarray}

Since $D_i^{\vec{r}} \subseteq \hat{S}_{\vec{r}}[C_q^s(3,2)]$ for each $i=1,2,3$, we conclude that \[ \co \{D_1^{\vec{r}}, D_2^{\vec{r}}, D_3^{\vec{r}}\} \subseteq \hat{S}_{\vec{r}}[ C_q^s(3,2) ] \] and the proof is complete. \end{proof}


Since each slice is a region in $\mathbb{R}^3$, we may easily visualize them. Some examples are recorded in Figures 2 and 3.

\begin{figure}[h!]
\centering
\begin{subfigure}
    \centering
    \includegraphics[width=.3\textwidth]{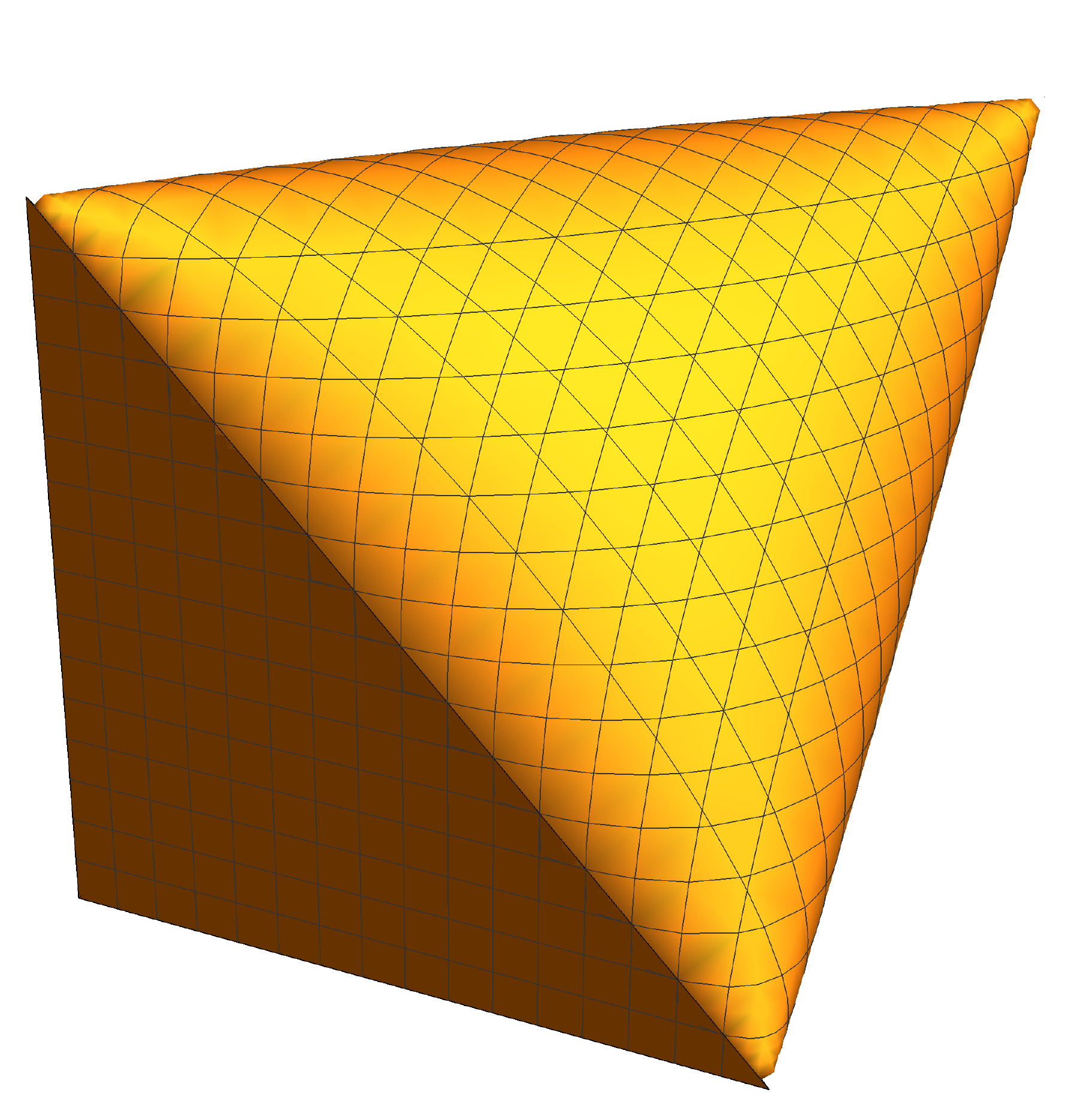}
\end{subfigure}
\begin{subfigure}
    \centering
    \includegraphics[width=.3\textwidth]{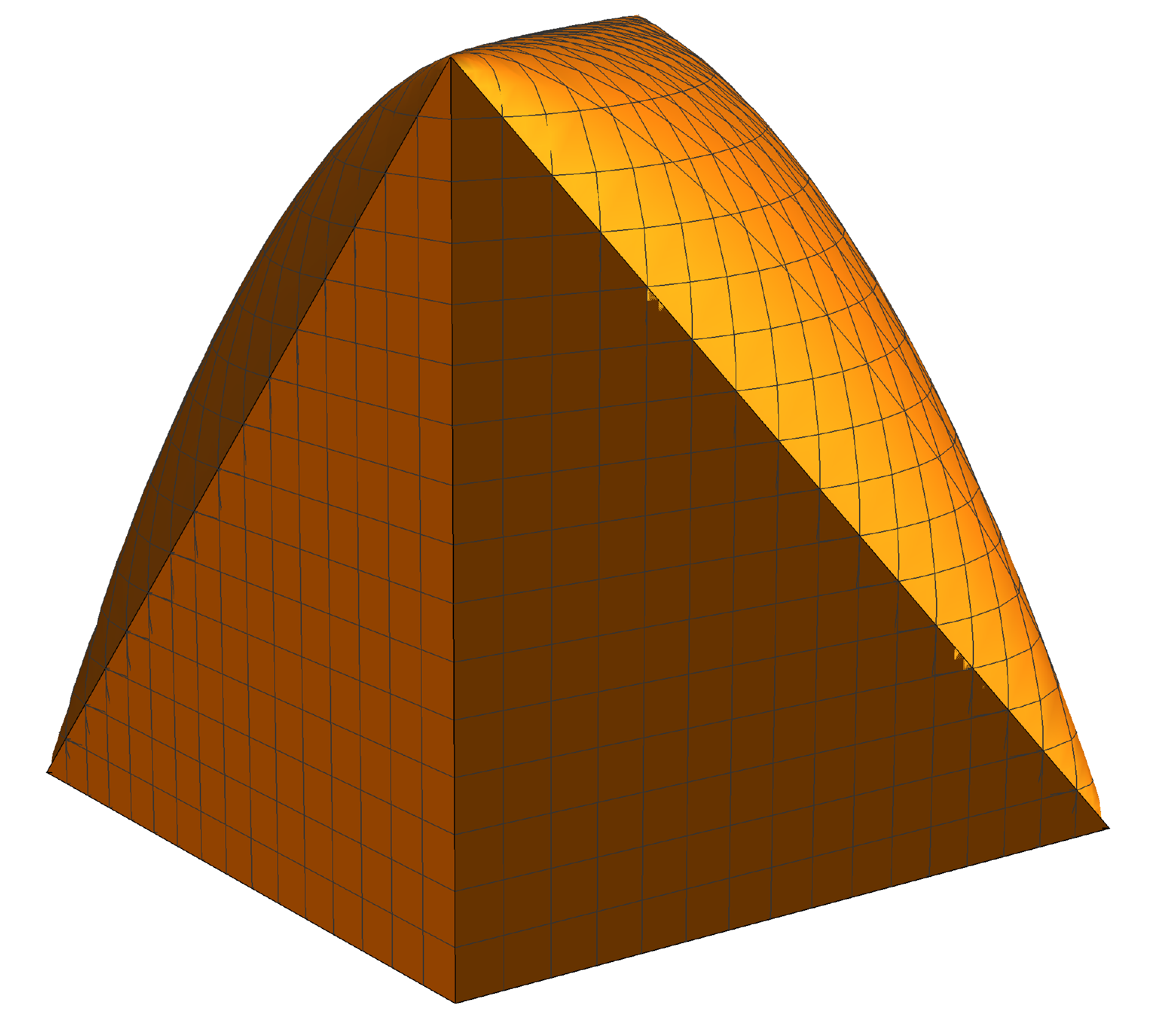}
\end{subfigure}
\caption{The $(1/3,1/3,1/3)$-slice of $C_q^s(3,2)$.}
\end{figure}

\begin{figure}[h!]
\centering
\begin{subfigure}
    \centering
    \includegraphics[width=.3\textwidth]{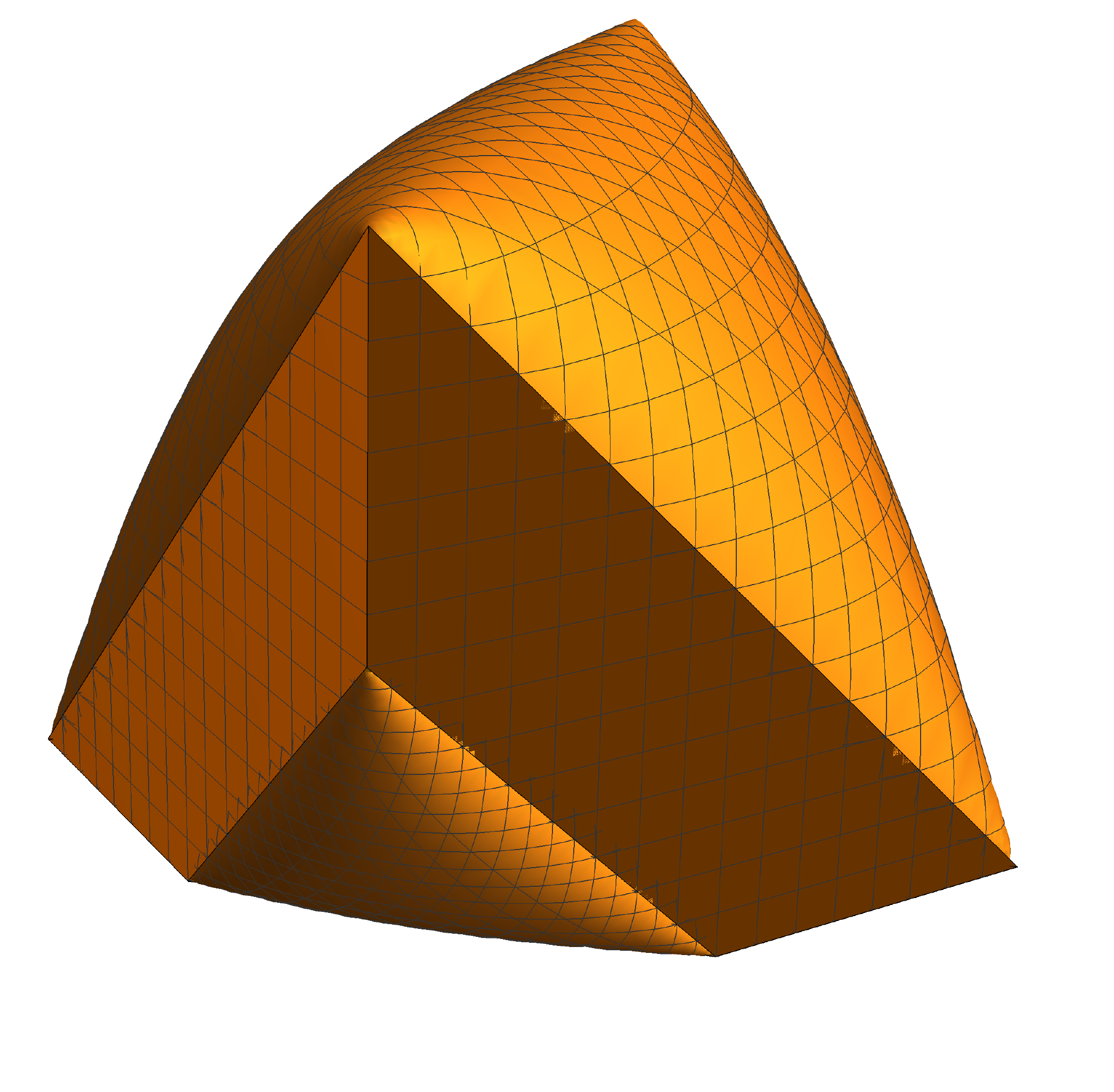}
\end{subfigure}
\begin{subfigure}
    \centering
    \includegraphics[width=.3\textwidth]{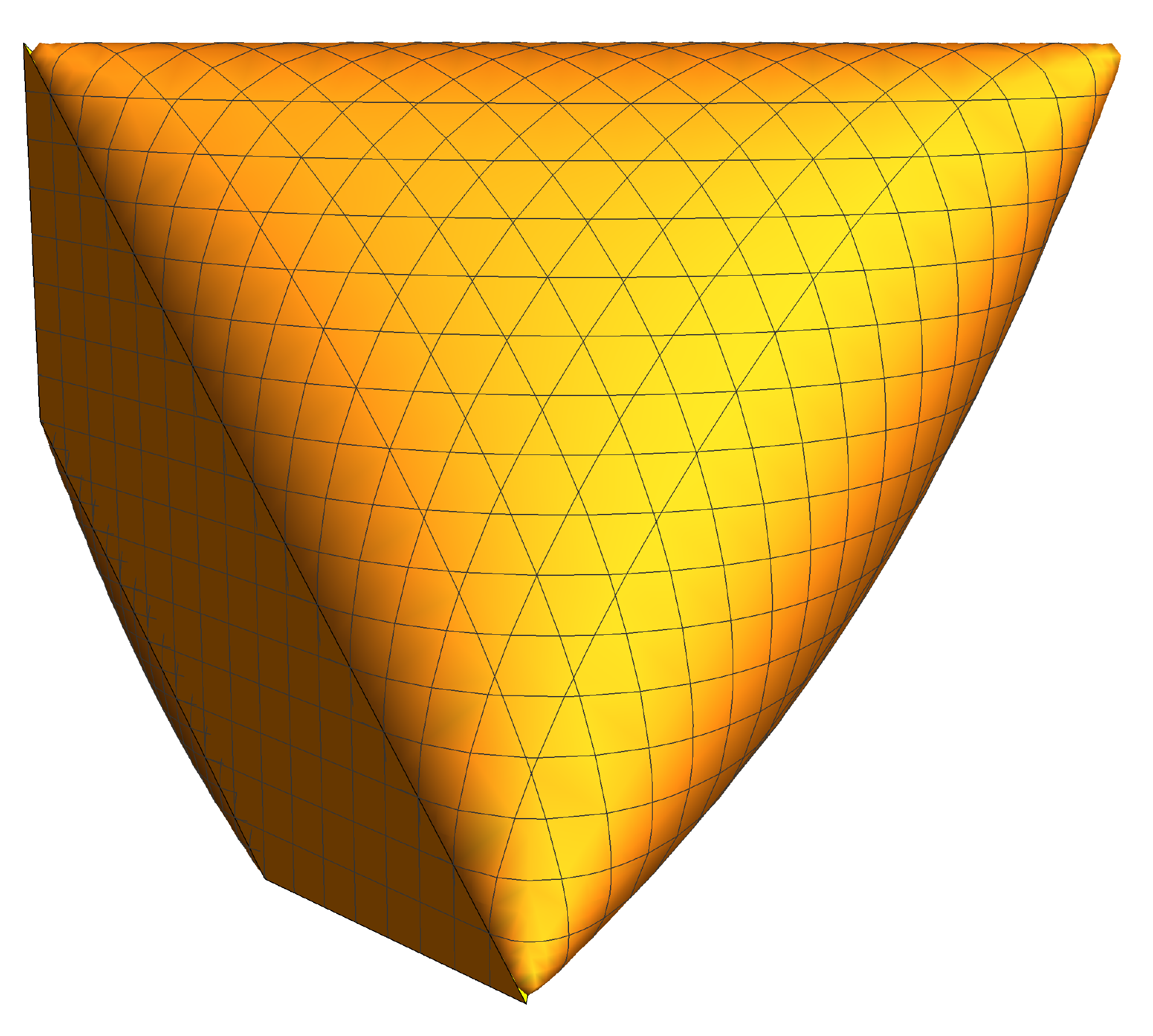}
\end{subfigure}
\caption{The $(2/5,2/5,2/5)$-slice of $C_q^s(3,2)$.}
\end{figure}

In the proof of Theorem \ref{mainTheorem}, we not only characterized the structure of $C_q^s(3,2)$, but we also saw how to build correlations in $C_q^s(3,2)$ using traces on finite-dimensional $C^*$-algebras and projection valued measures. This puts a bound on the dimension of Hilbert space required to implement correlations in $C_q^s(3,2)$ which we record in the following corollary.

\begin{corollary} If $p \in C_q^s(3,2)$, then there exists a $C^*$-subalgebra $\frak{A}$ of $\mathbb{M}_d$ with a trace $\tau$ such that $p(i,j|x,y) = \tau(E_{x,i} E_{y,j})$, where $\{E_{x,i}\}_{i=1,2}$ are projection valued measures in $\frak{A}$ and $d$ is no more than 16. \end{corollary}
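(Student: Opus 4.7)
The plan is to read off the dimension bound directly from the explicit realizations constructed in the proof of Theorem \ref{mainTheorem}, together with a direct sum over a convex decomposition. The key point is that the proof of Theorem \ref{mainTheorem} does more than prove equality of sets: it exhibits, for every standard $\vec{r}$ and every point of each $D_i^{\vec{r}}$, an explicit finite-dimensional $C^*$-algebra, trace, and triple of projections realizing that point, and each such $C^*$-algebra embeds in a matrix algebra of small, constant-size dimension.

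First I would reduce to a standard slice. Given $p \in C_q^s(3,2)$, applying some composition of the experiment-swap maps $\epsilon_{x,y}$ and outcome-reversal maps $\pi_x$ produces a correlation $p'$ whose marginal vector $\vec{r}$ is standard. These operations preserve the underlying $C^*$-algebraic representation: swaps simply relabel the projection valued measures, and reversing outcome $x$ replaces $E_{x,0}$ with $E_{x,1} = I - E_{x,0}$, which remains a projection in the same algebra. Consequently, any dimension bound proved for $p'$ transfers back to $p$.

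Next, by Theorem \ref{mainTheorem} I can write $p' = t_1 q_1 + t_2 q_2 + t_3 q_3$ with $q_i \in D_i^{\vec{r}}$ and $t_i \geq 0$, $\sum t_i = 1$. The proof of Theorem \ref{mainTheorem} produces, for each $i$, a $C^*$-algebra $\mathfrak{A}_i$, a tracial state $\tau_i$ on $\mathfrak{A}_i$, and projections $P_1^{(i)}, P_2^{(i)}, P_3^{(i)} \in \mathfrak{A}_i$ satisfying $\tau_i(P_x^{(i)}) = r_x$ and $(\tau_i(P_x^{(i)} P_y^{(i)}))_{x<y} = q_i$. Taking $\mathfrak{A} = \mathfrak{A}_1 \oplus \mathfrak{A}_2 \oplus \mathfrak{A}_3$, the trace $\tau = t_1 \tau_1 + t_2 \tau_2 + t_3 \tau_3$, and projections $E_{x,0} = P_x^{(1)} \oplus P_x^{(2)} \oplus P_x^{(3)}$ and $E_{x,1} = I - E_{x,0}$, the block-diagonal structure guarantees $\tau(E_{x,0} E_{y,0}) = \sum_i t_i \tau_i(P_x^{(i)} P_y^{(i)})$, so $\{E_{x,i}\}$ realizes $p'$ in $\mathfrak{A}$.

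Finally I would tabulate the Hilbert-space dimensions of the three building blocks from the proof of Theorem \ref{mainTheorem}. The worst case for $D_1^{\vec{r}}$ (when $r_1+r_2+r_3 > 1$) uses $\mathbb{M}_2 \oplus \mathbb{C} \oplus \mathbb{C} \oplus \mathbb{C} \hookrightarrow \mathbb{M}_5$; the set $D_2^{\vec{r}}$ is realized in $\mathbb{M}_2 \oplus \mathbb{M}_2 \oplus \mathbb{C} \oplus \mathbb{C} \hookrightarrow \mathbb{M}_6$; and the worst case for $D_3^{\vec{r}}$ (when $r_1+r_2 > r_3$) uses $\mathbb{M}_2 \oplus \mathbb{C} \oplus \mathbb{C} \oplus \mathbb{C} \hookrightarrow \mathbb{M}_5$. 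Summing, $\mathfrak{A}$ embeds in $\mathbb{M}_{5+6+5} = \mathbb{M}_{16}$, giving $d \leq 16$. There is no real obstacle here; the entire argument is bookkeeping over the three explicit $C^*$-algebraic models already constructed in Theorem \ref{mainTheorem}, and the mildly delicate point is just checking that the direct-sum construction preserves both the marginals $\tau(E_{x,0}) = r_x$ and the synchronicity condition, which are immediate from the block-diagonal form.
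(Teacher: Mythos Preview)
Your proposal is correct and follows essentially the same approach as the paper: both arguments read off the dimensions $d_1 \leq 5$, $d_2 \leq 6$, $d_3 \leq 5$ from the explicit representations in the proof of Theorem \ref{mainTheorem} and form the direct sum $\mathfrak{A}_1 \oplus \mathfrak{A}_2 \oplus \mathfrak{A}_3 \hookrightarrow \mathbb{M}_{16}$ with a convex-combination trace. Your write-up is in fact more detailed than the paper's, explicitly handling the reduction to a standard slice and verifying that the direct-sum construction preserves marginals and synchronicity, points the paper leaves implicit.
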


\begin{proof} By examining the representations in the proof of Theorem \ref{mainTheorem}, we see that correlations in $D_i^{\vec{r}}$ can be produced using $C^*$-subalgebras $\frak{A}_i$ of $\mathbb{M}_{d_i}$ where $d_1 \leq 5$, $d_2 \leq 6$, and $d_3 \leq 5$. Convex combinations of these correlations can be built using traces on the direct sum $\frak{A}_1 \oplus \frak{A}_2 \oplus \frak{A}_3$, which is a subalgebra of $\mathbb{M}_d$ with $d \leq 16$. \end{proof}

\section{Concluding remarks}

In this final section, we discuss the question of whether or not synchronous quantum correlations coincide with the synchronous quantum commuting correlations in the three-experiment two-outcome setting. We first recall the definition of the quantum commuting correlations and a theorem describing how these correlations arise.

A correlation tensor $\{p(i,j|x,y)\}$ is called a \textbf{quantum commuting} correlation if there exists a Hilbert space $H$, projection valued measures $\{E_{x,i}\}_{i=1}^m$ and $\{F_{y,j}\}_{i=1}^m$ on $H$ satisfying $E_{x,i} F_{y,j} = F_{y,j} E_{x,i}$, and a state $\phi \in H$ such that \[ p(i,j|x,y) = \langle \phi, E_{x,i} F_{y,j} \phi \rangle. \] The set of all quantum commuting correlations is denoted by $C_{qc}(n,m)$. Quantum commuting correlations satisfying the synchronous condition $p(i,j|x,x)=0$ whenever $i \neq j$ also satisfy the following theorem, which generalizes Theorem \ref{PaulsenWinter}.

\begin{theorem}[Theorem 5.5 \cite{MR3460238}] \label{PaulsenWinterII} Let $p \in C_{qc}^s(n,m)$. Then there exists a $C^*$-algebra $\frak{A}$ and projection valued measures $\{E_{x,i}\}_{i=1}^m \subset \frak{A}$ and a tracial state $\tau$ on $\frak{A}$ such that \[ p(i,j|x,y) = \tau(E_{x,i} E_{y,j}). \] \end{theorem}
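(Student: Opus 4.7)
The plan is to start with a quantum commuting representation $(H,\{E_{x,i}\},\{F_{y,j}\},\phi)$ of $p$ and prove that the vector state $\tau(\cdot) := \langle \phi, \cdot\, \phi\rangle$ restricts to a tracial state on the unital $C^*$-algebra $\frak{A} := C^*(\{E_{x,i}\} \cup \{I\}) \subseteq B(H)$, and that $p(i,j|x,y) = \tau(E_{x,i}E_{y,j})$. The entire argument hinges on extracting from the synchronous condition a rigidity identity that glues Alice's and Bob's operators together on the vector $\phi$; once that is in hand, traciality is a mechanical bookkeeping of commutations and adjoints.

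First I would prove the rigidity identity $E_{x,i}\phi = F_{x,i}\phi$ for every $x,i$. Synchrony plus the marginal identities force $p(i,i|x,x) = p_a(i|x) = p_b(i|x)$, which in operator language reads
\[
\|E_{x,i}\phi\|^2 \;=\; \|F_{x,i}\phi\|^2 \;=\; \langle E_{x,i}\phi,\, F_{x,i}\phi\rangle.
\]
This is precisely the equality case of the Cauchy–Schwarz inequality applied to the two vectors $E_{x,i}\phi$ and $F_{x,i}\phi$, so they must be parallel, and matching norms together with the real-valued inner product pin the scalar of proportionality to $1$.

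Next I would prove traciality of $\tau|_{\frak{A}}$. For each word $w = E_{x_1,i_1}\cdots E_{x_n,i_n}$ in the generators, set $\widetilde{w} := F_{x_n,i_n}\cdots F_{x_1,i_1}$; a single-pass induction shows $w\phi = \widetilde{w}\phi$ by peeling the rightmost $E$ off using the rigidity identity, sliding the resulting $F$ leftward past the remaining $E$'s by the quantum commuting commutation relations, and iterating. For words $a, b$ in the generators one then computes
\[
\tau(ab) \;=\; \langle \phi, a\widetilde{b}\phi\rangle \;=\; \langle \phi, \widetilde{b}\,a\phi\rangle \;=\; \langle \widetilde{b}^{\,*}\phi, a\phi\rangle \;=\; \langle b^*\phi, a\phi\rangle \;=\; \tau(ba),
\]
where the second equality uses commutativity of $\widetilde{b} \in C^*(\{F_{y,j}\})$ with $a \in \frak{A}$, and the fourth uses $\widetilde{b}^{\,*}\phi = \widetilde{b^*}\phi = b^*\phi$, another instance of rigidity (valid since each generator is self-adjoint, so adjoints of words are again words). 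Extending by linearity and norm-continuity of $a,b \mapsto \tau(ab) - \tau(ba)$ yields traciality on all of $\frak{A}$. Finally, a last application of rigidity gives the desired identification $p(i,j|x,y) = \langle \phi, E_{x,i}F_{y,j}\phi\rangle = \langle \phi, E_{x,i}E_{y,j}\phi\rangle = \tau(E_{x,i}E_{y,j})$.

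The main obstacle is the rigidity identity; once it is established, the remainder is formal. A secondary subtlety worth flagging is that the word-level map $w \mapsto \widetilde{w}$ need not descend to a well-defined anti-homomorphism $\frak{A} \to C^*(\{F_{y,j}\})$, since the two algebras may satisfy incomparable relations. However, the traciality calculation only requires the pointwise identity $w\phi = \widetilde{w}\phi$ for individual words, which is always valid, and then the scalar functional $\tau(ab) - \tau(ba)$ extends from polynomials to the $C^*$-completion by continuity without any need for a global algebra homomorphism.
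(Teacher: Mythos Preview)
The paper does not prove this theorem; it is quoted from \cite{MR3460238} (Theorem~5.5 of Paulsen et~al.) and used as a black box. Your proposal is correct and is essentially the original argument from that reference: deduce the rigidity identity $E_{x,i}\phi = F_{x,i}\phi$ from synchrony, propagate it to words via the commuting condition to obtain $w\phi = \widetilde{w}\phi$, and read off traciality of the vector state on the $E$-algebra. One cosmetic remark: rather than invoking the equality case of Cauchy--Schwarz, it is marginally cleaner to note that $E_{x,i}F_{x,i}$ is a projection (as a product of commuting projections), so $\langle E_{x,i}\phi, F_{x,i}\phi\rangle$ is automatically real and nonnegative, whence $\|E_{x,i}\phi - F_{x,i}\phi\|^2 = 0$ directly; this sidesteps the separate treatment of the zero-vector case.
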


We define $C_{qa}(n,m)$ to be the closure of $C_q(n,m)$. By Theorem \ref{mainTheorem}, we know that $C_q^s(3,2)$ is closed. Moreover, it was shown by Kim, Paulsen and Schafhauser in \cite{MR3776034} that $C_{qa}^s(n,m)$ is equal to the closure of $C_q^s(n,m)$. This, together with Theorem \ref{mainTheorem}, implies the following obvious corollary.

\begin{corollary} The sets $C_q^s(3,2)$ and $C_{qa}^s(3,2)$ coincide. \end{corollary}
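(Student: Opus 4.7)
The statement is an immediate corollary, so my proof plan is essentially just a two-line deduction from results already established (or cited) in the paper.

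The plan is to combine two facts. First, the Kim--Paulsen--Schafhauser result cited in the paragraph preceding the corollary tells us that $C_{qa}^s(n,m)$ coincides with the topological closure $\overline{C_q^s(n,m)}$ for every $n,m$; in particular $C_{qa}^s(3,2)=\overline{C_q^s(3,2)}$. Second, Theorem \ref{mainTheorem} showed that every standard slice $\hat{S}_{\vec{r}}[C_q^s(3,2)]$ equals the convex hull of three explicit, closed, bounded sets $D_1^{\vec{r}},D_2^{\vec{r}},D_3^{\vec{r}}$, and these sets vary continuously with $\vec{r}$, which lets us conclude that $C_q^s(3,2)$ itself is topologically closed (this closedness is stated as part of Theorem \ref{mainTheorem}).

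Putting these two facts together, I would simply write: by Theorem \ref{mainTheorem}, $C_q^s(3,2)=\overline{C_q^s(3,2)}$, and by the Kim--Paulsen--Schafhauser theorem $\overline{C_q^s(3,2)}=C_{qa}^s(3,2)$, so $C_q^s(3,2)=C_{qa}^s(3,2)$.

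There is no real obstacle here since all the work has already been done in Theorem \ref{mainTheorem}; the only mild subtlety worth flagging in the write-up is that the ``closure'' appearing in the definition $C_{qa}(n,m):=\overline{C_q(n,m)}$ is a priori the closure of the full (non-synchronous) quantum correlation set, so one genuinely needs the Kim--Paulsen--Schafhauser identification $C_{qa}^s(n,m)=\overline{C_q^s(n,m)}$ to pass from closure-of-quantum to closure-of-synchronous-quantum. Once that input is cited, the corollary is a one-step consequence of the main theorem.
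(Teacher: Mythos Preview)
Your proposal is correct and matches the paper's own argument essentially verbatim: the paper states the corollary as an ``obvious'' consequence of Theorem~\ref{mainTheorem} (closedness of $C_q^s(3,2)$) together with the Kim--Paulsen--Schafhauser identification $C_{qa}^s(n,m)=\overline{C_q^s(n,m)}$. Your remark about the subtlety requiring KPS to pass from $\overline{C_q(3,2)}\cap\{\text{synchronous}\}$ to $\overline{C_q^s(3,2)}$ is exactly the point the paper is invoking that citation for.
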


Another theorem of Kim-Paulsen-Schafhauser (Theorem 3.6 of \cite{MR3776034}) makes a fairly explicit connection between the set $C_{qa}^s(n,m)$ and Connes' conjecture. Roughly, it says that for any $p \in C_{qa}^s(n,m)$, Theorem \ref{PaulsenWinterII} holds with the $C^*$-algebra $\frak{A}$ being $\mathcal{R}^\omega$, a tracial ultrapower of the hyperfinite $II_1$ factor $\mathcal{R}$. It follows that Theorem \ref{mainTheorem} characterizes the set of correlations in $C_{qc}^s(3,2)$ which arise from $C^*$-algebras which embed in a trace-preserving way into $\mathcal{R}^\omega$.



It was shown by Ozawa (see Theorem 36 of \cite{MR3067294}) that the statement $C_{qa}(n,m) = C_{qc}(n,m)$ for all $n$ and $m$ is equivalent to Connes' embedding problem. Therefore, if Connes' embedding problem has an affirmative answer, we would have $C_{qc}^s(3,2) = C_q^s(3,2)$, by Theorem \ref{mainTheorem}. To verify that this holds, one would potentially need to compute $C_{qc}^s(3,2)$ explicitly. Indeed, it is known that $p \in C_{qc}^s(n,2)$ if and only if there exists a tracial state $\tau$ on the full group $C^*$-algebra $C^*(*_{i=1}^n \mathbb{Z}_2)$ such that $p(i,j|x,y) = \tau(E_{x,i} E_{y,j})$ where the projection valued measures $E_{x,i}$ are the canonical projections in $C^*(\mathbb{Z}_2) = \mathbb{C} \oplus \mathbb{C}$ on the $x$-th summand of the free product. When $n=2$, we are talking about $C^*(\mathbb{Z}_2 * \mathbb{Z}_2)$. This $C^*$-algeabra is very well understood (see, for example, Remark 3.6 of \cite{FritzKirchberg}) in part owing to the fact that the group $\mathbb{Z}_2 * \mathbb{Z}_2$ is amenable. In fact, this $C^*$-algebra is known to be isomorphic to a $C^*$-subalgebra of $C([0,1],\mathbb{M}_2)$ (continuous functions from $[0,1]$ to $\mathbb{M}_2$), from which one can easily conclude that $C_q(2,2) = C_{qc}(2,2)$ (for example, see exercise VI.6 of \cite{MR1402012}). In the case $n=3$, we are working with $C^*(\mathbb{Z}_2 * \mathbb{Z}_2 * \mathbb{Z}_2)$. This $C^*$-algebra is far less understood, since the group $\mathbb{Z}_2 * \mathbb{Z}_2 * \mathbb{Z}_2$ is not amenable. Therefore it is not so clear how one would decide whether or not $C_q^s(3,2) = C_{qc}^s(3,2)$ using Theorem \ref{PaulsenWinterII} with $\frak{A} = C^*(\mathbb{Z}_2 * \mathbb{Z}_2 * \mathbb{Z}_2)$.

Another correlation set we have not yet mentioned is the set of \textbf{vectorial} correlations, denoted $C_{vect}(n,m)$. Without providing an explicit description (for example, see Definition 2.6 of \cite{DPP1}), we note that \[ C_q(n,m) \subseteq C_{qc}(n,m) \subseteq C_{vect}(n,m). \]  Thus, another potential method of proving $C_q^s(3,2) = C_{qc}^s(3,2)$ would be to show that $C_q^s(3,2) = C_{vect}^s(3,2)$. Unfortunately, this is false.

\begin{corollary}[Corollary 3.2 of \cite{DeltaGame}] The sets $C_q^s(3,2)$ and $C_{vect}^s(3,2)$ do note coincide. \end{corollary}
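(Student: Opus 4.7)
The plan is to exhibit a specific synchronous vectorial correlation that lies outside the region described by Theorem \ref{mainTheorem}. The natural vehicle is the synchronous $\Delta$-game of Dykema-Paulsen-Prakash \cite{DeltaGame}, whose optimal winning probability is a linear functional $\ell$ on the finite-dimensional correlation space. I would aim to show that $\sup \{\ell(p) : p \in C_q^s(3,2)\}$ is strictly less than the vectorial bias, using our explicit characterization to compute the synchronous quantum value in closed form.

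First I would reduce to optimizing $\ell$ over standard slices: every slice of $C_q^s(3,2)$ is an affine image of a standard slice via swaps of experiments and reversals of outcomes, and the group generated by these maps permutes the functionals that arise from $\Delta$-type games, so $\sup \ell$ over $C_q^s(3,2)$ is attained on some orbit representative whose parameter $\vec{r}=(r_1,r_2,r_3)$ satisfies $0 \leq r_1 \leq r_2 \leq r_3 \leq 1/2$. By Theorem \ref{mainTheorem}, the slice $\hat{S}_{\vec{r}}[C_q^s(3,2)]$ equals $\co\{D_1^{\vec{r}}, D_2^{\vec{r}}, D_3^{\vec{r}}\}$. Linearity of $\ell$ then implies that the maximum over the slice is attained at an extreme point of one of the three convex sets $D_i^{\vec{r}}$.

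Next I would exploit the fact that each $D_i^{\vec{r}}$ is an affine shift of a scaled copy of $S_2(1)$, which by Proposition \ref{PropHalfSlice} is the affine image of the $3 \times 3$ elliptope cut out by Sylvester's inequality $1 + 2x'y'z' - x'^2 - y'^2 - z'^2 \geq 0$ with $|x'|,|y'|,|z'| \leq 1$. The inner optimization therefore becomes maximization of an affine functional over the elliptope, which admits a closed-form solution via Lagrange multipliers (the extremal vectors satisfy a low-rank Gram relation). The outer optimization is then over $\vec{r}$ in the compact region $\{0 \leq r_1 \leq r_2 \leq r_3 \leq 1/2\}$, piecewise defined according to the case splits built into $D_1^{\vec{r}}$ and $D_3^{\vec{r}}$. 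Once the resulting scalar maximum $M_q$ is computed, comparing with the explicit vectorial strategy constructed in the proof of Corollary 3.2 of \cite{DeltaGame} yields the strict inequality $M_q < M_{vect}$, which produces a vectorial correlation outside $C_q^s(3,2)$.

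The main obstacle is the bookkeeping in the outer optimization: the piecewise boundaries $r_1 + r_2 + r_3 = 1$ and $r_1 + r_2 = r_3$ divide the parameter region into several cells in each of which $D_1^{\vec{r}}$ or $D_3^{\vec{r}}$ collapses to a point or translates differently, so one must verify the strict gap uniformly over all cells. In practice, since \cite{DeltaGame} already certifies the existence of the gap by producing the vectorial witness directly, the cleanest route is to cite their Corollary 3.2, observing that their witness, once matched against the explicit facet structure supplied by Theorem \ref{mainTheorem}, confirms non-membership in $C_q^s(3,2)$.
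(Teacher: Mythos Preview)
The paper does not actually prove this statement; it is quoted verbatim as Corollary~3.2 of \cite{DeltaGame} and is accompanied only by a one-paragraph summary of how Dykema--Paulsen--Prakash obtain it: they introduce the $\Delta$ game, compute its optimal value directly over $C_q^s(3,2)$, $C_{qc}^s(3,2)$, and $C_{vect}^s(3,2)$, and observe that the last of these is strictly larger. None of the machinery of Theorem~\ref{mainTheorem} is invoked.

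Your proposal is therefore more than what the paper does. The programme you outline---optimize the $\Delta$-game functional slice by slice using the decomposition $\hat{S}_{\vec r}[C_q^s(3,2)]=\co\{D_1^{\vec r},D_2^{\vec r},D_3^{\vec r}\}$, reduce the inner problem to a Lagrange-multiplier computation on the elliptope, and then optimize over $\vec r$---would in principle give an independent derivation of the synchronous quantum value, but it is not needed: \cite{DeltaGame} already computes that value by other means (working over a different affine slice, as the paper notes), so the separation from $C_{vect}^s(3,2)$ is established there without any input from the present paper. Your last sentence, that the cleanest route is simply to cite Corollary~3.2, is exactly what the paper does, and even the step of ``matching the vectorial witness against the facet structure of Theorem~\ref{mainTheorem}'' is superfluous.
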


\noindent Dykema-Paulsen-Prakash prove this theorem using the theory non-local games. They develope a synchronous generalization of the $I_{3322}$ game called the $\Delta$ game. They explicitly compute the values associated to the $\Delta$ game over an affine slice of $C_r^s(3,2)$ (though not one of the slices we have considered) for $r \in \{q, qc, vect\}$. They show that these values coincide on $C_q^s(3,2)$ and $C_{qc}^s(3,2)$ but differ from the values attained on $C_{vect}^s(3,2)$.


From these observations, it seems that the following remains open.

\begin{question} Does $C_q^s(3,2) = C_{qc}^s(3,2)$? \end{question}

\noindent An affirmative answer would perhaps be interesting in light of the discussion above concerning amenability. A negative answer would solve Connes' embedding problem.

\subsection*{Acknowledgements} We thank Elie Alhajjar for many helpful conversations and suggestions in the early stages of this work. In particular, we should acknowledge that an early version of Proposition \ref{PropEasyCase} was first proven jointly by Elie Alhajjar and the author, and that this early exploration inspired much of the work that followed. We also thank Vern Paulsen and William Slofstra for pointing out the connection to the $I_{3322}$ game and the reference \cite{DeltaGame}.

\bibliographystyle{plain}
\bibliography{references_v2}

\begin{thebibliography}{10}

\bibitem{MaxEntangle}
E.~Alhajjar and T.~Russell.
\newblock Maximally entangled correlation sets.
\newblock {\em To appear, Houston J. Math.}, 2019.

\bibitem{BellTheoremExperiment}
Alain Aspect, Philippe Grangier, and G\'erard Roger.
\newblock Experimental tests of realistic local theories via bell's theorem.
\newblock {\em Phys. Rev. Lett.}, 47:460--463, Aug 1981.

\bibitem{bell_epr}
J.~S. Bell.
\newblock On the einstein podolsky rosen paradox.
\newblock {\em Physics Physique Fizika}, 1:195--200, Nov 1964.

\bibitem{bb84}
C.~Bennett and G.~Brassard.
\newblock Quantum cryptography: Public key distribution and coin tossing.
\newblock volume 560, pages 175--179, 01 1984.

\bibitem{CHSH}
John~F. Clauser, Michael~A. Horne, Abner Shimony, and Richard~A. Holt.
\newblock Proposed experiment to test local hidden-variable theories.
\newblock {\em Phys. Rev. Lett.}, 23:880--884, Oct 1969.

\bibitem{Collins_2004}
Daniel Collins and Nicolas Gisin.
\newblock A relevant two qubit bell inequality inequivalent to the {CHSH}
  inequality.
\newblock {\em Journal of Physics A: Mathematical and General},
  37(5):1775--1787, jan 2004.

\bibitem{MR1402012}
K.~R. Davidson.
\newblock {\em {$C^*$}-algebras by example}, volume~6 of {\em Fields Institute
  Monographs}.
\newblock American Mathematical Society, Providence, RI, 1996.

\bibitem{DPP1}
K.~Dykema, V.~I. Paulsen, and J.~Prakash.
\newblock Non-closure of the set of quantum correlations via graphs.
\newblock {\em Comm. Math. Phys.}, 365(3):1125--1142, 2019.

\bibitem{DeltaGame}
K.~Dykema, V.~I. Paulsen, and Jitendra Prakash.
\newblock The delta game.
\newblock {\em Quantum Information {\&} Computation}, 18(7{\&}8):599--616,
  2018.

\bibitem{MR3432742}
K.~J. Dykema and V.~I. Paulsen.
\newblock Synchronous correlation matrices and {C}onnes' embedding conjecture.
\newblock {\em J. Math. Phys.}, 57(1):015214, 12, 2016.

\bibitem{EPR}
A.~Einstein, B.~Podolsky, and N.~Rosen.
\newblock Can quantum-mechanical description of physical reality be considered
  complete?
\newblock {\em Phys. Rev.}, 47:777--780, May 1935.

\bibitem{FritzKirchberg}
T.~Fritz.
\newblock Tsirelson's problem and kirchberg's conjecture.
\newblock {\em Reviews in Mathematical Physics}, 24(05):1250012, 2012.

\bibitem{Froissart1981}
M.~Froissart.
\newblock Constructive generalization of bell's inequalities.
\newblock {\em Il Nuovo Cimento B (1971-1996)}, 64(2):241--251, Aug 1981.

\bibitem{PhysRevA.97.022104}
K.~Goh, J.~Kaniewski, E.~Wolfe, T.~V\'ertesi, X.~Wu, Y.~Cai, Y.~Liang, and
  V.~Scarani.
\newblock Geometry of the set of quantum correlations.
\newblock {\em Phys. Rev. A}, 97:022104, Feb 2018.

\bibitem{JMPPSW2011}
M.~Junge, M.~Navascues, C.~Palazuelos, D.~Perez-Garcia, V.~Scholz, and
  R.~Werner.
\newblock Connes' embedding problem and tsirelson's problem.
\newblock {\em J. Math. Phys.}, 52(1):012102, 2011.

\bibitem{MR3776034}
S.~Kim, V.~I. Paulsen, and C.~Schafhauser.
\newblock A synchronous game for binary constraint systems.
\newblock {\em J. Math. Phys.}, 59(3):032201, 17, 2018.

\bibitem{Lackey}
B.~Lackey and N.~Rodrigues.
\newblock Nonlocal games, synchronous correlations, and bell inequalities.
\newblock {\em arxiv}, abs/1707.06200, 2017.

\bibitem{PerfectStrategies}
M.~Lupini, L.~Mancinska, V.I. Paulsen, D.E. Roberson, G.~Scarpa, S.~Severini,
  I.G. Todorov, and A.~Winter.
\newblock Perfect strategies for non-signalling games.
\newblock {\em arxiv}, abs/1804.06151, 2018.

\bibitem{MR3067294}
N.~Ozawa.
\newblock About the {C}onnes embedding conjecture: algebraic approaches.
\newblock {\em Jpn. J. Math.}, 8(1):147--183, 2013.

\bibitem{osti_21448443}
Karoly~F. Pal and Tamas Vertesi.
\newblock Maximal violation of a bipartite three-setting, two-outcome bell
  inequality using infinite-dimensional quantum systems.
\newblock {\em Physical Review. A}, 82(2), 8 2010.

\bibitem{MR3460238}
V.~I. Paulsen, S.~Severini, D.~Stahlke, I.~G. Todorov, and A.~Winter.
\newblock Estimating quantum chromatic numbers.
\newblock {\em J. Funct. Anal.}, 270(6):2188--2222, 2016.

\bibitem{Slofstra1}
W.~Slofstra.
\newblock The set of quantum correlations is not closed.
\newblock {\em Forum Math. Pi}, 7:e1, 41, 2019.

\bibitem{ThinhStructure}
L.~Thinh, A.~Varvitsiotis, and Y.~Cai.
\newblock Structure of the set of quantum correlators via semidefinite
  programming.
\newblock {\em arxiv}, abs/1809.10886, 2018.

\bibitem{Tsirel'son1987}
B.~S. Tsirel'son.
\newblock Quantum generalizations of bell's inequality.
\newblock {\em Letters in Mathematical Physics}, 4(2):93--100, Mar 1980.

\bibitem{VidickI3322}
T.~Vidick and S.~Wehner.
\newblock More nonlocality with less entanglement.
\newblock {\em Phys. Rev. A}, 83:052310, May 2011.

\end{thebibliography}

\end{document}